\newcommand{\sym}[1]{{\sf #1}}
\newcommand{\ceil}[1]{\mbox{$\lceil #1\rceil$}}
\newcommand{\floor}[1]{\mbox{$\lfloor #1\rfloor$}}
\newtheorem{lemma}{\bf Lemma}
\newtheorem{theorem}{\bf Theorem}
\newtheorem{proposition}{\bf Proposition}
\newtheorem{remark}{\bf Remark}
\newtheorem{construction}{\bf Construction}
\title{\bf{Vectorised Hashing Based on Bernstein-Rabin-Winograd Polynomials over Prime Order Fields}}
\author[1]{Kaushik Nath\thanks{Corresponding author.}}
\author[1]{Palash Sarkar}
\affil[1]{Indian Statistical Institute, 203, B.T. Road, Kolkata, India 700108}
\affil[ ]{Emails: {\tt kaushik.nath@yahoo.in, palash@isical.ac.in}}
\begin{document}
	\maketitle

\begin{abstract}
	We introduce the new AXU hash function $c\mbox{-}\sym{decBRWHash}$, which is parameterised by the positive integer $c$ and
	is based on Bernstein-Rabin-Winograd (BRW) polynomials. 
	Choosing $c>1$ gives a hash function which can be implemented using $c$-way single instruction multiple data (SIMD) instructions. 
	We report a set of very comprehensive hand optimised assembly implementations of $4\mbox{-}\sym{decBRWHash}$ using {\tt avx2} SIMD instructions available on modern 
	Intel processors. For comparison, we also report similar carefully optimised {\tt avx2} assembly implementations of $\sym{polyHash}$, an AXU hash function based on 
	usual polynomials. Our implementations are over prime order fields, specifically the primes $2^{127}-1$ and $2^{130}-5$. 
	For the prime $2^{130}-5$, for {\tt avx2} implementations, compared to the famous Poly1305 hash function, $4\mbox{-}\sym{decBRWHash}$ 
	is faster for messages which are a few hundred bytes long and achieves a speed-up of about 16\% for message lengths in a few kilobytes range and improves to a speed-up of 
	about 23\% for message lengths in a few megabytes range. 
\end{abstract}
\begin{flushleft}
{\bf Keywords: almost XOR universal, BRW polynomials, SIMD, assembly implementation, avx2.}
\end{flushleft}

\section{Introduction\label{sec-intro}}

Authentication and authenticated encryption are two of the major functionalities of modern symmetric key cryptography. Almost XOR universal (AXU) hash functions
play an important role in both of these tasks. One of the most famous AXU hash functions is Poly1305~\cite{DBLP:conf/fse/Bernstein05}, and in combination
with XChaCha20~\cite{chacha} provides one of the most used authenticated encryption algorithm. Poly1305 is based on usual polynomials with arithmetic
done modulo the prime $2^{130}-5$. 

Present generation processors provide support for single input multiple data (SIMD) instructions. These instructions permit implementation of vectorised
algorithms. In a vectorised algorithm, at every step a single instruction is applied to a number of data items. Such vectorised algorithms have the potential
to provide significant efficiency improvements over conventional sequential algorithms. However, to apply SIMD instructions it is required to rewrite the
basic algorithm in vectorised form. For Poly1305, SIMD implementation was earlier reported in~\cite{Goll-Gueron}.

A class of polynomials was introduced in~\cite{Be07} to construct AXU hash functions, and later these polynomials were named the BRW polynomials~\cite{Sa09}.
An important theoretical advantage of hash functions based on BRW polynomials is that the number of field multiplications required by such hash functions
is about half the number of field multiplications required by hash functions based on usual polynomials~\cite{Be07} (see~\cite{DBLP:journals/dcc/GhoshS19,BNS2025} for 
further complexity improvements). This feature makes BRW polynomials an attractive option for constructing AXU hash function. An extensive study of both
BRW polynomials based hash functions (named $\sym{BRWHash}$) and usual polynomial based hash functions (named $\sym{polyHash}$)
for the primes $2^{127}-1$ and $2^{130}-5$ was carried out 
in~\cite{BNS2025} (see~\cite{cryptoeprint:2025/1224} for an update on~\cite{BNS2025}). The implementations reported in~\cite{BNS2025,cryptoeprint:2025/1224} were sequential, i.e. not SIMD, and
for such implementations it was observed that BRW based AXU hash functions indeed provides significant speed improvements (though less than what is theoretically
predicted) for both of the primes that were considered. The natural question that arises is whether a similar speed improvement can be achieved for 
SIMD implementation.

Efficient algorithms for computing the value of a BRW polynomial at a particular  point were reported in~\cite{DBLP:journals/dcc/GhoshS19,BNS2025}. Unfortunately,
there is no good way to rewrite these algorithms in vectorised form. There is a certain amount of parallelism present in the computation of 
BRW polynomials, which has been exploited for hardware implementation~\cite{DBLP:journals/tc/ChakrabortyMRS13}. This parallelism, however, does not permit
vector computation. 

The main theoretical contribution of the present work is to present a new AXU hash function based on BRW polynomials which permits an efficient vector implementation. 
We define the hash function $c\mbox{-}\sym{decBRWHash}$ which is parameterised by the positive integer $c$. Suppose $c=4$, which is the case that we implement. The basic idea is to
decimate the input stream into four parallel streams of the same length and perform independent BRW polynomial computation on each of the streams. Finally the outputs of the four
streams are combined using usual polynomials. Since the four BRW computations are independent and are on the sequences of the same length, it is possible to
apply the algorithm from~\cite{BNS2025} in a vectorised manner to perform simultaneous computations of the four BRW polynomials. As a result, the entire algorithm
becomes amenable to SIMD implementation. The hash function $c\mbox{-}\sym{decBRWHash}$ is a generalisation of $\sym{BRWHash}$ in the sense that for $c=1$, 
$c\mbox{-}\sym{decBRWHash}$ becomes exactly $\sym{BRWHash}$. While the idea behind the construction of $c\mbox{-}\sym{decBRWHash}$ is simple, there is a subtlety
in the choice of key used for combining the outputs of the four streams. We prove that 
$c\mbox{-}\sym{decBRWHash}$ is indeed an AXU hash function, whose AXU bound is almost the same as the AXU bound of $\sym{BRWHash}$ for small $c$. Further,
the sequential execution of $c\mbox{-}\sym{decBRWHash}$ is not much slower than the sequential execution of $\sym{BRWHash}$ for messages which are longer than 
a few blocks. So $c\mbox{-}\sym{decBRWHash}$ provides a generalisation of $\sym{BRWHash}$ which essentially retains the security and sequential efficiency of $\sym{BRWHash}$, 
while providing the opportunity for vectorised implementation.

From a practical point of view, we report implementations of $4\mbox{-}\sym{decBRWHash}$ for both the primes $2^{127}-1$ and $2^{130}-5$ using the {\tt avx2}
SIMD instructions available on modern Intel processors. For completeness and for the sake of comparison, we also report new implementations of $\sym{polyHash}$ using
{\tt avx2} instructions for both $2^{127}-1$ and $2^{130}-5$. Our implementations are comprehensive in the sense that we consider all feasible values of the
various implementation parameters for both $4\mbox{-}\sym{decBRWHash}$ and $\sym{polyHash}$. The implementations that we report are in assembly language and were
meticulously hand optimised. Our hand optimised {\tt avx2} implementation in assembly language of $\sym{polyHash}1305$ (i.e. $\sym{polyHash}$ based on the prime
$2^{130}-5$) is of independent interest, since if the message length is a multiple of eight and an appropriate key clamping is used, then $\sym{polyHash}1305$ is exactly 
the well known hash function Poly1305. 
To the best of our knowledge, there is no previous hand optimised {\tt avx2} assembly language implementation of Poly1305 which systematically considers all
feasible values of the implementation parameters.

We obtained extensive timing results for all our implementations. These results show that for {\tt avx2} implementations, the prime $2^{127}-1$ is a slower option
than the prime $2^{130}-5$ for both the hash functions $4\mbox{-}\sym{decBRWHash}$ and $\sym{polyHash}$. We provide a detailed explanation for this observation. 
In view of this observation, in this work we present only the timing results for the prime $2^{130}-5$. For the case of $\sym{polyHash}1305$, the timing results
show that for files which are longer than a few hundred bytes, the {\tt avx2} implementation is faster than the previously reported sequential implementation~\cite{BNS2025}.
Of more interest in the present context is the comparison between the hash functions $4\mbox{-}\sym{decBRWHash}$ and $\sym{polyHash}$. The timings results for the
prime $2^{130}-5$ show that for 
{\tt avx2} implementations, $4\mbox{-}\sym{decBRWHash}$ is faster than $\sym{polyHash}$ for messages which are a few hundred bytes long, and achieves a speed-up of about 
16\% (for kilobyte size messages) to 23\% (for megabyte size messages). Since Poly1305 and $\sym{polyHash}$ over the prime $2^{130}-5$ have the same speed, the
previous statement for $2^{130}-5$ also applies to the speed-up of $4\mbox{-}\sym{decBRWHash}$ over Poly1305.
In a typical file system, text files are usually about a few kilobytes long, while media files such as high resolution pictures, audio and video files, 
are a few megabytes long (see~\cite{DBLP:conf/asist/DinneenN21}). 
Compared to Poly1305, the new hash function $4\mbox{-}\sym{decBRWHash}1305$ provides a faster option for authentication, or authenticated encryption of such files. 

\paragraph*{Other previous and related works.} AXU hash functions are a generalisation of the notion of universal hash functions~\cite{DBLP:journals/jcss/CarterW79}. 
Research over the last
few decades have resulted in a sizeable literature on AXU hash functions. Overviews of the literature can be found 
in~\cite{Be04,DBLP:conf/fse/Bernstein05,Be07,DBLP:journals/dcc/Sarkar11,DBLP:journals/dcc/Sarkar13,DBLP:conf/sp/DegabrieleGGP24}. We mention only the works which are relevant
to the present paper. 

Polynomial hash functions were proposed independently in three 
papers~\cite{DBLP:journals/jcs/Boer93,DBLP:conf/crypto/Taylor93,DBLP:conf/crypto/BierbrauerJKS93}. 
The prime $2^{127}-1$ for use in polynomial hashing was first proposed in~\cite{DBLP:conf/crypto/Taylor93}, and was later used in~\cite{Be04,DBLP:conf/fse/KohnoVW04,BNS2025}. 
BRW polynomials were proposed by Bernstein~\cite{Be07} based on earlier work by Rabin and Winograd~\cite{RW72}. Implementations of BRW polynomials in both software
and hardware over binary extension fields were reported 
in~\cite{DBLP:journals/tc/ChakrabortyMRS13,DBLP:journals/tc/ChakrabortyMS15,DBLP:journals/tosc/ChakrabortyGS17,DBLP:journals/dcc/GhoshS19}. For prime order fields,
sequential software implementations of BRW polynomials were reported in~\cite{BNS2025,cryptoeprint:2025/1224}. 

\paragraph*{Overview of the paper.} Section~\ref{sec-prelim} provides the preliminaries. The new construction of decimated BRW hash function
is presented in Section~\ref{sec-SIMD-BRW-Horner}. The various aspects of implementation are given in Sections~\ref{sec-fld-arith} and~\ref{sec-vec-algo}. 
Descriptions of the implementations and the timing results are given in Section~\ref{sec-times}. Finally Section~\ref{sec-conclu} concludes the paper. 
For the ease of reference, in Appendix~\ref{sec-BRW-algo} we provide the algorithm from~\cite{BNS2025} for computing BRW polynomials. 

\section{Preliminaries\label{sec-prelim}}
The cardinality of a finite set $S$ will be denoted as $\#S$. Logarithms to the base two will be denoted by $\lg$. 
For a positive integer $i$, and $0\leq j<2^i$, by $\sym{bin}_i(j)$ we will denote the $i$-bit binary representation of $j$. For example, $\sym{bin}_4(3)=0011$ 
and $\sym{bin}_4(13)=1101$.

Let $\mathcal{D}$ be a non-empty set, $(\mathcal{R},+)$ be a finite group and $\mathcal{K}$ be a finite non-empty set. 
Let $\{\sym{Hash}_{\tau}\}_{\tau\in\mathcal{K}}$ be a family of functions, where for each $\tau\in\mathcal{K}$, $\sym{Hash}_{\tau}:\mathcal{D}\rightarrow \mathcal{R}$. 
The sets $\mathcal{D}$, $\mathcal{K}$ and $\mathcal{R}$ are called the message, key and tag (or digest) spaces respectively.
Let $a,a^{\prime}\in \mathcal{D}$ with $a\neq a^\prime$ and $b\in \mathcal{R}$.
The differential probability corresponding to the triple $(a,a^{\prime},b)$ is defined to be $\Pr_{\tau}[\sym{Hash}_{\tau}(a)-\sym{Hash}_{\tau}(a^{\prime})=b]$, where 
the probability is taken over a uniform random choice of $\tau$ from $\mathcal{K}$. 
If for every choice of distinct $a,a^{\prime}$ in $\mathcal{D}$ and $b\in \mathcal{R}$, the differential probability corresponding to $(a,a^{\prime},b)$ is at most 
$\epsilon$, then we say that the family $\{\sym{Hash}_{\tau}\}_{\tau\in\mathcal{K}}$ is $\epsilon$-almost XOR universal ($\epsilon$-AXU). 

Let $\mathbb{F}$ be a finite field.  Given a non-zero polynomial $P(x)\in \mathbb{F}[x]$, $\sym{deg}(P(x))$ denotes the degree of $P(x)$. 
Given $l\geq 0$ elements $M_1,\ldots,M_l$ in $\mathbb{F}$, we define two polynomials $\sym{Poly}(x;M_1,\ldots,M_l)$ and 
$\sym{BRW}(x;M_1, \allowbreak M_2, \ldots, \allowbreak M_{l})$ in $\mathbb{F}[x]$ with indeterminate $x$ and parameters $M_1,\ldots,M_l$ as follows.
\begin{eqnarray}
	\sym{Poly}(x;M_1,\ldots,M_l)
	& = & \left\{
		\begin{array}{ll} 
			0, & \mbox{if } l=0; \\
			M_1x^{l-1}+M_2x^{l-2}+\cdots+M_{l-1}x+M_l, & \mbox{if } l>0,
		\end{array} \right.
			\label{eqn-poly} 
\end{eqnarray}
and
\begin{tabbing}
\ \ \ \ \= $\bullet$ \= \ \ \ \ \=\ \ \ \ \= \kill
 \> $\bullet$ \> $\sym{BRW}(x;) = 0$; \\
 \> $\bullet$ \> $\sym{BRW}(x;M_1) = M_1$; \\
 \> $\bullet$ \> $\sym{BRW}(x;M_1,M_2) = M_1x+M_2$; \\
 \> $\bullet$ \> $\sym{BRW}(x;M_1,M_2,M_3) = (x + M_1)(x^2 + M_2) + M_3$; \\
 \> $\bullet$ \> $\sym{BRW}(x;M_1,M_2, \ldots, M_{i})$ \\
	\> \> \> \> $= \sym{BRW}(x;M_1,\ldots,M_{2^r-1})(x^{2^r} + M_{2^r}) + \sym{BRW}(x;M_{2^r+1},\ldots,M_{l})$; \\
	\> \> \> if  $2^r \in \{4,8,16,32,\ldots\}$ and $2^r\leq l < 2^{r+1}$, i.e. $2^r$ is the largest power of 2 such that $l \geq 2^r$.
\end{tabbing}
The BRW polynomials were introduced in~\cite{Be07} and named in~\cite{Sa09}. Note that for $l\geq 3$, $\sym{BRW}(x;M_1,\allowbreak M_2,\allowbreak \ldots,\allowbreak M_{l})$ 
is a monic polynomial. 

For $\tau\in \mathbb{F}$, using Horner's rule $\sym{Poly}(\tau;M_1,\ldots,M_l)$ can be evaluated using $l-1$ multiplications and same number of additions. 
For the BRW polynomials the following was proved in~\cite{Be07}.
\begin{theorem}\cite{Be07}\label{thm-BRW-basic} \ 
\begin{enumerate}
\item For every $l\geq 0$, the map from $\mathbb{F}^{l}$ to $\mathbb{F}[x]$ given by $$(M_1,\ldots,M_{l})\mapsto \sym{BRW}(x;M_1,\ldots,M_{l})$$ is injective. 
\item For $l\geq 1$, let $\mathfrak{d}(l)$ denote $\sym{deg}(\sym{BRW}(x;M_1,\ldots,M_l))$. For $l\geq 3$, 
$\mathfrak{d}(l)=2^{\lfloor\lg l\rfloor+1}-1$ and so $\mathfrak{d}(l)\leq 2l-1$; the bound is achieved if and only if $l=2^a$; 
and $\mathfrak{d}(l)=l$ if and only if $l=2^{a}-1$ for some integer $a\geq 2$. 
\item For $\tau\in \mathbb{F}$ and $l\geq 3$, $\sym{BRW}(\tau;M_1,\ldots,M_{l})$ can be computed using 
$\lfloor l/2\rfloor$ field multiplications (i.e. a multiplication over the field $\mathbb{F}$) and $\lfloor\lg l\rfloor$ additional field squarings to 
compute $\tau^2,\tau^4,\ldots,\tau^{2^{\lfloor\lg l\rfloor}}$.
\end{enumerate}
\end{theorem}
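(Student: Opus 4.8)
The plan is to treat all three parts by strong induction on the number of arguments $l$, driving everything from the recursive definition of $\sym{BRW}$. Parts~2 and~3 are the straightforward inductions, and since the degree formula and the monic property of part~2 are exactly what make parts~1 and~3 work, I would establish part~2 first.

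For part~2, I would take $l=3$ as the base case (direct expansion gives degree $3=2^{\lfloor\lg 3\rfloor+1}-1$) and, for $l\geq 4$ with $2^r\leq l<2^{r+1}$ and $r=\lfloor\lg l\rfloor\geq 2$, apply the recursion. The factor $\sym{BRW}(x;M_1,\ldots,M_{2^r-1})$ has $2^r-1$ arguments, so by the induction hypothesis it is monic of degree $2^r-1$; multiplying by $x^{2^r}+M_{2^r}$ produces a monic polynomial of degree $2^{r+1}-1$. The trailing summand $\sym{BRW}(x;M_{2^r+1},\ldots,M_l)$ has only $l-2^r<2^r$ arguments and hence degree strictly below $2^r$, so it cannot reach degree $2^{r+1}-1$. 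This simultaneously yields $\mathfrak{d}(l)=2^{\lfloor\lg l\rfloor+1}-1$ and the monic property. The three consequences are then arithmetic: $2^r\leq l$ gives $\mathfrak{d}(l)=2^{r+1}-1\leq 2l-1$ with equality iff $l=2^r=2^a$; and $\mathfrak{d}(l)=l$ forces $l=2^{r+1}-1=2^a-1$ with $a=r+1\geq 2$.

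For part~3, I would count the two kinds of operations separately. The distinct powers of $\tau$ needed anywhere in the recursion are $\tau^2,\tau^4,\ldots,\tau^{2^{\lfloor\lg l\rfloor}}$; computing them by repeated squaring costs exactly $\lfloor\lg l\rfloor$ squarings, and they are shared across all recursive calls, which accounts for the ``additional squarings''. Writing $m(l)$ for the number of remaining (general) multiplications, the recursion evaluates $\sym{BRW}(\tau;M_1,\ldots,M_l)$ with one multiplication of the evaluated leading factor by the precomputed $\tau^{2^r}+M_{2^r}$, plus the two recursive evaluations, so $m(l)=m(2^r-1)+m(l-2^r)+1$. With base values $m(0)=m(1)=0$, $m(2)=m(3)=1$, and using $m(2^r-1)=2^{r-1}-1$ together with $\lfloor(l-2^r)/2\rfloor=\lfloor l/2\rfloor-2^{r-1}$, the recurrence telescopes to $m(l)=\lfloor l/2\rfloor$.

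Part~1 is where I expect the main obstacle. Assume $\sym{BRW}(x;M_1,\ldots,M_l)=\sym{BRW}(x;M_1',\ldots,M_l')$. Comparing coefficients in degrees $\geq 2^r$ first: by part~2 only the term $\sym{BRW}(x;M_1,\ldots,M_{2^r-1})\cdot x^{2^r}$ lives there, so matching those coefficients forces the leading BRW factors to be equal, and the induction hypothesis recovers $M_1=M_1',\ldots,M_{2^r-1}=M_{2^r-1}'$. Subtracting the now-known high part leaves $$(M_{2^r}-M_{2^r}')\,\sym{BRW}(x;M_1,\ldots,M_{2^r-1})=\sym{BRW}(x;M_{2^r+1}',\ldots,M_l')-\sym{BRW}(x;M_{2^r+1},\ldots,M_l),$$ and the delicate point is to rule out $M_{2^r}\neq M_{2^r}'$. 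If the left side were nonzero it would have degree exactly $2^r-1$, since its BRW factor is monic of that degree; but the right side is a difference of two BRW polynomials on the same number $l-2^r$ of arguments, hence of the same degree, and whenever that common degree equals $2^r-1$ both are monic so their leading terms cancel --- so the right side always has degree strictly below $2^r-1$. This degree clash forces $M_{2^r}=M_{2^r}'$, whereupon the two trailing blocks are equal and a final use of the induction hypothesis recovers $M_{2^r+1},\ldots,M_l$. The crux, and the reason part~2's monic claim must come first, is exactly this cancellation of the leading terms in the trailing difference, which is what keeps the contribution of $M_{2^r}$ from being masked by the trailing block.
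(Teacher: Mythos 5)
The paper itself gives no proof of this theorem: it is imported verbatim from the cited source \cite{Be07}, so there is no internal argument to compare yours against. Your blind proof is correct and is essentially the standard inductive treatment of BRW polynomials: the degree/monicity induction for part~2, the recurrence $m(l)=m(2^r-1)+m(l-2^r)+1$ telescoping to $\lfloor l/2\rfloor$ for part~3, and the degree-clash argument (a nonzero multiple of the monic factor would have degree exactly $2^r-1$, while the difference of two trailing BRW polynomials on equally many blocks always has degree strictly below $2^r-1$) correctly handles the one delicate step in part~1, namely ruling out $M_{2^r}\neq M_{2^r}'$.
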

A field multiplication in $\mathbb{F}$ has two steps, namely a multiplication over the underlying ring (either the ring of integers, or the ring of polynomials), followed by
a reduction step. So $\lfloor l/2\rfloor$ field multiplications amounts to $\lfloor l/2\rfloor$ ring multiplications and $\lfloor l/2\rfloor$ reductions. 
By an unreduced multiplication we mean the ring multiplication with possibly a partial reduction. The following complexity improvement in computing
$\sym{BRW}(\tau;M_1,\ldots,M_{l})$ was proved in~\cite{BNS2025,DBLP:journals/dcc/GhoshS19}.
\begin{theorem}\cite{BNS2025,DBLP:journals/dcc/GhoshS19}\label{prop-BRW-comp}
For $\tau\in \mathbb{F}$ and $l\geq 3$, computing $\sym{BRW}(\tau;M_1,\ldots,M_{l})$ requires $\lfloor l/2\rfloor$ unreduced multiplications, $1+\floor{l/4}$ reductions, and 
	additionally requires $\lfloor\lg l\rfloor$ field squarings to compute $\tau^2,\tau^4,\ldots,\tau^{2^{\lfloor\lg l\rfloor}}$. 
\end{theorem}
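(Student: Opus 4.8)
The plan is to induct on $l$ using the recursive definition of $\sym{BRW}$, but to strengthen the induction hypothesis so that it separately tracks the cost of producing the value in fully reduced form and in unreduced form. Write $\mu(l)$, $\rho(l)$ and $\rho'(l)$ for, respectively, the number of unreduced multiplications, the number of reductions needed to obtain $\sym{BRW}(\tau;M_1,\ldots,M_l)$ as a reduced field element, and the number of reductions needed to obtain it only as an unreduced (double-width) value. The squarings $\tau^2,\tau^4,\ldots,\tau^{2^{\lfloor\lg l\rfloor}}$ are computed once as a global precomputation and reused at every node of the recursion, so they account for exactly $\lfloor\lg l\rfloor$ squarings as in Theorem~\ref{thm-BRW-basic}; the remainder of the argument concerns only $\mu$, $\rho$ and $\rho'$.

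The observation that drives the reduction count is how the recursive identity defining $\sym{BRW}$ is realised arithmetically when $2^r$ is the largest power of two with $2^r\leq l$. Evaluating the factor $\tau^{2^r}+M_{2^r}$ is free given the precomputed squarings. I would compute the left operand $\sym{BRW}(\tau;M_1,\ldots,M_{2^r-1})$ in reduced form, so that its product with $\tau^{2^r}+M_{2^r}$ is a single unreduced multiplication, and I would compute the tail $\sym{BRW}(\tau;M_{2^r+1},\ldots,M_l)$ in unreduced form. The key point is that the unreduced product and the unreduced tail are added and then passed through a \emph{single} reduction, so the tail contributes no reduction of its own beyond $\rho'(l-2^r)$. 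For $l\geq 4$ this yields
\begin{align*}
\mu(l) &= 2^{r-1} + \mu(l-2^r), \\
\rho(l) &= \rho(2^r-1) + \rho'(l-2^r) + 1, \\
\rho'(l) &= \rho(2^r-1) + \rho'(l-2^r),
\end{align*}
together with the base values $\mu(0)=\mu(1)=0$, $\mu(2)=\mu(3)=1$, $\rho(1)=0$, $\rho(2)=\rho(3)=1$, and $\rho'(l)=0$ for $0\leq l\leq 3$.

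Solving these is routine. First the $\mu$-recurrence, using $2^r$ even, reproves $\mu(l)=\lfloor l/2\rfloor$, matching Theorem~\ref{thm-BRW-basic}. For the reductions I would establish, by the same induction, the closed forms $\rho'(l)=\lfloor l/4\rfloor$ and $\rho(l)=1+\lfloor l/4\rfloor$ for $l\geq 2$. The inductive step uses $\rho(2^r-1)=2^{r-2}$ for $r\geq 2$ and the divisibility of $2^r$ by $4$, so that $2^{r-2}+\lfloor(l-2^r)/4\rfloor=\lfloor l/4\rfloor$; this gives $\rho'(l)=\lfloor l/4\rfloor$, and subtracting the third recurrence from the second gives $\rho(l)=\rho'(l)+1=1+\lfloor l/4\rfloor$. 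Restricting to $l\geq 3$ then yields the claimed $\lfloor l/2\rfloor$ unreduced multiplications and $1+\lfloor l/4\rfloor$ reductions.

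The main obstacle is not the arithmetic of the recurrences but justifying the single shared reduction. I must argue that the left operand genuinely has to be reduced, so that it feeds a single-width unreduced multiplication, whereas the tail may be left unreduced, and that the magnitudes of the unreduced product and the unreduced tail are small enough that their sum still lies in the range handled by one reduction. Establishing this bound — essentially that accumulating one double-width product together with an unreduced $\sym{BRW}$ tail does not overflow the representation — is the crux, and it is exactly what licenses strengthening the hypothesis to the pair $(\rho,\rho')$ rather than inducting on the reduced cost alone. The detailed range bookkeeping depends on the representation of field elements and is deferred to the field-arithmetic discussion, but the counting identities above are what the theorem records.
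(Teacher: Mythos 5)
Your proof is correct and takes essentially the same approach as the paper: the paper only cites this result, but realises it through the $\sym{unreducedBRW}$ recursion of Algorithm~\ref{algo-B} in Appendix~\ref{sec-BRW-algo}, which is precisely your strategy of reducing the prefix at each power-of-two boundary, performing one unreduced multiplication there, keeping the tail unreduced, and sharing a single final reduction. Your recurrences for $(\mu,\rho,\rho^{\prime})$ and the resulting closed forms $\lfloor l/2\rfloor$ and $1+\lfloor l/4\rfloor$ are exactly the operation counts of that algorithm, with the overflow bookkeeping deferred (as the paper also does) to the representation-dependent discussion of delayed reduction.
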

The algorithm for evaluating $\sym{BRW}(\tau;M_1,\ldots,M_{l})$ given in~\cite{BNS2025} is
provided in Appendix~\ref{sec-BRW-algo}. The algorithm uses a parameter $t$ which is a small integer. The values $t=2,3,4$ and $5$ were considered in~\cite{BNS2025}
and the same values of $t$ will also be considered in the present work. 
\begin{proposition}[From Theorem~5.2 of~\cite{BNS2025}] \label{prop-stack-sz}
	Applying Algorithm~\ref{algo-B} in Appendix~\ref{sec-BRW-algo} to compute $\sym{BRW}(\tau;M_1,\ldots,M_{l})$ requires the stack size to be at most
	$\lfloor\lg l\rfloor-t+1$.
\end{proposition}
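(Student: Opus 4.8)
The plan is to analyze the stack usage of Algorithm~\ref{algo-B} directly, by tracking how the stack evolves as the input sequence $M_1,\ldots,M_l$ is consumed, and to relate the maximum stack height to the recursion depth of the $\sym{BRW}$ recursion. The starting observation is that the defining recursion for $\sym{BRW}$ in Section~\ref{sec-prelim} has a binary-tree shape: a length-$l$ instance with $2^r\le l<2^{r+1}$ is split at the largest power of two $2^r$, producing a left factor $\sym{BRW}(x;M_1,\ldots,M_{2^r-1})$ of ``height'' $r$ and a right residual $\sym{BRW}(x;M_{2^r+1},\ldots,M_l)$ of strictly smaller height. Algorithm~\ref{algo-B} evaluates this structure iteratively rather than by genuine recursion, and the parameter $t$ fixes the granularity of the base case: blocks of up to roughly $2^t$ consecutive $M_i$'s are evaluated in-line, and only the resulting partial values (together with the corresponding powers $\tau^{2^j}$) are ever pushed onto the stack.

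First, I would make precise the correspondence between the stack contents and a binary-counter invariant. Each element placed on the stack is a completed $\sym{BRW}$ value over a sub-block whose length is a power of two (above the base granularity set by $t$), and two stack entries at the same level are combined --- using one multiplication by the appropriate $(x^{2^j}+M_{2^j})$ factor followed by an addition --- into a single entry one level higher. This is exactly the carry propagation of binary addition. I would prove by induction on the number of base blocks processed that, at any point, the multiset of levels present on the stack equals the set of positions of the $1$-bits in the binary representation of the number of base blocks consumed so far. Consequently the stack height at that moment is the Hamming weight of that integer.

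Second, from this invariant the bound follows by counting the available levels. The highest level that can ever occur corresponds to the top of the recursion tree, namely level $\lfloor\lg l\rfloor$, while the lowest level that is ever pushed is $t$, since smaller blocks are absorbed into the base case. The number of distinct level positions available to the binary counter is therefore $\lfloor\lg l\rfloor-t+1$, and the Hamming weight --- hence the stack height --- can be at most this many. I would then check the endpoints: the ``$+1$'' comes from counting levels inclusively between $t$ and $\lfloor\lg l\rfloor$, and Theorem~\ref{thm-BRW-basic} guarantees that the top level is indeed $\lfloor\lg l\rfloor$ for $l\ge 3$, so no higher level appears.

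The main obstacle I anticipate is pinning down the constant exactly, i.e.\ justifying the sharp value $\lfloor\lg l\rfloor-t+1$ rather than an off-by-one variant. This requires a careful reading of precisely when Algorithm~\ref{algo-B} pushes versus immediately combines, and in particular handling $l$ that are not powers of two, where the right residual is processed without ever occupying a fresh top level, so that the peak stack height is attained while processing the left sub-block rather than at the final combination. Treating the boundary cases with $l$ near $2^t$ (where the stated bound should degenerate gracefully) and verifying that the powers $\tau^{2^j}$ stored alongside the partial values do not force any extra stack slots are the remaining details; both follow once the binary-counter invariant is established and are routine, but they are where the precise arithmetic of the bound must be verified.
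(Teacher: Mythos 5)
Your proposal is correct, and its core --- the binary-counter invariant that after $i$ iterations of the main loop the stack holds exactly one entry for each $1$-bit of $i$, at distinct levels $t+j$ for the bit positions $j$, so that the stack height equals $\sym{wt}(i)$ and is bounded by the number of available positions $\lfloor\lg\lfloor l/2^t\rfloor\rfloor+1=\lfloor\lg l\rfloor-t+1$ --- is precisely the argument that Algorithm~\ref{algo-B} wears on its sleeve through its use of $\sym{ntz}(i)$ for pops and $\sym{wt}(\lfloor l/2^t\rfloor)$ for the final drain. Note that the paper itself gives no proof of Proposition~\ref{prop-stack-sz}; it imports the statement from Theorem~5.2 of the cited reference, so your write-up supplies a self-contained justification rather than deviating from an argument in this paper. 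Two minor points to tighten when writing it out: the ``carry'' in the algorithm is not a pairwise same-level merge but a single multiplication per iteration (the fresh base value absorbs all $\sym{ntz}(i)$ popped entries by addition and is then lifted directly to level $t+\sym{ntz}(i)$), which leaves the stack-profile analysis unchanged; and the bound on the top level needs no appeal to Theorem~\ref{thm-BRW-basic}, since $\sym{ntz}(i)\leq\lfloor\lg i\rfloor\leq\lfloor\lg l\rfloor-t$ for $i\leq\lfloor l/2^t\rfloor$ already caps the levels at $\lfloor\lg l\rfloor$.
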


\subsection{Hash Functions $\sym{polyHash}$ and $\sym{BRWHash}$ \label{subsec-construct}}
Let $p$ be a prime and $\mathbb{F}_p$ be the finite field of order $p$. Our primary focus will be $2^{130}-5$ which is the prime underlying the hash function
Poly1305. We will also consider the prime $2^{127}-1$ which has turned out to be quite important (see~\cite{BNS2025,cryptoeprint:2025/1224}).
Given the prime $p$, we define the integers $m$, $n$, $k$ and $\mu$ as shown in Table~\ref{tab-params}.
\begin{table}
	\centering
		\begin{tabular}{|l|c|c|c|c|c|}
			\hline
			\multicolumn{1}{|c|}{$p$}       & $m$ & $n$ & $k$ & $\mu$ \\ \hline
			$2^{127}-1$ & 127 & 120 & 126 & 126 \\ \hline
			$2^{130}-5$ & 130 & 128 & 128 & 128 \\ \hline
		\end{tabular}
		\caption{The parameters $m$, $n$, $k$ and $\mu$ for the primes $2^{127}-1$ and $2^{130}-5$. \label{tab-params} }
\end{table}
Elements of $\mathbb{F}_p$ can be represented as $m$-bit strings. Since $n$, $k$ and $\mu$ are less than $m$, we will consider $n$-bit, $k$-bit and $\mu$-bit strings to represent 
elements of $\mathbb{F}_p$, where the most significant $m-n$, $m-k$, and $m-\mu$ bits respectively are set to 0. \\

\noindent {\bf Formatting and padding:} 
A binary string $X$ of length $L\geq 0$ is formatted (or partitioned) into $\ell$ blocks $X_1,\ldots,X_{\ell}$, where the length of $X_i$ is $n$
for $1\leq i\leq \ell-1$, the length of $X_{\ell}$ is $s$ with $1\leq s\leq n$, and $X=X_1||X_2||\cdots||X_{\ell}$.
Note that if $X$ is the empty string, i.e. if $L=0$, then $\ell=0$. 
We call each $X_i$ to be a block. If the length of a block is $n$, then we call it a full block, otherwise we call it a partial block.
By $\sym{format}(X)$ we will denote the list $(X_1,\ldots,X_{\ell})$ obtained from $X$ using the above described procedure.
The following two padding schemes were described in~\cite{BNS2025}. 
\begin{itemize}
\item $\sym{pad}1(X_1,\ldots,X_{\ell})$ returns $(M_1,\ldots,M_{\ell})$, where 
$M_i=0^{m-n-2}||1||X_i$, for $i=1,\ldots,\ell-1$, and $M_{\ell}=0^{m-s-2}||1||X_\ell$. 
\item $\sym{pad}2(X_1,\ldots,X_{\ell})$ returns $(M_1,\ldots,M_{\ell}, \allowbreak \sym{bin}_{m-1}(L))$, where
$M_i=0^{m-n-1}||X_i$, for $i=1,\ldots,\ell-1$, and $M_{\ell}=0^{m-s-1}||X_\ell$.
\end{itemize}
For both the padding schemes, the length of each $M_i$, $i=1,\ldots,\ell$, is $m-1$ and we consider $M_i$ to be an element of $\mathbb{F}_p$.
For the padding scheme $\sym{pad}1$, there is no restriction on the value of $L$. On the other hand, for $\sym{pad}2$, the value of $L$ has to be less
than $2^{m-1}$. From Table~\ref{tab-params}, the values of $m-1$ for the two primes are 127 and 130, and so the restriction on $L$ is a non-issue in practice. In fact,
in our implementations we consider $L$ to be less than $2^{64}$, so that the binary representation of $L$ can be stored as a 64-bit quantity. This is sufficient for
all conceivable applications.



The hash functions $\sym{polyHash}$ and $\sym{BRWHash}$ were introduced in~\cite{BNS2025}. In particular, the hash function $\sym{polyHash}$ is based on the idea behind
the design of the hash function Poly1305.
The key space and digest space for both the families $\sym{polyHash}$ and $\sym{BRWHash}$ are $\{0,1\}^k$; $\tau$ denotes the $k$-bit key which is 
considered to be an element of $\mathbb{F}_p$.
The digest space is the group $(\mathbb{Z}_{2^\mu},+)$, and so the digest can be represented using a $\mu$-bit string.
The message space for $\sym{polyHash}$ is the set of all binary strings. The message space for $\sym{BRWHash}$ is the set of all binary strings of
lengths less than $2^{m-1}$; as mentioned above in our implementations we considered messages of lengths less than $2^{64}$. 

\begin{remark}\label{rem-tag-sp}
	The descriptions of $\sym{polyHash}$ and $\sym{BRWHash}$ in~\cite{BNS2025} did not include the parameter $\mu$. Instead both the key and tag spaces were defined to be 
	$\{0,1\}^k$. In this paper, we make the formal distinction between the key and the tag spaces by introducing the additional parameter $\mu$ to denote the size of tags. 
	This generalises the descriptions of the $\sym{polyHash}$ and $\sym{BRWHash}$, and we restate the result on the
	AXU bounds proved in~\cite{BNS2025} in terms of $\mu$ and $k$. 
\end{remark}

In the descriptions of $\sym{polyHash}$ and $\sym{BRWHash}$ given below, $X$ denotes a message which is a binary string of length $L\geq 0$. \\

\begin{construction}\label{cons-polyhash}
Given a binary string $X$, let $(M_1,\ldots,M_{\ell})$ be the output of $\sym{pad}1(\sym{format}(X))$. We define
\begin{eqnarray}\label{eqn-polyHash}
	\sym{polyHash}_{\tau}(X) & = & ( P_1(\tau;M_1,\ldots,M_{\ell}) \bmod p ) \bmod 2^\mu,
\end{eqnarray}
where $P_1(x;M_1,\ldots,M_{\ell})$ is a polynomial in $\mathbb{F}_p[x]$ defined as follows.
\begin{eqnarray}\label{eqn-polyHashpoly}
	P_1(x;M_1,\ldots,M_{\ell}) & = & x \cdot \sym{Poly}(x;M_1,\ldots,M_{\ell}).
\end{eqnarray}
\end{construction}
Note that if $X$ is the empty string, then $L=\ell=0$ and so $\sym{polyHash}_{\tau}(X)=0$.
The family $\sym{polyHash}$ is motivated by the design of Poly1305~\cite{DBLP:conf/fse/Bernstein05} for the prime $2^{130}-5$. The differences between
Poly1305 and $\sym{polyHash}$ are as follows.
\begin{enumerate}
	\item Poly1305 considers $X$ to be a sequence of bytes, whereas $\sym{polyHash}$ considers $X$ to be a sequence of bits.
	\item In Poly1305, certain bits of the key $\tau$ are ``clamped'', i.e. they are set to 0. In~\cite{DBLP:conf/fse/Bernstein05} the clamping of key bits
		helped in efficient floating point implementation. On the other hand, however, clamping reduces security. Since we are not interested in 
		floating point implementation, we do not include clamping of key bits in the specification of $\sym{polyHash}$.
	\item Poly1305 is defined only for the prime $2^{130}-5$, whereas $\sym{polyHash}$ can be instantiated by any appropriate prime. In~\cite{BNS2025},
		instantiations of $\sym{polyHash}$ were proposed using both $2^{130}-5$ and $2^{127}-1$.
\end{enumerate}
Suppose $\sym{pad}1(\sym{format}(X))$ returns $(M_1,\ldots,M_{\ell})$. Computing $\sym{polyHash}_\tau(X)$ requires $\ell$ field multiplications.
A delayed reduction strategy was proposed in~\cite{Goll-Gueron} for computing $\sym{Poly}(\tau;M_1,\ldots,M_{\ell})$. For a parameter $g\geq 1$, the idea is to 
perform a sequence of $g$ unreduced multiplications and additions and then perform a single reduction. This strategy requires pre-computing the powers
$\tau,\tau^2,\tau^3,\ldots,\tau^g$. 
Using this strategy, it is possible to compute $\sym{polyHash}(X)$ using $\ell$ unreduced multiplications, $\ceil{\ell/g}$ reductions, and additionally $g-1$ 
field multiplications~\cite{Goll-Gueron,BNS2025}. The key powers $\tau,\tau^2,\tau^3,\ldots,\tau^g$ are required to be pre-computed before the
actual computation of $\sym{polyHash}$. See Table~\ref{tab-oc}.

\begin{construction}\label{cons-BRWHash}
Given a binary string $X$, let the output of $\sym{pad}2(\sym{format}(X))$ be $(M_1,\allowbreak\ldots,\allowbreak M_{\ell},\allowbreak \sym{bin}_{m-1}(L))$. 
	We define
\begin{eqnarray}\label{eqn-BRWHash}
	\sym{BRWHash}_{\tau}(X) & = & ( P_2(\tau; M_1,\ldots,M_{\ell},\sym{bin}_{m-1}(L)) \bmod p ) \bmod 2^\mu,
\end{eqnarray}
where $P_2(x;M_1,\ldots,M_{\ell},\sym{bin}_{m-1}(L))$ is a polynomial in $\mathbb{F}_p[x]$ defined as follows.
\begin{eqnarray}\label{eqn-BRWHashpoly}
	P_2(x;M_1,\ldots,M_{\ell},\sym{bin}_{m-1}(L)) & = & x (x\cdot \sym{BRW}(x;M_1,\ldots,M_\ell) + \sym{bin}_{m-1}(L) ). 
\end{eqnarray}
\end{construction}
Note that if $X$ is the empty string, then $L=\ell=0$ and so $\sym{BRWHash}_{\tau}(X)=0$.
Suppose $\sym{pad}2(\sym{format}(X))$ returns $(M_1,\ldots,M_{\ell},\sym{bin}_{m-1}(L))$. 
Computing $\sym{BRWHash}(X)$ requires $2+\floor{\ell/2}$ unreduced multiplications, $2+\floor{\ell/4}$ reductions, and additionally
$\floor{\lg\ell}$ field squarings~\cite{BNS2025}. The key powers $\tau,\tau^2,\tau^{2^2},\ldots,\tau^{2^{\floor{\lg\ell}}}$ are required to be pre-computed before the actual
computation of $\sym{BRWHash}$. See Table~\ref{tab-oc}.

The following two results were proved in~\cite{BNS2025} for the case $k=\mu$. Below we state the results for the more general case of separate $k$ and $\mu$. The proofs
are essentially the same as the proofs of the case $k=\mu$ given in~\cite{BNS2025}.
\begin{lemma}[Based on Lemma~4.1 of~\cite{BNS2025}] \label{lem-basic}
	Let $p=2^m-\delta$ be a prime and $\mu$ be a positive integer such that $\mu<m$ and $\delta < 2^\mu-1$. 
	Let $\alpha\in \mathbb{Z}_{2^\mu}$, and $P(x)$ and $P^{\prime}(x)$ be distinct polynomials in $\mathbb{F}_p[x]$ satisfying $P(0)=P^{\prime}(0)=0$. 
	The number of distinct $\tau\in \mathbb{F}_p$ such that 
	\begin{eqnarray}\label{eqn-basic1}	
		((P(\tau)\bmod p)\bmod 2^\mu) - ((P^{\prime}(\tau)\bmod p) \bmod 2^\mu) & \equiv  & \alpha \pmod {2^\mu}
	\end{eqnarray}
	is at most $2^{m-\mu+1}$ times the degree of the polynomial $P(x)-P^{\prime}(x)$. 

	Consequently, for $\tau$ chosen uniformly at random from a set $\emptyset \neq S\subseteq \mathbb{F}_p$, the probability
	that~\eqref{eqn-basic1} holds is at most $(2^{m-\mu+1}\cdot \sym{deg}(P(x)-P^{\prime}(x)))/\#S$.
\end{lemma}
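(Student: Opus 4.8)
The plan is to collapse the two-layer reduction — first modulo $p$, then modulo $2^\mu$ — into a root-counting problem for the single polynomial $Q(x)=P(x)-P^\prime(x)$. Set $d=\sym{deg}(Q(x))$. Since $P(x)$ and $P^\prime(x)$ are distinct, $Q(x)$ is non-zero, and since $P(0)=P^\prime(0)=0$ it has $0$ as a root; hence $Q(x)$ is non-constant and $d\geq 1$, which is exactly what the later root bound needs. For an integer $z$, let $z\bmod p$ denote its representative in $\{0,\ldots,p-1\}$. First I would strip off the outer reduction: because $(z\bmod 2^\mu)\equiv z\pmod{2^\mu}$, the left-hand side of~\eqref{eqn-basic1} is congruent modulo $2^\mu$ to $(P(\tau)\bmod p)-(P^\prime(\tau)\bmod p)$.

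Next I would track the possible borrow. Put $w=(P(\tau)\bmod p)-(P^\prime(\tau)\bmod p)$ and $q=Q(\tau)\bmod p\in\{0,\ldots,p-1\}$. Both representatives lie in $\{0,\ldots,p-1\}$, so $w\in\{-(p-1),\ldots,p-1\}$; moreover $w\equiv q\pmod{p}$, which forces $w\in\{q,\,q-p\}$. Consequently every $\tau$ satisfying~\eqref{eqn-basic1} obeys either $q\equiv\alpha\pmod{2^\mu}$ or $q\equiv\alpha+p\pmod{2^\mu}$, so the solution set is contained in the union of the two sets $S_r=\{\tau\in\mathbb{F}_p: Q(\tau)\bmod p\equiv r\pmod{2^\mu}\}$ for $r\in\{\alpha\bmod 2^\mu,\ (\alpha+p)\bmod 2^\mu\}$.

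I would then bound $\#S_r$ for a fixed residue $r$ by splitting according to the actual value $y=Q(\tau)\bmod p$: the set $S_r$ is the disjoint union, over those $y\in\{0,\ldots,p-1\}$ with $y\equiv r\pmod{2^\mu}$, of the fibres $\{\tau: Q(\tau)=y \text{ in } \mathbb{F}_p\}$. For each such $y$ the equation $Q(\tau)=y$ says $\tau$ is a root of $Q(x)-y$, a degree-$d$ polynomial over the field $\mathbb{F}_p$, hence has at most $d$ roots. The number of admissible $y$ is the number of integers of the length-$p$ interval $\{0,\ldots,p-1\}$ in one residue class modulo $2^\mu$, which is at most $\lceil p/2^\mu\rceil$; since $p<2^m$ and $\mu<m$ (the standing hypothesis $\delta<2^\mu-1$ guarantees the relevant primes sit in this regime) this is at most $2^{m-\mu}$. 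Thus $\#S_r\leq 2^{m-\mu}\,d$, and summing over the two residue classes yields the claimed bound $2^{m-\mu+1}\,d$ on the number of solutions in $\mathbb{F}_p$. For the probabilistic statement I would observe that $\{0,1\}^k$ is a size-$2^k$ subset of $\mathbb{F}_p$, so the count of qualifying $\tau$ inside it is at most the total count $2^{m-\mu+1}d$ over all of $\mathbb{F}_p$; dividing by $2^k$ gives probability at most $2^{m-k-\mu+1}\,\sym{deg}(P(x)-P^\prime(x))$.

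The main obstacle is the borrow step. Because reduction modulo $p$ and reduction modulo $2^\mu$ do not commute, one cannot simply replace the left side of~\eqref{eqn-basic1} by $(Q(\tau)\bmod p)\bmod 2^\mu$; the difference of the two inner representatives can undershoot by exactly $p$. Establishing $w\in\{q,q-p\}$ and thereby the \emph{two} admissible target residues — which is precisely where the factor of $2$ in the exponent $m-\mu+1$ originates — is the crux, and everything else (the polynomial root bound and the interval-counting bound $\lceil p/2^\mu\rceil\leq 2^{m-\mu}$) is routine once this reduction is in place.
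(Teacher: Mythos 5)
Your proof is correct and complete: the borrow analysis showing $w\in\{q,q-p\}$ (hence the two admissible residue classes), the fibre decomposition of each $S_r$, and the root bound for $Q(x)-y$ — valid because you first established $\sym{deg}(Q)\geq 1$ — combine to give exactly the claimed bound $2^{m-\mu+1}\cdot\sym{deg}(P(x)-P^{\prime}(x))$, and the passage to uniform $\tau\in\{0,1\}^k$ is handled correctly. This is essentially the same route as the proof the paper relies on (the paper itself offers no new argument, stating only that the proof is that of Lemma~4.1 of~\cite{BNS2025} adapted to general $k$ and $\mu$): there too the two-layer reduction is collapsed to counting roots of $Q(x)-v$ over the admissible integer values $v$, with the factor of $2$ in the exponent arising from the wrap-around modulo $p$, so your two-residue-class bookkeeping is just a slightly different packaging of the same count.
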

In our applications of Lemma~\ref{lem-basic}, the set $S$ will be the key space of the hash functions. For the hash functions that we construct, the key space
(and hence $S$) will be the set $\{0,1\}^k$ for some $k$ such that $2^k<p$, and the $k$-bit keys are considered to be elements of $\mathbb{F}_p$. 
Lemma~\ref{lem-basic} reduces the problem of determining the probability that a uniform random $k$-bit string $\tau$ satisfies~\eqref{eqn-basic1} to the simpler 
problem of determining the degree of the non-zero polynomial $P(x)-P^{\prime}(x)\in \mathbb{F}_p[x]$. 
The values of $p$, $m$, $k$ and $\mu$ given in Table~\ref{tab-params} satisfy the conditions stated in Lemma~\ref{lem-basic}.

\begin{remark}\label{rem-key-clamping}
	We note that the hash function Poly1305 uses key clamping, i.e. it sets certain bits of $128$-bit strings to 0 to obtain the keys. So the key space for Poly1305 is 
	not of the form $\{0,1\}^k$.
\end{remark}

\begin{theorem}[Based on Theorem~4.7 of~\cite{BNS2025}]\label{thm-AXUbndPolyHash}
	Let $p=2^m-\delta$ be a prime and $\mu$ be a positive integer such that $\mu<m$ and $\delta < 2^\mu-1$.
	Let $X$ and $X^{\prime}$ be two distinct binary strings
	of lengths $L$ and $L^{\prime}$ respectively with $L\geq L^{\prime} \geq 0$, and $\alpha$ be an element of $\mathbb{Z}_{2^\mu}$. Let $\ell=\lceil L/n\rceil$.
	Suppose $\tau$ is chosen uniformly at random from $\{0,1\}^k$. Then
	\begin{eqnarray*}
		\Pr[\sym{polyHash}_{\tau}(X)-\sym{polyHash}_{\tau}(X^{\prime})=\alpha] & \leq & \ell\cdot 2^{m-k-\mu+1}, \\
		\Pr[\sym{BRWHash}_{\tau}(X)-\sym{BRWHash}_{\tau}(X^{\prime})=\alpha] & \leq & (1+2\ell)\cdot 2^{m-k-\mu+1}, \\
	\end{eqnarray*}
\end{theorem}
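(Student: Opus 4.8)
The plan is to reduce the AXU bound for each hash function to an application of Lemma~\ref{lem-basic}, whose hypotheses are verified by Table~\ref{tab-params}. For both $\sym{polyHash}$ and $\sym{BRWHash}$, the differences $\sym{polyHash}_{\tau}(X)-\sym{polyHash}_{\tau}(X')$ and $\sym{BRWHash}_{\tau}(X)-\sym{BRWHash}_{\tau}(X')$ have exactly the form appearing on the left-hand side of~\eqref{eqn-basic1}, where $P(x)$ and $P'(x)$ are the associated polynomials $P_1$ or $P_2$ evaluated on the padded message blocks of $X$ and $X'$ respectively. By Construction~\ref{cons-polyhash} and Construction~\ref{cons-BRWHash}, both $P_1$ and $P_2$ have a factor of $x$, so they satisfy $P(0)=P'(0)=0$ as required by the lemma. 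Hence for each hash function, once I establish that the relevant polynomial difference is \emph{non-zero} and bound its degree, Lemma~\ref{lem-basic} immediately yields the probability bound $2^{m-k-\mu+1}\cdot\sym{deg}(P(x)-P'(x))$, and it remains only to plug in the degree bounds.

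First I would handle $\sym{polyHash}$. Here $P_1(x;M_1,\ldots,M_{\ell})=x\cdot\sym{Poly}(x;M_1,\ldots,M_{\ell})$, so the degree is at most $\ell=\lceil L/n\rceil$ (the leading term is $M_1 x^{\ell}$, and lower-order terms can only reduce the degree). The key step is to argue the difference polynomial is non-zero when $X\neq X'$. Since the padding schemes $\sym{pad}1$ and $\sym{pad}2$ are injective on formatted inputs (each full block carries a length-distinguishing leading $1$, and the final partial block together with its length tag determines $s$), distinct messages yield distinct tuples of field elements $M_i$, and for $\sym{polyHash}$ the map from such tuples to the coefficient list of $P_1$ is injective. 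Thus $P_1-P_1'$ is a non-zero polynomial of degree at most $\ell$, and Lemma~\ref{lem-basic} gives the bound $\ell\cdot 2^{m-k-\mu+1}$.

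Next I would treat $\sym{BRWHash}$, where the analysis splits on whether $L=L'$ or $L>L'$. From Construction~\ref{cons-BRWHash}, $P_2$ contains the term $x^2\cdot\sym{BRW}(x;M_1,\ldots,M_{\ell})$ plus the lower term $x\cdot\sym{bin}_{m-1}(L)$. By part~2 of Theorem~\ref{thm-BRW-basic}, for $\ell\geq 3$ we have $\sym{deg}(\sym{BRW})=2^{\lfloor\lg\ell\rfloor+1}-1\leq 2\ell-1$, so the degree of $P_2$ is at most $2\ell+1=1+2\ell$, which accounts for the factor $(1+2\ell)$ in the claimed bound; I would check the small cases $\ell\leq 2$ separately against this bound. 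The \emph{main obstacle}, and the place requiring genuine care, is establishing non-zeroness of $P_2-P_2'$ in the case $L=L'$ (so $\ell=\ell'$ and the lengths are encoded identically). Here the two length-tag terms cancel, and non-zeroness must come from the injectivity of the BRW map (part~1 of Theorem~\ref{thm-BRW-basic}): since $X\neq X'$ but the lengths match, the padded block tuples $(M_1,\ldots,M_{\ell})$ and $(M_1',\ldots,M_{\ell}')$ differ, and injectivity of $\sym{BRW}$ forces $\sym{BRW}(x;M_1,\ldots,M_{\ell})\neq\sym{BRW}(x;M_1',\ldots,M_{\ell}')$, hence $P_2\neq P_2'$. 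When $L>L'$, instead I would use that the distinct length encodings $\sym{bin}_{m-1}(L)\neq\sym{bin}_{m-1}(L')$, combined with the differing degrees of the two BRW polynomials when $\ell\neq\ell'$, guarantee the difference is non-zero; the monic leading behaviour of $\sym{BRW}$ for $\ell\geq 3$ is what prevents accidental cancellation of the top-degree term. With non-zeroness secured in all cases and the degree bounded by $1+2\ell$, Lemma~\ref{lem-basic} delivers the stated bound $(1+2\ell)\cdot 2^{m-k-\mu+1}$.
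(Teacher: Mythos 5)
Your proposal is correct and takes essentially the same route as the paper: the paper's proof of this theorem defers to~\cite{BNS2025}, but its template is exactly what you use and what it spells out for Theorem~\ref{thm-AXUbnddecBRWHash} --- verify the zero constant terms, get non-zeroness of the polynomial difference from injectivity of the padding (for $\sym{polyHash}$) and of the BRW map plus the length tag (for $\sym{BRWHash}$), bound the degrees by $\ell$ and $1+2\ell$, and feed everything into Lemma~\ref{lem-basic}. One small simplification for your $L>L'$ case: the degree and monicity considerations are unnecessary, since the coefficient of $x$ in $P_2$ is exactly $\sym{bin}_{m-1}(L)$ (the BRW term only contributes to powers $x^2$ and higher), so distinct lengths already force $P_2\neq P_2'$, which is precisely the argument in the paper's Lemma~\ref{lem-str2poly}.
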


	Apart from $\sym{polyHash}$ and $\sym{BRWHash}$, two other hash functions, named $t\mbox{-}\sym{BRWHash}$ and $d\mbox{-}\sym{2LHash}$,
	as well as the hash function $d\mbox{-}\sym{Hash}$ (which is a combination of $\sym{polyHash}$ and $d\mbox{-}\sym{2LHash}$)
	were defined in~\cite{BNS2025}. Timing results from the sequential implementations reported in~\cite{BNS2025} indicated that among all the hash functions, for short messages
	$\sym{polyHash}$ is the fastest, while $d\mbox{-}\sym{2LHash}$ is the fastest for longer messages. However, fresh implementations of $\sym{BRWHash}$
	(and also $t\mbox{-}\sym{BRWHash}$) reported in~\cite{cryptoeprint:2025/1224} showed that among all the hash functions considered in~\cite{BNS2025}, 
	$\sym{BRWHash}$ is the fastest among
	all the four hash functions $\sym{polyHash}$, $\sym{BRWHash}$, $t\mbox{-}\sym{BRWHash}$ and $d\mbox{-}\sym{2LHash}$ for
	all message lengths and for both the primes $2^{127}-1$ and $2^{130}-5$. 
	In view of the fact that $\sym{BRWHash}$ is faster than both $t\mbox{-}\sym{BRWHash}$ and $d\mbox{-}\sym{2LHash}$, we do not consider the hash functions 
	$t\mbox{-}\sym{BRWHash}$ and $d\mbox{-}\sym{2LHash}$ (and also $d\mbox{-}\sym{Hash}$) in this work.

\section{Decimated BRW Hash \label{sec-SIMD-BRW-Horner}}
We describe the hash function family $c\mbox{-}\sym{decBRWHash}$. The key space is $\{0,1\}^k$; $\tau$ denotes the $k$-bit key which is considered to be an element of $\mathbb{F}_p$. 
The digest space is the group $(\mathbb{Z}_{2^\mu},+)$. The message space is the set of all binary strings of lengths less than $2^{m-1}$. 
For concreteness we refer to the primes and the parameters $m$, $n$, $k$ and $\mu$ given in Table~\ref{tab-params}.

\begin{construction}\label{cons-decBRWHash}
	The hash function is parameterised by a positive integer $c$.
Given a binary string $X$ of length $L\geq 0$, let $(M_1,\ldots,M_{\ell},\sym{bin}_{m-1}(L))$ be the output of $\sym{pad}2(\sym{format}(X))$. We define
\begin{eqnarray}\label{eqn-SIMDBRWHash}
	c\mbox{-}\sym{decBRWHash}_{\tau}(X) & = & ( Q(\tau; M_1,\ldots,M_{\ell},\sym{bin}_{m-1}(L)) \bmod p ) \bmod 2^\mu,
\end{eqnarray}
where $Q(x;M_1,\ldots,M_{\ell},\sym{bin}_{m-1}(L))$ is the polynomial in $\mathbb{F}_p[x]$ defined in the following manner.

Let $\mathfrak{n}=\lceil \ell/c\rceil$ and $\mathfrak{m}=c\mathfrak{n}$. Define $M_{\ell+1}=\cdots=M_{\mathfrak{m}}=0^{m-1}$. Let 
\begin{eqnarray*}
	Q_1(x) & = & \sym{BRW}(x;M_1,M_{c+1},M_{2c+1},\ldots,M_{\mathfrak{m}-3}), \\
	Q_2(x) & = & \sym{BRW}(x;M_2,M_{c+2},M_{2c+2}\ldots,M_{\mathfrak{m}-2}), \\
	\cdots \\
	Q_c(x) & = & \sym{BRW}(x;M_{c},M_{2c},M_{3c}\ldots,M_{\mathfrak{m}}).
\end{eqnarray*}
Note that each of the $Q_i$'s is a BRW polynomial on $\mathfrak{n}$ blocks. If $L=0$, let $d=1$ and if $L>0$, let $d=2^{\lfloor\lg\mathfrak{n}\rfloor+1}$. Define 
\begin{eqnarray}\label{eqn-Q}
	Q_{c+1}(x) & = & \sym{Poly}(x^d;Q_1(x),Q_2(x),\ldots,Q_c(x)) \nonumber \\
	& = & x^{(c-1)d}Q_1(x) + x^{(c-2)d}Q_2(x) + \cdots + x^dQ_{c-1}(x) + Q_c(x). 
\end{eqnarray}
Finally,
\begin{eqnarray}\label{eqn-decBRWHashpoly}
	Q(x;M_1,\ldots,M_{\ell},\sym{bin}_{m-1}(L)) & = & x (x\cdot Q_{c+1}(x) + \sym{bin}_{m-1}(L) ). 
\end{eqnarray}
\end{construction}
When the quantities $M_1,\ldots,M_{\ell},\sym{bin}_{m-1}(L)$ are clear from the context, we will write $Q(x)$ instead of $Q(x;M_1,\ldots,M_{\ell},\sym{bin}_{m-1}(L))$. 
Note that if $X$ is the empty string, then $L=\ell=0$, and $c\mbox{-}\sym{decBRWHash}_{\tau}(X)=0$.

The idea behind the construction of $\sym{decBRWHash}$ is to decimate the message blocks into $c$ independent streams, process each stream using BRW and then 
combine the outputs of the streams using Horner with an appropriate power of the key $\tau$. Choosing the proper power of $\tau$ for the Horner evaluation is important
to ensure security. We prove later that the choice of $\tau^d$ is appropriate. Further, the key powers, $\tau,\tau^2,\ldots,\tau^{d/2}$ are required for the
BRW computations. So the key power $\tau^d$ for the Horner computation is obtained from the last key power $\tau^{d/2}$ required for the BRW computation by one squaring.

\begin{remark}\label{rem-brw-gen}
	Suppose $c=1$. Then $Q_1(x)=\sym{BRW}(x;M_1,\ldots,M_\ell)$, $Q_2(x)=Q_1(x)$, and $Q(x)=x(x\cdot Q_1(x)+\sym{bin}_{m-1}(L))$. So with $c=1$, 
	the hash function $c\mbox{-}\sym{decBRWHash}$ becomes exactly the hash function $\sym{BRWHash}$. Consequently, $c\mbox{-}\sym{decBRWHash}$ is a generalisation of 
	$\sym{BRWHash}$. Note that for $c=1$, since $Q_2(x)=Q_1(x)$, $x^d$ is not required.
\end{remark}


The complexity of computing $c\mbox{-}\sym{decBRWHash}_{\tau}(X)$ for $c>1$ is stated in the following result.
\begin{proposition}\label{prop-comp-decBRW}
	Let $c>1$ be an integer and $X$ be a binary string. 
	Suppose $\sym{pad}2(\sym{format}(X))$ returns $(M_1,\allowbreak \ldots,\allowbreak M_{\ell},\allowbreak \sym{bin}_{m-1}(L))$. 
	Computing $c\mbox{-}\sym{decBRWHash}_{\tau}(X)$ 
	requires $c\floor{\mathfrak{n}/2}$ unreduced multiplications, $c(1+\floor{\mathfrak{n}/4})$ reductions, $c+1$ field multiplications, and $\floor{\lg\mathfrak{n}}$
field squarings. 
\end{proposition}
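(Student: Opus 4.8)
The plan is to read the cost directly off the three-stage structure of $Q(x)$ in Construction~\ref{cons-decBRWHash} and to charge the operations of each stage to the appropriate one of the four tallies in the statement. Throughout I would assume $L>0$ (the case $L=0$ being trivial, since then $c\mbox{-}\sym{decBRWHash}_\tau(X)=0$) and $\mathfrak{n}\geq 3$, so that Theorem~\ref{prop-BRW-comp} applies verbatim to each stream, with the handful of smaller cases handled separately. First I would fix the evaluation order as follows: (i) compute the key powers once; (ii) evaluate the $c$ inner BRW polynomials $Q_1(\tau),\ldots,Q_c(\tau)$; and (iii) combine them through $Q_{c+1}(\tau)$ and the outer wrapping into $Q(\tau)$.

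For stage (ii) the decisive observation is that every $Q_i$ is a BRW polynomial on the same number $\mathfrak{n}$ of blocks, evaluated at the same point $\tau$. Applying Theorem~\ref{prop-BRW-comp} with $l=\mathfrak{n}$ to each of the $c$ streams gives $\floor{\mathfrak{n}/2}$ unreduced multiplications and $1+\floor{\mathfrak{n}/4}$ reductions per stream, hence the stated $c\floor{\mathfrak{n}/2}$ unreduced multiplications and $c(1+\floor{\mathfrak{n}/4})$ reductions in aggregate. The point I would stress is that the squarings demanded by Theorem~\ref{prop-BRW-comp}, namely $\tau^2,\tau^4,\ldots,\tau^{2^{\lfloor\lg\mathfrak{n}\rfloor}}$, depend only on $\tau$ and on $\mathfrak{n}$ and are therefore shared by all $c$ streams; computing them a single time accounts for the $\floor{\lg\mathfrak{n}}$ squarings, rather than $c$ times as many. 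This sharing is exactly what underlies the efficiency claim and is the reason the squaring count does not scale with $c$.

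For stage (iii) I would note that, since $2^{\lfloor\lg\mathfrak{n}\rfloor}=d/2$, the largest BRW key power already equals $\tau^{d/2}$, so the power $\tau^d$ needed in~\eqref{eqn-Q} is produced from an already available quantity by one further multiplication. The polynomial $Q_{c+1}(x)=\sym{Poly}(x^d;Q_1(x),\ldots,Q_c(x))$ is then evaluated at $\tau$ by Horner's rule in $\tau^d$ over the $c$ scalars $Q_1(\tau),\ldots,Q_c(\tau)$, costing $c-1$ field multiplications and $c-1$ additions. Finally the outer wrapping $Q(\tau)=\tau(\tau\cdot Q_{c+1}(\tau)+\sym{bin}_{m-1}(L))$ of~\eqref{eqn-decBRWHashpoly}, which multiplies into the Horner result and folds in the length word $\sym{bin}_{m-1}(L)$, contributes the remaining multiplications; collecting the combining operations yields the claimed $c+1$ field multiplications.

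The step I expect to be the main obstacle is the careful bookkeeping that keeps the four tallies disjoint and free of double counting, rather than any single hard estimate. Three points deserve attention. First, I must argue that the $c$ streams genuinely reuse one copy of $\tau^2,\ldots,\tau^{d/2}$, so that the squaring cost is incurred once and is $\floor{\lg\mathfrak{n}}$. Second, I must classify the few combining operations consistently against the field-multiplication tally --- the auxiliary power $\tau^d$, the $c-1$ Horner multiplications, and the multiplications of the outer wrapping --- taking care, exactly as in the delayed-reduction analysis behind $\sym{BRWHash}$, that these do not leak spurious reductions into the BRW tallies. Third, because $\tau^2$ is already available and $\sym{bin}_{m-1}(L)$ is a small quantity, the wrapping admits the usual minor savings that must be respected to arrive at exactly $c+1$. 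As a sanity check I would specialise to $c=1$, where Remark~\ref{rem-brw-gen} identifies the construction with $\sym{BRWHash}$ and the Horner stage vanishes, and confirm that the remaining counts collapse to the figures already recorded for $\sym{BRWHash}$.
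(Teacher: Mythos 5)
Your decomposition and per-stream application of Theorem~\ref{prop-BRW-comp}, together with the observation that the key powers $\tau^2,\tau^4,\ldots,\tau^{2^{\floor{\lg\mathfrak{n}}}}$ are computed once and shared across the $c$ streams, is exactly the paper's argument, and your tallies of $c\floor{\mathfrak{n}/2}$ unreduced multiplications, $c(1+\floor{\mathfrak{n}/4})$ reductions and $\floor{\lg\mathfrak{n}}$ squarings are correct. The gap is in your field-multiplication count for the combining stage. You enumerate three sources: one multiplication to produce $\tau^d$ from $\tau^{d/2}$, $c-1$ multiplications for Horner's rule on $Q_1(\tau),\ldots,Q_c(\tau)$, and the outer wrapping $\tau(\tau\cdot Q_{c+1}(\tau)+\sym{bin}_{m-1}(L))$ of~\eqref{eqn-decBRWHashpoly}. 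The wrapping costs exactly two field multiplications, so your list sums to $1+(c-1)+2=c+2$, not the claimed $c+1$, and the ``usual minor savings'' you invoke to close this discrepancy do not exist: having $\tau^2$ available does not help, since $\tau^2\cdot Q_{c+1}(\tau)+\tau\cdot\sym{bin}_{m-1}(L)$ is still two multiplications, and the numerical smallness of $\sym{bin}_{m-1}(L)$ does not reduce the count of field multiplications.

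The paper arrives at $c+1$ by a bookkeeping convention, not by a savings: the squaring $\tau^{d/2}\mapsto\tau^d$ is simply not charged to the evaluation, being treated as part of the key-power pre-computation. This is visible in Table~\ref{tab-oc}, where $c\mbox{-}\sym{decBRWHash}$ is listed with $1+\floor{\lg\lceil\ell/c\rceil}$ pre-computation multiplications (the ``$+1$'' is precisely the $\tau^d$ squaring) and $2+\floor{\lg\lceil\ell/c\rceil}$ stored key powers, and in the sentence following the proposition stating that the powers up to $\tau^{2^{\floor{\lg\mathfrak{n}}+1}}$ are stored. With that convention the combining stage is exactly $(c-1)+2=c+1$ field multiplications and all four tallies are consistent. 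So the repair to your write-up is not a cleverer wrapping but a decision about where the $\tau^d$ squaring lives: either exclude it from the proposition's counts as pre-computation, as the paper does, or accept a statement with one extra squaring. As written, your enumerated operations contradict the total you claim.
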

\begin{proof}
	From Theorem~\ref{prop-BRW-comp}, computing each $Q_i$, $i=1,\ldots,c$, requires $\lfloor \mathfrak{n}/2\rfloor$ unreduced multiplications and 
	$1+\floor{\mathfrak{n}/4}$ reductions. The key powers $\tau^2,\tau^4,\ldots,\tau^{2^{\lfloor\lg \mathfrak{n}\rfloor}}$ are required in the computation
	of all the $Q_i$'s, and are computed only once using $\floor{\lg\mathfrak{n}}$ field squarings. Computing $Q_{c+1}$ from $Q_1,\ldots,Q_c$ requires
	$c-1$ field multiplications, and computing $Q$ from $Q_{c+1}$ requires two additional field multiplications. 
\end{proof}
For $c>1$, the key powers $\tau,\tau^2,\tau^4,\ldots,\tau^{2^{\floor{\lg\mathfrak{n}}+1}}$ are required to be stored. See Table~\ref{tab-oc} which compares the
operation counts and storage requirements for $\sym{polyHash}$ and $c\mbox{-}\sym{BRWHash}$. 
Compared to $\sym{BRWHash}$, for a small value of $c>1$, the hash function $c\mbox{-}\sym{BRWHash}$ requires a few extra unreduced multiplications and reductions, and a 
little less storage. Computed sequentially, both $\sym{BRWHash}$ and $c\mbox{-}\sym{BRWHash}$ have similar efficiencies for message which are longer than a few blocks 
(for short messages $\sym{BRWHash}$ will be faster than $c\mbox{-}\sym{BRWHash}$).
The main advantage of $c\mbox{-}\sym{BRWHash}$ is that can be implemented using SIMD operations, as we describe later.

\begin{table}
\centering
{\scriptsize
\begin{tabular}{|l|c|c|c|c|}
\cline{2-5}
\multicolumn{1}{c|}{} & unred mult & red & storage & pre-comp (mult) \\ \hline
	$\sym{polyHash}$ & $\ell$ & $\lceil\ell/g\rceil$ & $g$ & $g-1$ \\ \hline
$\sym{BRWHash}$ & $2+\lfloor \ell/2\rfloor$ & $3+\lfloor \ell/4\rfloor$ & $1+\lfloor \lg \ell\rfloor$ & $\lfloor \lg \ell\rfloor$ \\ \hline
$c\mbox{-}\sym{decBRWHash}$ 
	& $1+c(1+\lfloor \lceil\ell/c\rceil/2\rfloor)$ 
	& $1+c(2+\lfloor \lceil\ell/c\rceil/4\rfloor)$ 
	& $2+\lfloor \lg \lceil\ell/c\rceil \rfloor$ 
	& $1+\lfloor \lg \lceil\ell/c\rceil \rfloor$ \\ \hline
\end{tabular}
	\caption{Operation counts and storage requirement for the hash functions for $\ell$ message blocks. For $\sym{polyHash}$, the parameter $g$ is a positive integer. 
	For $c\mbox{-}\sym{decBRWHash}$, $c>1$.  \label{tab-oc} }
}
\end{table}

\paragraph{Naming convention.}
We adopt the following naming convention. For all the hash functions considered in this paper, there are two possible sets of parameters in Table~\ref{tab-params}. The choice
of the prime $p$ determines the values of $m$, $k$, $n$ and $\mu$. So for each of the hash functions, by specifying the value of $p$, we obtain two different instantiations.
If $p$ is chosen to be $2^{127}-1$, we append 1271 to the name of the hash function, and if $p$ is chosen to be $2^{130}-5$, we append 1305 to the name of the hash function. 

\subsection{AXU bounds\label{sec-AXU-bnd}}

The following result from~\cite{BNS2025} states the basic property of $\sym{pad}2$.
\begin{lemma}[Lemma 4.3 of~\cite{BNS2025}]\label{lem-str2str}
Let $X$ be a binary string of length $L\geq 0$. Then the map $X\mapsto \sym{pad}2(\sym{format}(X))$ is an injection.
\end{lemma}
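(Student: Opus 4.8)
The plan is to prove injectivity by constructing a left inverse, i.e.\ by showing that the original string $X$ can be reconstructed unambiguously from the tuple $\sym{pad}2(\sym{format}(X))$. Equivalently, I would suppose that two binary strings $X$ and $X'$, of lengths $L$ and $L'$, satisfy $\sym{pad}2(\sym{format}(X)) = \sym{pad}2(\sym{format}(X'))$, and then deduce $X = X'$ one step at a time.

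The first step is to recover the length. Writing the two outputs as $(M_1,\ldots,M_\ell,\sym{bin}_{m-1}(L))$ and $(M_1',\ldots,M_{\ell'}',\sym{bin}_{m-1}(L'))$, their final coordinates must coincide, so $\sym{bin}_{m-1}(L) = \sym{bin}_{m-1}(L')$. Since $\sym{pad}2$ is only applied to strings of length less than $2^{m-1}$, both $L$ and $L'$ lie in $\{0,\ldots,2^{m-1}-1\}$, a range on which $j \mapsto \sym{bin}_{m-1}(j)$ is injective; hence $L = L'$. The formatting parameters depend on $L$ alone: $\ell = \ceil{L/n}$ (with $\ell = 0$ when $L=0$) and, for $L>0$, the final block length is $s = L-(\ell-1)n \in \{1,\ldots,n\}$. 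Therefore $L=L'$ forces $\ell=\ell'$ and $s=s'$, so the two tuples have the same length and the same block-length pattern, and componentwise equality gives $M_i = M_i'$ for every $i$.

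The last step is to strip the padding. For $1 \le i \le \ell-1$ we have $M_i = 0^{m-n-1}||X_i$, a fixed prefix of $m-n-1$ zeros followed by the $n$-bit block $X_i$, while $M_\ell = 0^{m-s-1}||X_\ell$ has a zero prefix of the now-known length $m-s-1$ followed by $X_\ell$. Because all these prefix lengths are determined by $L$ (through $n$ and $s$), discarding them from each equality $M_i = M_i'$ yields $X_i = X_i'$ for all $i$. Concatenating, $X = X_1||\cdots||X_\ell = X_1'||\cdots||X_\ell' = X'$, which establishes injectivity. The empty-string case $L=0$ is covered automatically: then both tuples equal the singleton $(\sym{bin}_{m-1}(0))$ and both strings are empty.

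The one delicate point, and the reason the length field is present at all, is the recovery of the block boundaries. After padding, every block $M_1,\ldots,M_\ell$ has the identical length $m-1$, so the tuple of padded blocks on its own does not reveal where the original blocks ended; in particular a genuine final block $X_\ell$ beginning with several zeros is indistinguishable from a shorter block padded with more leading zeros. This is precisely the ambiguity that the appended field $\sym{bin}_{m-1}(L)$ resolves, by pinning down $L$ and hence $\ell$ and $s$ before any unpadding is attempted. Once $L$ is known, the remaining steps are purely mechanical, so I expect this disambiguation to be the only substantive content of the argument.
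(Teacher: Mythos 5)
Your proof is correct. Note that this paper does not actually contain a proof of this lemma---it is imported by citation as Lemma~4.3 of~\cite{BNS2025}---so there is no internal argument to compare against; your construction of an explicit left inverse (recover $L$ from the final component $\sym{bin}_{m-1}(L)$, hence $\ell$ and $s$, then strip the zero prefixes of now-known lengths) is the standard argument for such padding schemes, and your observation that the appended length field is what resolves the leading-zero ambiguity is precisely the point of including it.
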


\begin{lemma}\label{lem-str2poly}
	Let $X$ be a binary string of length $L\geq 0$. Let $\ell=\lceil L/n\rceil$.
	Suppose $(M_1,\allowbreak \ldots,\allowbreak M_{\ell},\allowbreak \sym{bin}_{m-1}(L))$ is the output of $\sym{pad}2(\sym{format}(X))$ and
	$Q(x;M_1,\ldots,M_{\ell},\sym{bin}_{m-1}(L))$ is the polynomial constructed from $X$ as in~\eqref{eqn-decBRWHashpoly}. Then 
	$X \mapsto Q(x;M_1,\ldots,M_{\ell},\sym{bin}_{m-1}(L))$ is an injection.
\end{lemma}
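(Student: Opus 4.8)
The plan is to reduce the statement to the injectivity of $\sym{pad}2\circ\sym{format}$ already recorded in Lemma~\ref{lem-str2str}, by showing that the whole padded sequence $(M_1,\ldots,M_{\ell},\sym{bin}_{m-1}(L))$ can be read back out of $Q(x)$ alone. Equivalently, I would argue by contradiction: assume $X\neq X'$ yet $Q(x)=Q'(x)$, and deduce that the two padded sequences coincide, contradicting Lemma~\ref{lem-str2str}.

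The first step is to recover the length. From~\eqref{eqn-decBRWHashpoly} we have $Q(x)=x^2Q_{c+1}(x)+x\cdot\sym{bin}_{m-1}(L)$, so the constant coefficient of $Q(x)$ is $0$ and the coefficient of $x$ is exactly $\sym{bin}_{m-1}(L)$. Since the message length satisfies $L<2^{m-1}$, the map $L\mapsto\sym{bin}_{m-1}(L)$ is injective, so this single coefficient pins down $L$, and with it $\ell=\lceil L/n\rceil$, $\mathfrak{n}=\lceil\ell/c\rceil$, $\mathfrak{m}=c\mathfrak{n}$ and $d$. The empty string is handled at the same time: if $L>0$ then $\sym{bin}_{m-1}(L)\neq 0$, so $Q(x)\neq 0$, whereas $L=0$ forces $Q(x)=0$; hence the zero polynomial has the empty string as its unique preimage and I may assume $L=L'>0$ from here on, with all structural parameters of $X$ and $X'$ now equal.

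The heart of the argument is the separation of $Q_{c+1}$ into its components. Having fixed $\sym{bin}_{m-1}(L)$, I recover $Q_{c+1}(x)$ by shifting every coefficient of $x^{j}$ with $j\geq 2$ in $Q(x)$ down to $x^{j-2}$. I then need to split $Q_{c+1}(x)=\sum_{i=1}^{c}x^{(c-i)d}Q_i(x)$ from~\eqref{eqn-Q} back into $Q_1,\ldots,Q_c$. This is exactly where the choice $d=2^{\lfloor\lg\mathfrak{n}\rfloor+1}$ is used: by part~2 of Theorem~\ref{thm-BRW-basic} each $Q_i$ is a $\sym{BRW}$ polynomial on $\mathfrak{n}$ blocks with $\sym{deg}(Q_i)\leq\mathfrak{d}(\mathfrak{n})=2^{\lfloor\lg\mathfrak{n}\rfloor+1}-1=d-1<d$ for $\mathfrak{n}\geq 3$, and the two small cases $\mathfrak{n}\in\{1,2\}$ are checked directly (there the degree is at most $1$ while $d\in\{2,4\}$, so the bound $\sym{deg}(Q_i)<d$ holds with room to spare). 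Consequently the terms $x^{(c-i)d}Q_i(x)$ occupy pairwise disjoint coefficient ranges $[(c-i)d,(c-i+1)d-1]$ that partition $[0,cd-1]$, so each $Q_i$ is recovered uniquely as the corresponding base-$x^{d}$ ``digit'' of $Q_{c+1}$.

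Finally, with each $Q_i$ in hand and $\mathfrak{n}$ known, part~1 of Theorem~\ref{thm-BRW-basic} (injectivity of the $\sym{BRW}$ map for a fixed number of blocks) returns the decimated tuple $(M_i,M_{c+i},\ldots,M_{\mathfrak{m}-c+i})$ for each $i$; interleaving these streams reconstructs $(M_1,\ldots,M_{\mathfrak{m}})$, and dropping the known padding blocks $M_{\ell+1}=\cdots=M_{\mathfrak{m}}=0^{m-1}$ leaves $(M_1,\ldots,M_{\ell})$. Thus $Q(x)$ determines $\sym{pad}2(\sym{format}(X))$, and Lemma~\ref{lem-str2str} forces $X=X'$. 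I expect the only delicate point to be the degree bookkeeping underlying the base-$x^{d}$ separation, namely verifying $\sym{deg}(Q_i)<d$ uniformly over all $\mathfrak{n}\geq 1$ (the non-monic cases $\mathfrak{n}\in\{1,2\}$ must be treated by hand); everything else is routine coefficient extraction together with the two injectivity facts quoted above.
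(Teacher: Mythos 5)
Your proof is correct, and it runs the paper's argument in the inverse direction. The paper argues by difference propagation: assuming $X\neq X^\prime$, it splits into the cases $L\neq L^\prime$ (where the coefficients of $x$ already differ) and $L=L^\prime$ (where injectivity of $\sym{pad}2$, Lemma~\ref{lem-str2str}, gives a differing block, part~1 of Theorem~\ref{thm-BRW-basic} gives $Q_\jmath\neq Q_\jmath^\prime$, and the degree bound on the $Q_i$'s makes the coefficient windows inside $Q_{c+1}$ disjoint, so the difference survives into $Q$). You instead prove explicit decodability: $Q$ determines $\sym{bin}_{m-1}(L)$, hence $L,\ell,\mathfrak{n},d$, hence $Q_{c+1}$, hence each base-$x^d$ digit $Q_i$, hence each decimated stream by injectivity of $\sym{BRW}$, hence the padded sequence, after which Lemma~\ref{lem-str2str} finishes. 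The ingredients are identical (coefficient of $x$, Lemma~\ref{lem-str2str}, both parts of Theorem~\ref{thm-BRW-basic}, and the disjoint-window observation coming from $\sym{deg}(Q_i)<d$), so the difference is organizational rather than substantive; what your version buys is the slightly stronger fact that the map has an explicit inverse on its image. One point where your write-up is genuinely more careful than the paper's: the paper states that the degree of each $Q_i$ equals $2^{\floor{\lg\mathfrak{n}}+1}-1=d-1$ for every $\mathfrak{n}\geq 1$, citing part~2 of Theorem~\ref{thm-BRW-basic}, but that theorem gives this exact degree only for $\mathfrak{n}\geq 3$; for $\mathfrak{n}\in\{1,2\}$ the degree of $Q_i$ is at most $1$, not $d-1$. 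The weaker bound $\sym{deg}(Q_i)<d$ is all that either argument actually needs, and you are the one who flags and verifies it separately for $\mathfrak{n}\in\{1,2\}$, thereby closing a small imprecision in the paper's own proof.
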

\begin{proof}
Let $X$ and $X^\prime$ be two distinct binary strings of lengths $L$ and $L^\prime$ respectively. We assume without loss of generality that
	$L\geq L^\prime\geq 0$. 
	Let $\ell = \ceil{L/n}$, $\mathfrak{n}=\ceil{\ell/c}$, $\mathfrak{m}=c\mathfrak{n}$, $d=2^{\floor{\lg \mathfrak{n}}+1}$, 
	and $\ell^\prime=\ceil{L^\prime/n}$, $\mathfrak{n}^\prime=\ceil{\ell^\prime/c}$, $\mathfrak{m}^\prime=c\mathfrak{n}^\prime$, $d^\prime=2^{\floor{\lg \mathfrak{n}^\prime}+1}$.
	Let $(M_1,\ldots,M_\ell,\sym{bin}_{m-1}(L))$ be the output of $\sym{pad}2(\sym{format}(X))$, and
	let $(M_1^\prime,\ldots,M_\ell^\prime,\sym{bin}_{m-1}(L^\prime))$ be the output of $\sym{pad}2(\sym{format}(X^\prime))$.

	Let $Q_1(x),\ldots,Q_c(x),Q_{c+1}$ and $Q(x)$ be the polynomials arising from $X$, and 
	$Q_1^\prime(x),\allowbreak \ldots,\allowbreak Q_c^\prime(x),\allowbreak Q_{c+1}^\prime(x)$ and $Q^\prime(x)$ be the polynomials arising from $X^\prime$. By construction, 
	the coefficient of $x$ in $Q(x)$ is $\sym{bin}_{m-1}(L)$ and 
	the coefficient of $x$ in $Q^\prime(x)$ is $\sym{bin}_{m-1}(L^\prime)$. So if $L\neq L^\prime$, then $Q(x)\neq Q^\prime(x)$. 

	Now suppose that $L=L^\prime$, which implies $\sym{bin}_{m-1}(L)=\sym{bin}_{m-1}(L^\prime)$, 
	$\ell=\ell^\prime$, $\mathfrak{n}=\mathfrak{n}^\prime$, $\mathfrak{m}=\mathfrak{m}^\prime$, and $d=d^\prime$. Since there is exactly one string
	of length $0$, $L=L^\prime$ and $X\neq X^\prime$ implies that both the strings $X$ and $X^\prime$ are non-empty and so $\ell=\ell^\prime>0$.
	Since $X\neq X^\prime$, by the injectivity of $\sym{pad}2$ (see Lemma~\ref{lem-str2str}), 
	$(M_1,\ldots,M_\ell,\sym{bin}_{m-1}(L))\neq (M_1^\prime,\ldots,M_\ell^\prime,\sym{bin}_{m-1}(L^\prime))$. Since $\sym{bin}_{m-1}(L)=\sym{bin}_{m-1}(L^\prime)$,
	it follows that $(M_1,\ldots,M_\ell)\neq (M_1^\prime,\ldots,M_\ell^\prime)$. Let $\imath$ be such that $M_\imath\neq M_\imath^\prime$, and suppose that
	$\imath =\jmath +cj$, for some $\jmath\in\{1,\ldots,c\}$. By construction
	\begin{eqnarray*}
		Q_\jmath(x) & = & \sym{BRW}(x;M_\jmath,M_{\jmath+c},\ldots,M_{\jmath+c(j-1)},M_{\jmath+cj},M_{\jmath+c(j+1)},\ldots,M_{\mathfrak{m}-c+\jmath}), \\
		Q^\prime_\jmath(x) 
		& = & \sym{BRW}(x;M^\prime_\jmath,M^\prime_{\jmath+c},\ldots,M^\prime_{\jmath+c(j-1)},
		       M^\prime_{\jmath+cj},M^\prime_{\jmath+c(j+1)},\ldots,M^\prime_{\mathfrak{m}-c+\jmath}).
	\end{eqnarray*}
	Since $M_{\jmath+cj}=M_\imath\neq M_\imath^\prime=M^\prime_{\jmath+cj}$, by the injectivity of BRW polynomials (first point of Theorem~\ref{thm-BRW-basic}),
	$Q_\jmath(x)\neq Q^\prime_\jmath(x)$.

	For each $i=1,\ldots,c$, both $Q_i(x)$ and $Q_i^\prime(x)$ are BRW polynomials built from $\mathfrak{n}$ blocks (where $\mathfrak{n}>0$ since $\ell>0$) and 
	hence from the second point of Theorem~\ref{thm-BRW-basic}, 
	the degree of both $Q_i(x)$ and $Q_i^\prime(x)$ is $2^{\floor{\lg \mathfrak{n}}+1}-1=d-1$. From the definition of $Q_{c+1}(x)$ in~\eqref{eqn-Q}, the
	coefficients of $Q_{c+1}(x)$ are exactly the coefficients of $Q_i(x)$, $i=1,\ldots,c$, and similarly the coefficients of $Q_{c+1}^\prime(x)$ are exactly the coefficients of
	$Q_i^\prime(x)$, $i=1,\ldots,c$. Since $Q_\jmath(x)\neq Q_\jmath^\prime(x)$, it follows that $Q_{c+1}(x)\neq Q_{c+1}^\prime(x)$ and hence $Q(x)\neq Q^\prime(x)$.
\end{proof}

\begin{lemma}\label{lem-deg} 
Let $X$ be a binary string of length $L\geq 1$ and $n$ be a positive integer. Let $\ell=\lceil L/n\rceil$. 
Let $(M_1,\ldots,M_{\ell},\sym{bin}_{m-1}(L))$ be the output of $\sym{pad}2(\sym{format}(X))$. Then the following holds.
\begin{compactenum}
	\item If $c\mid \ell$, then $\ell+1 < \sym{deg}(Q(x;M_1,\ldots,M_{\ell},\sym{bin}_{m-1}(L))) \leq 2\ell+1$.
	\item If $c\nmid \ell$, then $\ell+1 < \sym{deg}(Q(x;M_1,\ldots,M_{\ell},\sym{bin}_{m-1}(L))) < 2\ell+2c+1$.
\end{compactenum}
\end{lemma}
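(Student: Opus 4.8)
The plan is to reduce the degree of $Q(x)$ to the degree of the combining polynomial $Q_{c+1}(x)$ from~\eqref{eqn-Q}, and then to read off $\sym{deg}(Q_{c+1}(x))$ from the ``base-$x^d$'' structure of that equation. First I would rewrite~\eqref{eqn-decBRWHashpoly} as $Q(x)=x^2\,Q_{c+1}(x)+\sym{bin}_{m-1}(L)\cdot x$, so that whenever $Q_{c+1}(x)\neq 0$ one has $\sym{deg}(Q(x))=\sym{deg}(Q_{c+1}(x))+2$: the length term sits at degree $1$ and can never reach the leading term because $\sym{deg}(Q_{c+1})\geq 0$. This reduces the whole lemma to bounding $\sym{deg}(Q_{c+1}(x))$.

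Next I would use that each $Q_i(x)$ is a BRW polynomial on $\mathfrak{n}$ blocks, where $L\geq 1$ forces $\ell\geq 1$ and hence $\mathfrak{n}\geq 1$. By the degree statement of Theorem~\ref{thm-BRW-basic}, $\sym{deg}(Q_i)\leq 2^{\floor{\lg\mathfrak{n}}+1}-1=d-1<d$. Since each $Q_i$ has degree strictly below $d$, the $c$ summands $x^{(c-i)d}Q_i(x)$ of~\eqref{eqn-Q} occupy the pairwise disjoint exponent bands $[(c-i)d,\,(c-i+1)d-1]$, which tile $[0,cd-1]$. Hence no cancellation occurs between distinct summands, $Q_{c+1}=0$ exactly when all $Q_i=0$, and otherwise $\sym{deg}(Q_{c+1})=(c-\imath^\ast)d+\sym{deg}(Q_{\imath^\ast})$ where $\imath^\ast$ is the least index with $Q_{\imath^\ast}\neq 0$. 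In particular $\sym{deg}(Q_{c+1})\leq(c-1)d+(d-1)=cd-1$, so $\sym{deg}(Q)\leq cd+1$ in every case.

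For the upper bounds I would combine $\sym{deg}(Q)\leq cd+1$ with $d=2^{\floor{\lg\mathfrak{n}}+1}\leq 2\mathfrak{n}$, giving $cd\leq 2c\mathfrak{n}=2\mathfrak{m}$. When $c\mid\ell$ we have $\mathfrak{m}=\ell$, yielding $\sym{deg}(Q)\leq 2\ell+1$; when $c\nmid\ell$ a short count gives $\mathfrak{m}=\ell+(c-(\ell\bmod c))\leq\ell+c-1$, yielding $\sym{deg}(Q)\leq 2\ell+2c-1<2\ell+2c+1$. For the lower bound the key is that the top band is nonempty, i.e. $Q_1\neq 0$ with $\sym{deg}(Q_1)=d-1$. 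This is immediate for $\mathfrak{n}\geq 3$, where the degree statement of Theorem~\ref{thm-BRW-basic} together with the monic property of $\sym{BRW}$ on at least three blocks makes every $Q_i$ monic of degree $d-1$; then $\imath^\ast=1$, $\sym{deg}(Q_{c+1})=cd-1$, and $\sym{deg}(Q)=cd+1$. Since $\mathfrak{n}<d$ gives $\ell\leq\mathfrak{m}=c\mathfrak{n}<cd$, we get $\sym{deg}(Q)=cd+1\geq\ell+2>\ell+1$.

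The step I expect to be the main obstacle is precisely this lower bound in the short-message regime $\mathfrak{n}\in\{1,2\}$ (that is, $\ell\leq 2c$), where the BRW polynomials are no longer monic: for $\mathfrak{n}=1$ one has $Q_i=M_i$ and for $\mathfrak{n}=2$ one has $Q_i=M_ix+M_{c+i}$, so the leading coefficient of $Q_1$ may vanish and $\imath^\ast$ may be large or even undefined. Here the clean band argument no longer pins the top band, and I would instead proceed by a direct case analysis, tracking which of the blocks $M_1,\ldots,M_\ell$ are structurally present as a function of $\ell$ relative to $c$ and invoking the injectivity of $\sym{pad}2$ (Lemma~\ref{lem-str2str}) to control the least nonzero stream. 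This is the only place where the precise interaction between the padding and the value of $\ell$ enters, and it is where the estimate must be carried out with the most care.
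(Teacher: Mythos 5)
Your overall route is the same as the paper's: pin each $Q_i(x)$ at degree (at most) $d-1$, use the non-overlapping degree bands of~\eqref{eqn-Q} to get $\sym{deg}(Q)\leq cd+1$ with equality when the top stream has full degree, and then convert $cd$ into bounds in $\ell$ --- your inequalities $d\leq 2\mathfrak{n}$ and $\mathfrak{m}\leq \ell+c-1$ are the same arithmetic the paper carries out via $2^\rho\leq\ceil{\ell/c}<2^{\rho+1}$. If anything, your write-up of the main case is \emph{more} careful than the paper's, which (through Lemma~\ref{lem-str2poly}, citing Theorem~\ref{thm-BRW-basic}) simply asserts $\sym{deg}(Q_i)=d-1$ for every $i$ and hence $\sym{deg}(Q)=cd+1$ exactly, without examining leading coefficients.

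The genuine gap is the step you defer: the lower bound for $\mathfrak{n}\in\{1,2\}$ is not a case analysis that can be carried out, because the claimed bound is false there. Unlike $\sym{pad}1$, the padding $\sym{pad}2$ does not prepend a $1$ to each block, so blocks can be the zero field element. Take any $c$ and $X=0^n$: then $L=n$, $\ell=1$, $M_1=0^{m-1}=0$, every stream gives $Q_i(x)=0$, so $Q_{c+1}(x)=0$ and $Q(x)=\sym{bin}_{m-1}(L)\cdot x$, which has degree $1$, whereas the lemma demands degree strictly greater than $\ell+1=2$. Injectivity of $\sym{pad}2$ (Lemma~\ref{lem-str2str}) cannot rescue this: it compares two distinct messages and says nothing about a single message all of whose blocks vanish, so there may be no ``least nonzero stream'' to control. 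In other words, the correct scope of your band argument ($\mathfrak{n}\geq 3$, i.e.\ $\ell>2c$) is also the correct scope of the lemma's lower bound. You have in fact located a flaw that the paper's own proof silently shares: the degree formula of Theorem~\ref{thm-BRW-basic} holds only for at least three blocks, where BRW is monic, yet the paper applies it down to one or two blocks. Since the proof of Theorem~\ref{thm-AXUbnddecBRWHash} uses only the upper bounds --- which your argument establishes completely, and without assuming exact degrees --- nothing downstream is affected; but as a proof of the lemma as stated, both your proposal and the paper's argument fail on the same short-message regime.
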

\begin{proof}
	As argued in the proof of Lemma~\ref{lem-str2poly} the degree of $Q_i(x)$ is $d-1$ for $i=1,\ldots,c$. So from~\eqref{eqn-Q}, the degree of 
	$Q_{c+1}(x)$ is $cd-1$ and hence the degree of $Q(x)$ is $cd+1$, where $d=2^{\floor{\lg \mathfrak{n}}+1}$, and $\mathfrak{n}=\ceil{\ell/c}$. 
	Suppose $\floor{\lg \mathfrak{n}}=\rho$, i.e. $2^\rho\leq \mathfrak{n}=\ceil{\ell/c} <2^{\rho+1}$. So the degree of $Q(x)$ is $c2^{\rho+1}+1$. 

	If $c\mid \ell$, then $c2^\rho \leq \ell < c2^{\rho+1}$ from which we obtain the first point.
	If $c\nmid \ell$, then write $\ell/c = a+f$, where $a$ is an integer and $0<f<1$. So $\ceil{\ell/c}=a+1$ and 
	$2^\rho \leq a+1 < 2^{\rho+1}$. Using $a=\ell/c-f$, we obtain $c2^\rho \leq \ell + c(1-f) < c2^{\rho+1}$. This yields
	$\ell+1+c(1-f) < c2^{\rho+1}+1 \leq 2\ell + 2c(1-f) + 1$. Since $0<f<1$, we obtain the second point.
\end{proof}

\begin{theorem}\label{thm-AXUbnddecBRWHash}
	Let $p=2^m-\delta$ be a prime and $\mu$ be a positive integer such that $\mu<m$ and $\delta < 2^\mu-1$.
	Let $X$ and $X^{\prime}$ be two distinct binary strings
	of lengths $L$ and $L^{\prime}$ respectively with $L\geq L^{\prime} \geq 0$, and $\alpha$ be an element of $\mathbb{Z}_{2^\mu}$. Let $\ell=\lceil L/n\rceil$.
	Suppose $\tau$ is chosen uniformly at random from $\{0,1\}^k$. Then the following holds.
	\begin{compactenum}
	\item If $c=1$, then $\Pr[c\mbox{-}\sym{decBRWHash}_{\tau}(X)-c\mbox{-}\sym{decBRWHash}_{\tau}(X^{\prime})=\alpha] \leq (2\ell+1)\cdot 2^{m-k-\mu+1}$.
	\item If $c>1$, then $\Pr[c\mbox{-}\sym{decBRWHash}_{\tau}(X)-c\mbox{-}\sym{decBRWHash}_{\tau}(X^{\prime})=\alpha] < (2\ell+2c+1)\cdot 2^{m-k-\mu+1}$.
	\end{compactenum}
\end{theorem}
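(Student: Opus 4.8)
The plan is to reduce the differential probability to a polynomial-degree bound through Lemma~\ref{lem-basic}, in the same spirit as the proof of Theorem~\ref{thm-AXUbndPolyHash}, and then to supply that degree bound using Lemma~\ref{lem-str2poly} and Lemma~\ref{lem-deg}.

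First I would set up the reduction. Let $Q(x)$ and $Q^\prime(x)$ be the polynomials arising from $X$ and $X^\prime$ via~\eqref{eqn-decBRWHashpoly}. By the definition in~\eqref{eqn-SIMDBRWHash}, the quantity $c\mbox{-}\sym{decBRWHash}_\tau(X)-c\mbox{-}\sym{decBRWHash}_\tau(X^\prime)$ equals $((Q(\tau)\bmod p)\bmod 2^\mu)-((Q^\prime(\tau)\bmod p)\bmod 2^\mu)$. Since $x$ is a factor in~\eqref{eqn-decBRWHashpoly}, both $Q(0)=0$ and $Q^\prime(0)=0$. Because $X\neq X^\prime$, Lemma~\ref{lem-str2poly} gives $Q(x)\neq Q^\prime(x)$, so $Q$ and $Q^\prime$ are distinct polynomials with vanishing constant term. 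Lemma~\ref{lem-basic} then applies directly and yields $\Pr[c\mbox{-}\sym{decBRWHash}_\tau(X)-c\mbox{-}\sym{decBRWHash}_\tau(X^\prime)=\alpha]\leq 2^{m-k-\mu+1}\cdot\sym{deg}(Q(x)-Q^\prime(x))$.

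It remains to bound $\sym{deg}(Q(x)-Q^\prime(x))$. I would first note that since $L\geq L^\prime\geq 0$ and $X\neq X^\prime$, the longer string $X$ is non-empty, so $L\geq 1$ and Lemma~\ref{lem-deg} is available for $Q(x)$. Next I would establish monotonicity: the proof of Lemma~\ref{lem-deg} shows $\sym{deg}(Q(x))=c\cdot 2^{\lfloor\lg\lceil\ell/c\rceil\rfloor+1}+1$, which is a non-decreasing function of $\ell$, so from $\ell\geq\ell^\prime$ we get $\sym{deg}(Q(x))\geq\sym{deg}(Q^\prime(x))$ (and when $X^\prime$ is empty this is immediate since $Q^\prime(x)=0$). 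Consequently $\sym{deg}(Q(x)-Q^\prime(x))\leq\max(\sym{deg}(Q(x)),\sym{deg}(Q^\prime(x)))=\sym{deg}(Q(x))$. Finally I would feed in Lemma~\ref{lem-deg}: for $c=1$ one always has $c\mid\ell$, whence $\sym{deg}(Q(x))\leq 2\ell+1$ and the first bound follows; for $c>1$ both the case $c\mid\ell$ (where $\sym{deg}(Q(x))\leq 2\ell+1<2\ell+2c+1$) and the case $c\nmid\ell$ (where $\sym{deg}(Q(x))<2\ell+2c+1$ directly) give $\sym{deg}(Q(x)-Q^\prime(x))<2\ell+2c+1$, and substitution into the Lemma~\ref{lem-basic} bound produces the second, strict inequality. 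For $c=1$ one could alternatively simply quote the $\sym{BRWHash}$ bound from Theorem~\ref{thm-AXUbndPolyHash}, since by Remark~\ref{rem-brw-gen} the two hash functions coincide.

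I expect the main obstacle to be the degree-of-difference step, specifically the monotonicity argument combined with the case $L=L^\prime$. When $L=L^\prime$ the two polynomials share the same degree and their leading coefficients may cancel, so $\sym{deg}(Q(x)-Q^\prime(x))$ can strictly drop; this only helps the upper bound but should be stated explicitly rather than glossed over. The monotonicity of $\sym{deg}(Q(x))$ in $\ell$, though intuitively obvious, is where care is needed, since the degree is a step function of $\lceil\ell/c\rceil$ and not a continuous function of $\ell$.
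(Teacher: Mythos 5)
Your proposal is correct and follows essentially the same route as the paper's proof: Lemma~\ref{lem-str2poly} for distinctness of $Q(x)$ and $Q^\prime(x)$, Lemma~\ref{lem-basic} to reduce the differential probability to the degree of $Q(x)-Q^\prime(x)$, and Lemma~\ref{lem-deg} for the degree bound, with the $c=1$ and $c>1$ cases split exactly as in the paper. The only difference is that you spell out the step $\sym{deg}(Q(x)-Q^\prime(x))\leq \sym{deg}(Q(x))$ via monotonicity of the degree in $\ell$ (and the empty-string case $Q^\prime(x)=0$), which the paper's proof uses silently when it bounds the probability by $2^{m-k-\mu+1}$ times the degree of $Q(x)$ alone.
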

\begin{proof}
	Since $X$ and $X^\prime$ are distinct, by Lemma~\ref{lem-str2poly}, the corresponding polynomials $Q(x)$ and $Q^\prime(x)$ are also distinct. 
	Further, by construction the constant terms of both $Q(x)$ and $Q^\prime(x)$ are zero. Using Lemma~\ref{lem-basic}, the required probability
	is at most $2^{m-k-\mu+1}$ times the degree of $Q(x)$. From Lemma~\ref{lem-deg}, the degree of $Q(x)$ is at most $2\ell+1$ if $c=1$, and is less than $2\ell+2c+1$ if $c>1$.
\end{proof}
Note that for $c=1$, the AXU bound of $c\mbox{-}\sym{decBRWHash}$ is the same as that of $\sym{BRWHash}$. This is a consequence of the fact that
$c\mbox{-}\sym{decBRWHash}$ is a generalisation of $\sym{BRWHash}$.

\section{Field Arithmetic \label{sec-fld-arith}}
The focus of our implementation is SIMD operations. In particular, we focus on the {\tt avx2} instructions of Intel processors.
The presently available {\tt avx2} instructions determine how the elements of the field $\mathbb{F}_p$ are represented and the field arithmetic is performed. 
First we describe the representation of individual field elements and arithmetic for a pair of field elements. Later we describe how the representation of a single
field element can be lifted to a vector of 4 field elements, and how simultaneous arithmetic is performed on 4 pairs of field elements. 

The {\tt avx2} instructions allow applying the same operation simultaneously on four different pairs of operands. The basic data type is a 256-bit quantity which is
considered to be 4 64-bit words. Given two such 256-bit quantities, it is possible to simultaneously add or multiply the four pairs of 64-bit operands that arise
from the same 64-bit positions of the two 256-bit quantities. In particular, the instruction {\tt vpmuludq} performs 4 simultaneous multiplications and
{\tt vpaddq} performs 4 simultaneous additions; two other relevant instructions are {\tt vpand} (which performs 4 simultaneous bitwise AND operations),
{\tt vpsllq} (which performs 4 simultaneous left shifts), and {\tt vpsrlq} (which performs 4 simultaneous right shifts).

There is, however, no scope for handling overflow (i.e. the result of an arithmetic instruction is greater than or equal to $2^{64}$) with {\tt avx2}
instructions. So to ensure correctness of the computation, the result of the addition and multiplication instructions
must also fit within a 64-bit word. In particular, the add-with-carry operation is not available with {\tt avx2} instructions. 
Since there is no scope for handling overflow, to ensure the correctness of the results of addition and multiplication, the whole 64 bits of the operands 
cannot be information bits. For addition, at most the 63 least significant bits of the operands can contain information, so that the result of the addition is
at most a 64-bit quantity. For multiplication, at most the 32 least significant bits of the operands can contain information, so that the result of the 
multiplication is at most a 64-bit quantity. So in effect the {\tt avx2} instructions support 32-bit multiplication. 

\begin{remark}\label{rem-maax}
Intel processors also provide support for 64-bit integer multiplication. In particular, from the Haswell processor onwards three instructions,
	namely {\tt mulx}, {\tt adcx}, and {\tt adox}, are provided which allow a double carry chain multiplication and squaring to be 
	performed~\cite{OGVW12,OGV13,DBLP:journals/amco/0001022}. Implementations which utilise these instructions have been called {\tt maax} 
	implementations~\cite{DBLP:journals/amco/0001022}. For both $\sym{polyHash}$ and $\sym{BRWHash}$, {\tt maax} implementations were
	reported in~\cite{BNS2025,cryptoeprint:2025/1224} for both the primes $2^{130}-5$ and $2^{127}-1$. Later we compare the speeds of these {\tt maax} implementations
	with the speeds of the new {\tt avx2} implementations that are reported in this paper.
\end{remark}

Below we describe the representation and field arithmetic separately for the primes $2^{130}-5$ and $2^{127}-1$.

\subsection{Case of $p=2^{130}-5$ \label{subsec-fld-arith-p130-5}}
Elements in $\mathbb{F}_p$ are represented using 130-bit quantities.
An element $f$ of the field $\mathbb{F}_p$ is represented as a 5-limb quantity, where each limb is a 26-bit quantity, i.e.
\begin{eqnarray*}
	f & = & f_0 + f_1 2^{26} + f_2 2^{26\cdot 2} + f_3 2^{26\cdot 3} + f_4 2^{26\cdot 4},
\end{eqnarray*}
where each $f_i$ is a 26-bit quantity. We call the coefficients of the powers of $2^{26}$ to be the limbs of $f$. Sometimes we write 
$f$ as $(f_0,f_1,f_2,f_3,f_4)$. The reason for choosing base $2^{26}$ representation is that {\tt avx2} supports only 32-bit multiplications, so that
multiplication of two 26-bit operands results in a 52-bit operand which fits within a 64-bit word.

Suppose $e$ is another field element whose limbs are $e_0,e_1,e_2,e_3,e_4$. The product $ef\bmod p$ can be written as a 5-limb quantity 
$h=h_0+h_12^{26}+h_22^{26\cdot 2}+h_32^{26\cdot 3}+h_42^{26\cdot 4}$ as follows.
\begin{eqnarray} \label{eqn-5-limb-mult}
	\begin{array}{rcl}
		h_0 & = & e_0f_0 + 5(e_1f_4 + e_2f_3 + e_3g_2 + e_4g_1) \\
		h_1 & = & e_0f_1 + e_1f_0 + 5(e_2f_4 + e_3f_3 + e_4f_2) \\
		h_2 & = & e_0f_2 + e_1f_1 + e_2f_0 + 5(e_3f_4 + e_4f_3) \\
		h_3 & = & e_0f_3 + e_1f_2 + e_2f_1 + e_3f_0 + 5e_4f_4 \\
		h_4 & = & e_0f_4 + e_1f_f + e_2f_2 + e_3f_1 + e_4f_0.
	\end{array}
\end{eqnarray}
In the above we have used $2^{130}\equiv 5\bmod p$. 
Consider $h_0=e_0f_0 + 5(e_1f_4+e_2f_3+e_3f_2+e_4f_1)=(e_0f_0+e_1f_4+e_2f_3+e_3f_2+e_4f_1)+4(e_1f_4+e_2f_3+e_3f_2+e_4f_1)=u+v$, where $u=e_0f_0+e_1f_4+e_2f_3+e_3f_2+e_4f_1$ 
and $v=4(e_1f_4+e_2f_3+e_3f_2+e_4f_1)$.
Each of the cross product terms $e_if_j$ is 52-bit long; the sum of four such quantities is at most 54-bit long; the multiplication by 4 increases the length
by 2 bits, so $v$ is at most 56-bit long; by a similar reasoning $u$ is at most 55-bit long; so the sum $h_0=u+v$ is at most 57-bit long.
By a similar argument, the lengths of the other $h_j$'s are also at most 57 bits. So the limbs of $h$ are (at most) 57-bit quantities. 
\textit{By an unreduced multiplication we mean obtaining $(h_0,\ldots,h_4)$ from $(e_0,\ldots,e_4)$ and $(f_0,\ldots,f_4)$ as given in~\eqref{eqn-5-limb-mult}.}

Further reduction of the limbs of $h$ to 26-bit quantities are not immediately done. 
Recall that both grouped Horner and BRW evaluation support lazy reduction. The limbs of $h$ are stored as 64-bit quantities. If we perform limb-wise addition of at most
64 5-limb quantities all of whose limbs are 57 bits long, then the limbs of the final sum are at most 63 bits long, and so there is no overflow. So delayed reduction
strategy can be applied up to the sum of 64 quantities. 
Note that instead of being 26-bit quantities, if at most one of the $e_i$'s and at most one of the $f_j$'s were 27-bit quantities, then the limb sizes
would be 58 bits (instead of 57 bits), and there would be
no overflow when delayed reduction is employed up to the addition of 32 quantities. We take advantage of this observation during the reduction step, where we
allow one limb to be a 27-bit quantity. We found that for grouped Horner implementing delayed reduction beyond group size 4 did not lead to speed improvement.
For BRW evaluation on $\mathfrak{n}=\ceil{\ell/4}$ vector blocks, the maximum number of additions required by Algorithm~\ref{algo-B} in Appendix~\ref{sec-BRW-algo} due to delayed 
reduction is the size of the stack which by Proposition~\ref{prop-stack-sz} is at most $\floor{\lg\mathfrak{n}}-t+1$. So if the number of block $\ell$ is at most about
$2^{31+t}$, then there is no problem with delayed reduction.

\begin{remark}\label{rem-5R}
	In computing $h$ using~\eqref{eqn-5-limb-mult}, suppose $e$ is a fixed quantity, while $f$ varies. In such a situation, given $e=(e_0,e_1,\ldots,e_4)$
	it is advantageous to pre-compute $\tilde{e}=(5\cdot e_1, 5\cdot e_2, 5\cdot e_3, 5\cdot e_4)$, as then $h_0$ can be computed as
	$h_0=e_0f_0+(5\cdot e_1)f_4+(5\cdot e_2)f_3+(5\cdot e_3)f_2+(5\cdot e_4)f_1$, and similarly for $h_1$, $h_2$ and $h_3$. This saves the four multiplications by 5. 
\end{remark}

Next we consider the reduction. Suppose $h=h_0+h_12^{26}+h_22^{26\cdot 2}+h_32^{26\cdot 3}+h_42^{26\cdot 4}$, where we assume (due to possible delayed reduction)
that each $h_i$ is a 63-bit quantity. The goal of the reduction is to reduce (modulo $p$) so that each limb is a 26-bit quantity. This is a complete reduction and is
done once at the end of the computation. For intermediate reductions, we reduce all limbs other than $h_1$ to 26-bit quantitites and reduce $h_1$ to a 27-bit quantity. Such a 
partial reduction is faster than a complete reduction. 
As mentioned above, this does not cause an overflow when implementing delayed reduction.

The basic idea of the reduction for a limb is to retain the 26 least significant bits in the limb and add the other bits (which we call the carry out) to the next limb. 
In other words, for any $i\in \{0,\ldots,3\}$, write $h_i=h_{i,0}+h_{i,1}2^{26}$ with $h_{i,0}=h_i\bmod 2^{26}$, update $h_i$ to $h_{i,0}$ and add $h_{i,1}$ to $h_{i+1}$;
write $h_4=h_{4,0}+h_{4,1}2^{26}$ with $h_{4,0}=h_4\bmod 2^{26}$, update $h_4$ to $h_{4,0}$ and add $5h_{4,1}$ (using once again $2^{130}\equiv 5\bmod p$) to $h_0$.
Since $h_{i+1}$, $i=0,\ldots,3$ is a 63-bit quantity, adding $h_{i,1}$ to $h_{i+1}$ does not cause an overflow.
It is important to note that the reduction procedure can start from any $i\in\{0,\ldots,3\}$, and in particular the procedure does not have to start from $h_0$.
For the reduction, we use the reduction chain 
$h_0\!\!\shortrightarrow\!\! h_1\!\!\shortrightarrow\!\! h_2\!\!\shortrightarrow\!\! h_3\!\!\shortrightarrow\!\! h_4\!\!\shortrightarrow\!\! h_0\!\!\shortrightarrow\!\! h_1$, 
which is a chain having 6 steps. The first five steps reduce
$h_0,\ldots,h_4$ to 26-bit quantities, and the carry out of $h_4$ is at most a 37-bit quantity. Multiplying this carry out by 5 creates at most a 40-bit quantity, and
adding it to the 26-bit $h_0$ makes $h_0$ at most a 41-bit quantity. The last step $h_0\!\!\shortrightarrow\!\! h_1$ reduces $h_0$ to 26 bits and adds the at most 15-bit carry 
out to the 26-bit $h_1$ to make the new $h_1$ a 27-bit quantity.

\paragraph{SIMD implementation.}
Suppose $a_0,a_1,a_2$ and $a_3$ are four field elements, and for $i=0,\ldots,3$, suppose the limbs of $a_i$ are $a_{i,0},a_{i,1},\ldots,a_{i,4}$, where each $a_{i,j}$
is a 26-bit (or 27-bit) quantity. The total of 20 limbs of $a_0,a_1,a_2$ and $a_3$ are packed into 5 256-bit words $U_0,\ldots,U_4$ in the following manner. 
Consider $U_j$ to be $U_{j,0}||U_{j,1}||U_{j,2}||U_{j,3}$, where each $U_{j,i}$ is a 64-bit word. For $i=0,\ldots,3$, $a_{i,j}$ is
stored in the 26 least significant bits of $U_{j,i}$. See Figure~\ref{fig-SIMD-pack} for an illustration.
Similarly, suppose $b_0,b_1,b_2$ and $b_3$ are four field elements
which are packed into 5 256-bits words $V_0,\ldots,V_4$. Let $c_i=a_ib_i\bmod p$, $i=0,\ldots,3$, where the 5-limb representation of $c_i$ is obtained
from the 5-limb representations of $a_i$ and $b_i$ in a manner similar to~\eqref{eqn-5-limb-mult}. Then the 5-limb representations of $c_0,c_1,c_2,c_3$
are obtained in 5 256-bit words $W_0,W_1,\ldots,W_4$. Using the {\tt avx2} instructions {\tt vpmuludq} and {\tt vpaddq}, it is possible to obtain 
$W_0,W_1,\ldots,W_4$ from $U_0,U_1,\ldots,U_4$ and $V_0,V_1,\ldots,V_4$. 
In particular, we note that 25 {\tt vpmuludq} instructions are required to obtain
all the cross product terms and additionally 4 {\tt vpmuludq} instructions are required to perform the multiplications by 5. The multiplications by 5 are not
required if one of the operands in each of the four multiplications is fixed (see Remark~\ref{rem-5R}).

The computation of grouped Horner and BRW proceeds using the delayed reduction strategy. So the limbs of the $W_j$'s grow to at most $63$-bit quantities.
Then the reduction strategy described above is applied using SIMD instructions to the 5 words $W_0,W_1,\ldots,W_4$. 

\begin{figure}
\setlength{\unitlength}{3mm}
\begin{center}
\begin{picture}(15,8)
\put(0,0){\makebox(1.6,1.5)}
\put(0,1.6){\makebox(1.6,1.5)}
\put(0,3.2){\makebox(1.6,1.5)}
\put(0,4.8){\makebox(1.6,1.5)}
\put(0,6.4){\makebox(1.6,1.5)}

\multiput(1.6,0)(3.2,0){4}{\framebox(3.2,1.5){}}
\multiput(1.6,1.6)(3.2,0){4}{\framebox(3.2,1.5){}}
\multiput(1.6,3.2)(3.2,0){4}{\framebox(3.2,1.5){}}
\multiput(1.6,4.8)(3.2,0){4}{\framebox(3.2,1.5){}}
\multiput(1.6,6.4)(3.2,0){4}{\framebox(3.2,1.5){}}

\put(0.75,0.75){\makebox(0,0){$U_4$}}
\put(0.75,2.35){\makebox(0,0){$U_3$}}
\put(0.75,3.95){\makebox(0,0){$U_2$}}
\put(0.75,5.55){\makebox(0,0){$U_1$}}
\put(0.75,7.15){\makebox(0,0){$U_0$}}

\put(3.75,0.75){\makebox(0,0){$a_{3,4}$}}
\put(6.75,0.75){\makebox(0,0){$a_{2,4}$}}
\put(9.75,0.75){\makebox(0,0){$a_{1,4}$}}
\put(12.75,0.75){\makebox(0,0){$a_{0,4}$}}

\put(3.75,2.35){\makebox(0,0){$a_{3,3}$}}
\put(6.75,2.35){\makebox(0,0){$a_{2,3}$}}
\put(9.75,2.35){\makebox(0,0){$a_{1,3}$}}
\put(12.75,2.35){\makebox(0,0){$a_{0,3}$}}

\put(3.75,3.95){\makebox(0,0){$a_{3,2}$}}
\put(6.75,3.95){\makebox(0,0){$a_{2,2}$}}
\put(9.75,3.95){\makebox(0,0){$a_{1,2}$}}
\put(12.75,3.95){\makebox(0,0){$a_{0,2}$}}

\put(3.75,5.55){\makebox(0,0){$a_{3,1}$}}
\put(6.75,5.55){\makebox(0,0){$a_{2,1}$}}
\put(9.75,5.55){\makebox(0,0){$a_{1,1}$}}
\put(12.75,5.55){\makebox(0,0){$a_{0,1}$}}

\put(3.75,7.15){\makebox(0,0){$a_{3,0}$}}
\put(6.75,7.15){\makebox(0,0){$a_{2,0}$}}
\put(9.75,7.15){\makebox(0,0){$a_{1,0}$}}
\put(12.75,7.15){\makebox(0,0){$a_{0,0}$}}

\end{picture}
\caption{Packing of four field elements $a_0,\ldots,a_3$ into 5 256-bit words $U_0,\ldots,U_4$.  \label{fig-SIMD-pack} }
\end{center}
\end{figure}

\paragraph{A different SIMD representation.}
A different method of packing four field elements into 256-bit words was used in~\cite{Goll-Gueron}. Each of the four field elements is represented using
5 26-bit quantities and hence can fit in 5 32-bit words. So the four field elements together require 20 32-bit words to be stored. Three 256-words provide
a total of 24 32-bit words. So the 20 32-bit words representing the four field elements can be stored in three 256-bit words. This is the representation
of 4 field elements that was used in~\cite{Goll-Gueron}.
Stored in this manner, it is not
possible to directly perform the 4-way SIMD multiplication, and requires more instructions for unpacking and repacking. The rationale for adopting such a
strategy is that using 3 instead of 5 256-bit words to store operands frees up some of the 256-bit registers for performing the actual arithmetic, and this compensates
for the penalty incurred due to packing and unpacking. In our implementations, on the other hand, we have used 5 256-words to represent the operands so that
the multiplication operations can be directly applied. By carefully managing register allocation, we did not encounter the problem of unavailable registers. 
This was possible since we implemented directly in assembly, whereas the implementation in~\cite{Goll-Gueron} is in Intel intrinsics which is at a higher level.
Since in our approach the problem of unavailable registers does not arise, using the packed representation of field elements used in~\cite{Goll-Gueron} would
incur an unnecessary penalty. So we chose not to use that strategy.

\subsection{Case of $p=2^{127}-1$ \label{subsec-fld-arith-p127-1}}
Elements in $\mathbb{F}_p$ are represented using 128-bit quantities. Keeping in mind the fact that SIMD supports 32-bit multiplication, there are
two possible representations of elements of $\mathbb{F}_p$, namely a 4-limb, or a 5-limb representation. 

\paragraph{5-limb representation.}
The 5-limb representation is almost the same as that of the 5-limb representation for $2^{130}-5$ described in Section~\ref{subsec-fld-arith-p130-5}, 
i.e. a base $2^{26}$ representation can be used.
The only difference in the multiplication procedure shown in~\eqref{eqn-5-limb-mult} is that the constant 5 is replaced by 8, since $2^{130}\equiv 8 \bmod (2^{127}-1)$. 
For the reduction algorithm, we use the chain 
$h_3\!\!\shortrightarrow\!\! h_4\!\!\shortrightarrow\!\! h_0\!\!\shortrightarrow\!\! h_1\!\!\shortrightarrow\!\! h_2\!\!\shortrightarrow\!\! h_3\!\!\shortrightarrow\!\! h_4$, 
i.e. we start the chain at $h_3$ instead of starting at $h_0$.
The chain consists of 6 steps as in the case for $2^{130}-5$. For the step $h_4\!\!\shortrightarrow\!\! h_0$, we reduce $h_4$ to 23 bits (note that $4\times 26+23=127$) and produce
a carry out of at most 40 bits which is then added to $h_0$ (since $2^{127}\equiv 1\bmod (2^{127}-1)$, there is no need to multiply the carry out by any constant).
The chain finally stops at $h_4$ which results in $h_4$ being at most a 24-bit quantity. The number of operations required for multiplication and reduction
using the 5-limb representation of $2^{127}-1$ is almost the same as the number of operations required for multiplication and reduction using the 5-limb
representation of $2^{130}-5$. 

The SIMD implementation of the 5-limb representation for $2^{127}-1$ is also very similar to the SIMD implementation of the 5-limb representation for $2^{130}-5$.
Four field elements are stored in 5 256-bit words as shown in Figure~\ref{fig-SIMD-pack} and multiplication is done using {\tt avx2} instructions.
In particular, with the 5-limb representation, multiplication requires requires 25 {\tt vpmuludq} instructions to compute the cross product terms, plus
4 {\tt vpsllq} instructions for the multiplications by 8.

\paragraph{4-limb representation.}
For the 4-limb representation, an element $f$ of the field $\mathbb{F}_p$ is represented as a 4-limb quantity, where each limb is a 32-bit quantity, i.e.
\begin{eqnarray*}
	f & = & f_0 + f_1 2^{32} + f_2 2^{32\cdot 2} + f_3 2^{32\cdot 3}, 
\end{eqnarray*}
where each $f_i$ is a 32-bit quantity. 
Suppose $e$ is another field element whose limbs are $e_0,e_1,e_2,e_3$. The product $ef\bmod p$ can be written as a 4-limb quantity 
$h=h_0+h_12^{32}+h_22^{32\cdot 2}+h_32^{32\cdot 3}$ as follows.
\begin{eqnarray} \label{eqn-4-limb-mult}
	\begin{array}{rcl}
	h_0 & = & e_0f_0 + 2(e_1f_3+e_2f_2+e_3f_1) \\
	h_1 & = & e_0f_1 + e_1f_0 + 2(e_2f_3 + e_3f_2) \\
	h_2 & = & e_0f_2 + e_1f_1 + e_2f_0 + 2e_3f_3 \\
	h_3 & = & e_0f_3 + e_1f_2 + e_2f_1 + e_3f_0.
	\end{array}
\end{eqnarray}
In the above, we have used $2^{128}\equiv 2\bmod p$. 
Each of the cross product terms $e_if_j$ is 64-bit long. Adding together such terms increases the size of the sum beyond 64 bits. Since {\tt avx2}
instructions do not provide any mechanism to handle the carry arising out of additions, one cannot directly add the cross product terms. An alternative procedure
needs to be used. Consider $h_0$. 
Write $e_if_j$ as $u+v2^{32}$, where both $u$ and $v$ are 32-bit quantities. The $u$'s arising from the terms $e_1f_3$, $e_2f_2$ and $e_3f_1$ are 
added, the sum multiplied by 2, and the result added to the $u$ arising from the the term $e_0f_0$. This results in $h_0$ being a 36-bit quantity. 
Similarly the $v$'s arising from the terms $e_1f_3$, $e_2f_2$ and $e_3f_1$ are added, the sum multiplied by 2, and the result added to the $v$ arising from
the term $e_0f_0$, giving a value $v^\prime$. The computation of $h_1$ starts with the initial value $v^\prime$, and is updated by adding the $u$'s arising from the cross product
terms in the expression for $h_1$, while the $v$'s arising from the cross product terms in the expression for $h_1$ contribute to the value of $h_2$. 

For the reduction algorithm, the chain is 
$h_0\!\!\shortrightarrow\!\! h_1\!\!\shortrightarrow\!\! h_2\!\!\shortrightarrow\!\! h_3\!\!\shortrightarrow\!\! h_0\!\!\shortrightarrow\!\! h_1\!\!\shortrightarrow\!\! h_2\!\!\shortrightarrow\!\! h_3$ which consists of 7 steps 
(which is one step more than the
chain for the 5-limb representation). Note that the chain makes two full iterations over the limbs and reduces all the limbs to 32-bit quantities. This is required,
since if any limb is greater than 32 bits, then subsequent multiplication with such limbs will cause an overflow.

SIMD implementation packs four field elements with 32-bit limbs into 4 256-bit words in much the same as the packing of four field elements with 26-bit limbs into 
5 256-bit words. Using this packed representation, multiplication of four pairs of field elements is done using {\tt avx2} instructions following the description
given earlier. An advantage of using the 4-limb representation is that the number of {\tt vpmuludq} instructions required to compute the multiplication comes down
to 16 from 25. However, there is a significant increase in the number additions and shifts. Multiplication using the 4-limb representation requires
16 {\tt vpmuludq}, 40 {\tt vpaddq}, 16 {\tt vmovdqa}, 16 {\tt vpand} and 16 {\tt vpsrlq} operations (plus 3 {\tt vpsllq} instructions for the multiplications by
2). 
In contrast,
multiplication using the 5-limb representation requires 25 {\tt vpmuludq} and 20 {\tt vpaddq} instructions (plus 4 {\tt vpsrlq} instructions for the multiplications
by 8). The latencies of the various instructions on the Skylake processor are as follows:
{\tt vpmuludq} - 5, {\tt vpaddq} -1, {\tt vpand} - 1, {\tt vpsrlq} - 1, {\tt vmovdqa} - 7 for load and 5 for store. 
This shows that the penalty due to the additional instructions required for the 4-limb representation more than cancels the benefit of requiring a less
number of {\tt vpmuludq} instructions.

\paragraph{Non-availability of carry handling instructions.}
A major reason that 4-limb representation turns out to be slower is that SIMD instructions do not provide any mechanism to handle the carry out
of an addition operation. Since the cross product terms in~\eqref{eqn-4-limb-mult} are all 64-bit quantities, without the availability of an 
efficient carry handling mechanism, many more operations are required to prevent an overflow condition. If in the future SIMD instructions provide some mechanism
for obtaining the carry out of an addition, and/or the add-with-carry operation, then it is likely that the 4-limb representation will provide
a significantly faster multiplication algorithm than the 5-limb representation. We note that for the {\tt maax} implementation using
64-bit arithmetic there is excellent support for carry operations (see Remark~\ref{rem-maax}).

\paragraph{Non-availability of 64-bit multiplications.}
SIMD operations presently do not support 64-bit multiplication. If in the future, 64-bit
SIMD multiplication is supported (with 2 such simultaneous multiplications using 256-bit words, or 4 such simultaneous multiplications using
512-bit words), then for the prime $2^{130}-5$, a 3-limb
representation can be used, while for $2^{127}-1$, either a 2-limb, or a 3-limb representation can be used. The speed of multiplication using such a 2-limb representation
for $2^{127}-1$ has the potential to significantly outperform the speed of multiplication of a 3-limb representation of $2^{130}-5$, especially if along
with 64-bit multiplication, carry handling SIMD operations are used. This observation arises from the comparative speed performances of multiplication
algorithms for $2^{127}-1$ and $2^{130}-5$ for non-SIMD implementation using {\tt maax} instructions (see~\cite{BNS2025,cryptoeprint:2025/1224}).

\subsection{Other Primes \label{subsec-oth-prm}}
We considered the possibility of using other pseudo-Mersenne primes. For the primes $2^{137}-13$, $2^{140}-27$, and $2^{141}-9$ the elements of the 
corresponding fields can be represented using 5 limbs. 
However, with these three primes there will not be sufficiently many free bits left after multiplication to support delayed reduction. So the overall hash computation will 
be slower than that for $2^{130}-5$. 
One may use an even greater prime such as $2^{150}-3$; the problem in this case is that a 5-limb representation will cause overflow during the multiplication procedure, 
while a 6-limb representation will be slower than $2^{130}-5$ (though it will support delayed reduction). 
If we consider primes smaller than $2^{127}-1$, then $2^{116}-3$ is a possibility. With $2^{116}-3$, a 4-limb representation can support multiplication without complicated
overflow management; the problem, however, is that delayed reduction will not be possible, block size will reduce to 14 bytes, and security will drop by 13 bits. Without
delayed reduction and with the smaller block size, the speed of the hash function will not be a significant improvement over the speed of the hash function for 
$2^{130}-5$ to compensate for the loss of security by 13 bits. 

\section{Vectorised Algorithms \label{sec-vec-algo} }
We describe algorithms for vectorised computation of $\sym{polyHash}$ and $\sym{decBRWHash}$.

\subsection{Vectorised Computation of $\sym{polyHash}$ \label{subsec-vec-poly}}
One way to exploit parallelism in the computation of $\sym{polyHash}$ is to divide the sequence of blocks $(M_1,\ldots,M_{\ell})$ into $c\geq 2$ subsequences
and apply Horner's rule to each of the subsequence. For $c=4$, such a strategy was used in~\cite{Goll-Gueron} for vectorised implementation of Poly1305. 
In~\cite{DBLP:journals/tosc/ChakrabortyGS17}, this strategy was called $c$-decimated Horner evaluation. 

From~\eqref{eqn-polyHash} and~\eqref{eqn-polyHashpoly}, the computation of $\sym{polyHash}_{\tau}(X)$ requires the computation of 
$\tau\cdot \sym{Poly}(\tau;M_1,\allowbreak \ldots,\allowbreak M_\ell)$, 
where $(M_1,\ldots,M_\ell)$ is the output of $\sym{pad}1(\sym{format}(X))$. 
Let $\rho=\ell\bmod c$ and $\ell^{\prime}=(\ell-\rho)/c=\floor{\ell/c}$. The computation of $\tau\cdot \sym{Poly}(\tau;M_1,\ldots,M_{\ell})$ can be done
in the following manner. First compute
\begin{eqnarray}\label{eqn-4SIMD}
	P & = & 
	\left\{\begin{array}{llll}	
		  & \tau^{c}   & \cdot & \sym{Poly}({\tau^c};M_1,  M_{c+1},M_{2c+1},\ldots, M_{c\ell^{\prime}-c+1}) \\
		+ & \tau^{c-1} & \cdot & \sym{Poly}({\tau^c};M_{2},M_{c+2},M_{2c+2},\ldots,M_{c\ell^{\prime}-c+2}) \\
		+ & \cdots \\
		+ & \tau       & \cdot & \sym{Poly}({\tau^c};M_{c},M_{2c},M_{3c},\ldots,M_{c\ell^{\prime}}).
	\end{array}\right.
\end{eqnarray}
Then
\begin{eqnarray*}
	\begin{array}{rcll}
		\tau\cdot \sym{Poly}(\tau;M_1,\ldots,M_{\ell}) & = & P & \mbox{if } \rho=0, \\
		\tau\cdot \sym{Poly}(\tau;M_1,\ldots,M_{\ell}) & = & \tau \cdot \sym{Poly}(\tau;P+M_{\ell-\rho+1}, M_{\ell-\rho+2}, \ldots, M_{\ell}) & \mbox{if } \rho > 0.
	\end{array}
\end{eqnarray*}
We focus on the computation of $P$. For $i=1,\ldots,c$, let
\begin{eqnarray*}
	C_i & = & \tau^{c+1-i}\cdot \sym{Poly}(\tau^c;M_i,M_{c+i},\ldots,M_{c\ell^\prime-c+i}).
\end{eqnarray*}
Then $P=C_1+C_2+\cdots+C_c$. So the computation of $P$ consists of computing $C_1,\ldots,C_c$ and then adding these together. The computation of the vector 
\begin{eqnarray}\label{eqn-C-vec}
	\mathbf{C} & = & (C_1,\ldots,C_c)
\end{eqnarray}
can be done using an SIMD strategy. The computation of $P$ from $\mathbf{C}$, and the subsequent computation of $\tau\cdot \sym{Poly}(\tau;M_1,\ldots,M_{\ell})$ 
from $P$ (in the case $\rho>0$) is done using a small number of {\tt maax} operations.

For a non-negative integer $j$, let $\bm{\tau}^j=(\tau^{j},\ldots,\tau^{j})$ be a vector of length $c$. Further, let $\bm{\tau}_{\theta}=(\tau^c,\tau^{c-1},\ldots,\tau)$ also
be a vector of length $c$. For $i=1,\ldots,\ell^\prime$, let $\mathbf{M}_i=(M_{ci-c+1},\ldots,M_{ci})$. Then
\begin{eqnarray} \label{eqn-vec-poly}
	\mathbf{C} & = & \bm{\tau}_\theta \circ ( \bm{\tau}^c\circ(\cdots (\bm{\tau}^c \circ (\bm{\tau}^c \circ \mathbf{M}_1 + \mathbf{M}_2) + \mathbf{M}_3) 
	+ \cdots + \mathbf{M}_{\ell^\prime-1}) + \mathbf{M}_{\ell^\prime} ),
\end{eqnarray}
where $\circ$ and $+$ denote SIMD (i.e. component wise) field multiplication and field addition of two $c$-dimensional vectors respectively. 

The computation in~\eqref{eqn-vec-poly} requires a total of $\ell^\prime$ SIMD field multiplications and $\ell^\prime-1$ SIMD additions. Of the
$\ell^\prime$ SIMD field multiplications, $\ell^\prime-1$ SIMD field multiplications have $\bm{\tau}^c$ as one of the operands, while one SIMD field multiplication
has $\bm{\tau}_\theta$ as one of the operands. 

As mentioned in Section~\ref{sec-prelim}, a delayed (or lazy) reduction strategy was used in~\cite{Goll-Gueron} to decrease the number of reductions.
Let $g\geq 1$ be a parameter, and $\ell^{\prime\prime}$ and $r$ be integers such that $(\ell^\prime-g-1)=g(\ell^{\prime\prime}-1)+r$, with $\ell^{\prime\prime}\geq 1$ 
and $1\leq r\leq g$. So $\ell^\prime-1=g\ell^{\prime\prime}+r$, and since $1\leq r\leq g$, we have $\ell^{\prime\prime}=\lceil (\ell^\prime-1)/g\rceil-1$.  Let
$\bm{\gamma}=\bm{\tau}^c$. 
\begin{eqnarray*}
	\mathbf{A}_1 & = & 
	\left\{\begin{array}{ll}
		\mathbf{M}_1\circ \bm{\gamma}^{\ell^\prime-1} 
			+ \mathbf{M}_2\circ \bm{\gamma}^{\ell^\prime-2} + \cdots + \mathbf{M}_{\ell^\prime-1}\circ \bm{\gamma} + \mathbf{M}_{\ell^\prime} 
					& \mbox{if } \ell^\prime \leq g+1, \\
		\mathbf{M}_1\circ \bm{\gamma}^g + \mathbf{M}_2\circ \bm{\gamma}^{g-1} + \cdots + \mathbf{M}_g\circ \bm{\gamma} + \mathbf{M}_{g+1} & \mbox{if } \ell^\prime > g+1.
	\end{array} \right.
\end{eqnarray*}
For $i=1\ldots, \ell^{\prime\prime}-1$, let
\begin{eqnarray*}
	\mathbf{A}_{i+1} & = & \mathbf{A}_i\circ \bm{\gamma}^g + \mathbf{M}_{ig+2}\circ \bm{\gamma}^{g-1}+\cdots+\mathbf{M}_{(i+1)g}\circ \bm{\gamma} + \mathbf{M}_{(i+1)g+1},
\end{eqnarray*}
and 
\begin{eqnarray*}
	\mathbf{A}_{\ell^{\prime\prime}+1}
	& = & \mathbf{A}_{\ell^{\prime\prime}}\circ \bm{\gamma}^r 
	+ \mathbf{M}_{\ell^{\prime\prime}g+2}\circ \bm{\gamma}^{r-1} + \mathbf{M}_{\ell^{\prime\prime}g+3}\circ \bm{\gamma}^{r-2} + \cdots
	+ \mathbf{M}_{\ell^{\prime\prime}g+r}\circ \bm{\gamma} + \mathbf{M}_{\ell^{\prime\prime}g+r+1}.
\end{eqnarray*}
Then it is not difficult to verify that
\begin{eqnarray} \label{eqn-C-comp}
	\mathbf{C} & = & 
	\left\{\begin{array}{ll}	
		\bm{\tau}_{\theta}\circ \mathbf{A}_{1} & \mbox{if } \ell^\prime\leq g+1, \\
		\bm{\tau}_{\theta}\circ \mathbf{A}_{\ell^{\prime\prime}+1} & \mbox{if } \ell^\prime>g+1.
	\end{array} \right.
\end{eqnarray}
Suppose that the key powers $\bm{\gamma},\bm{\gamma}^2,\bm{\gamma}^3,\ldots,\bm{\gamma}^g$ are pre-computed. 
If $\ell^\prime=1$, then $\mathbf{A}_1=\mathbf{M}_1$ and no SIMD multiplication or SIMD reduction are required to compute $\mathbf{A}_1$.
If $1<\ell^\prime\leq g+1$, then 
computing $\mathbf{A}_1$ requires $\ell^\prime-1$ unreduced SIMD multiplications and one SIMD reduction.
If $\ell^\prime>g+1$, then computing $\mathbf{A}_1$ requires $g$ unreduced SIMD multiplications and one SIMD reduction; 
for $1\leq i\leq \ell^{\prime\prime}-1$, computing $\mathbf{A}_{i+1}$ from $\mathbf{A}_i$, $1\leq i\leq \ell^{\prime\prime}-1$, requires $g$ unreduced SIMD multiplications and
one SIMD reduction; and computing $\mathbf{A}_{\ell^{\prime\prime}+1}$ from $\mathbf{A}_{\ell^{\prime\prime}}$ requires $r$ unreduced SIMD multiplications and one reduction. So 
if $\ell^\prime>g+1$, then to compute $\mathbf{A}_{\ell^{\prime\prime}+1}$ the total number of unreduced SIMD multiplications required is equal to 
$\ell^{\prime\prime}g+r=\ell^\prime-1$, and the total number of SIMD reductions is equal to $\ell^{\prime\prime}+1=\lceil(\ell^\prime-1)/g\rceil$. In fact, for all values of
$\ell^\prime\geq 1$, the total number of unreduced SIMD multiplications required is $\ell^\prime-1$, and the total number of SIMD reductions is
$\lceil(\ell^\prime-1)/g\rceil$.

The computation of $\mathbf{C}$ in~\eqref{eqn-C-comp} from $\mathbf{A}_1$ or $\mathbf{A}_{\ell^{\prime\prime}+1}$ requires one unreduced SIMD multiplication and one SIMD reduction.
However, the SIMD reduction can be avoided. 
The unreduced SIMD multiplication by $\bm{\tau}_{\theta}$ results in $\sym{unreduced}(\mathbf{C})$. The four components $\sym{unreduced}(C_i)$, $i=1,\ldots,4$,
of $\sym{unreduced}(\mathbf{C})$ are added together and a single reduction applied to the sum using {\tt maax} operations. So for the computation of $P$,
the total number of unreduced SIMD multiplications is equal to $\ell^\prime$, and the total number of SIMD reductions is equal to $\lceil(\ell^\prime-1)/g\rceil$.

In our implementations, we have taken $c=4$, i.e. we have made 4-way SIMD implementation using {\tt avx2} instructions. 
Given the delayed reduction parameter $g$, the key power vectors $\bm{\tau}^{4},\bm{\tau}^8,\ldots,\bm{\tau}^{4g}$ are required. Additionally, the
vector $\bm{\tau}_\theta$ is required.
To compute $\bm{\tau}_\theta$ and the key power vectors $\bm{\tau}^{4i}$, $i=1,\ldots,g$, the key powers
$\tau,\tau^2,\tau^3,\tau^4,\tau^8,\tau^{12},\ldots,\tau^{4g}$ are first computed using {\tt maax} instructions 
and then appropriately organised into the required key power vectors.
The actual SIMD computation of $\mathbf{C}$ starts after the required key power vectors have been computed. The number of 
unreduced SIMD multiplications and the number of SIMD reductions for the computation of $C$ are shown in Table~\ref{tab-simd-oc}. 
Additionally, there are a small number of non-SIMD operations required at the end to compute $P$ and to compute 
$\tau\cdot \sym{Poly}(\tau;M_1,\ldots,M_\ell)$ from $P$. These operations are implemented using {\tt maax} instructions and their counts are not shown 
in Table~\ref{tab-simd-oc}.

\begin{table}
\centering
{\scriptsize
\begin{tabular}{|l|c|c|}
	\cline{2-3}
	\multicolumn{1}{c|}{} & \begin{tabular}{c} unred mult \\ (SIMD) \end{tabular}  & \begin{tabular}{c} red \\ (SIMD) \end{tabular} \\ \hline
		$\sym{polyHash}$ & $\floor{\ell/4}$ & $\lceil(\floor{\ell/4}-1)/g\rceil$ \\ \hline
	$4\mbox{-}\sym{decBRWHash}$ & $\lfloor \lceil\ell/4\rceil/2\rfloor$ & $1+\lfloor \lceil\ell/4\rceil/4\rfloor$ \\ \hline
\end{tabular}
	\caption{Counts of 4-way SIMD operations required for computing $P$ (as part of the computation of $\sym{polyHash}$) and $\sym{decBRWHash}$ for 
	$\ell$ message blocks. Full computations of $\sym{polyHash}$ and $\sym{decBRWHash}$ require additionally a small number of {\tt maax} operations
	which are not shown in the table. For $\sym{polyHash}$, the parameter $g$ is a positive integer. \label{tab-simd-oc} }
}
\end{table}


As explained after~\eqref{eqn-4SIMD} if the number of blocks is not a multiple of 4, then the last few blocks (between 1 and 3) need to be 
tackled sequentially. As a result, the entire computation cannot proceed uniformly as a 4-way SIMD computation. It is possible to pre-pend a number of zero
blocks, so that the total number of blocks becomes a multiple of 4, and the 4-way SIMD can be employed for the entire message. Such pre-pending does not
alter the hash function and is only an implementation issue. This technique was used in~\cite{DBLP:journals/iet-ifs/BhattacharyyaS20}.
There is, however, an efficiency issue which does not combine well with the technique of pre-pending zero blocks.
If the message is stored as 32-byte aligned data, then the instruction {\tt vmovdqa} (i.e. aligned move) can be used to read the data. Such aligned read is
faster than the unaligned read {\tt vmovdqu}. If the zero pre-pending technique is not used, then with successive aligned moves successive 32 bytes of data can be read.
On the other hand, if zero-prepending is used, then this is no longer possible. As a result, the reading of the data becomes slower. We have implemented 
both the technique of zero pre-pending without support of aligned moves, and not using zero pre-pending which supports aligned moves. We find no significant
difference in the timings. 

\subsection{Vectorised Computation of $c\mbox{-}\sym{decBRWHash}$ \label{subsec-vec-BRW}}
The computation of $c\mbox{-}\sym{decBRWHash}_\tau(X)$ requires the evaluation of $Q_1(\tau),Q_2(\tau),\ldots,Q_c(\tau)$.
Evaluations of the $Q_i(\tau)$'s require evaluations of $c$ BRW polynomials, where the inputs to the BRW polynomials are disjoint and the number of 
message blocks provided as input is the same for all the BRW polynomials. Consequently, the $c$ BRW polynomials can be simultaneously evaluated at $\tau$ using
an SIMD strategy. 

For $i=1,\ldots,\mathfrak{n}=\ceil{\ell/c}$, let $\mathbf{M}_i=(M_{(i-1)c+1},M_{(i-1)c+2},\ldots,M_{ic})$, where the message blocks $M_1,M_2,\ldots$
are as defined in the description of $c\mbox{-}\sym{decBRWHash}$. Algorithm~\ref{algo-B} in Appendix~\ref{sec-BRW-algo}
can be used to compute $c$ simultaneous BRW polynomials, i.e. Algorithm~\ref{algo-B} can be used in a $c$-way SIMD manner, where the input consists
of the sequence of $c$-way vectors $\mathbf{M}_1,\mathbf{M}_2,\ldots,\mathbf{M}_{\mathfrak{n}}$. The key vector is the vector $\bm{\tau}=(\tau,\ldots,\tau)$
of length $c$. The key power vectors required are $\bm{\tau}^{2^j}$, for $j=0,\ldots,\floor{\lg\mathfrak{n}}$. The key power
$\tau^{2^j}$ are computed from which the key power vector $\bm{\tau}^{2^j}$ is computed.
The algorithm uses $\sym{keyPow}$, $\sym{stack}$ and $\sym{tmp}$ as internal arrays and variables. The SIMD version of the algorithm
uses vector versions of these arrays and variables. In particular, $\sym{keyPow}[j]$ stores the vector $\bm{\tau}^{2^j}$, $\sym{stack}[j]$ stores a vector
of length $c$, and $\sym{tmp}$ is a vector of length $c$, where the $i$-th components of $\sym{stack}[j]$ and $\sym{tmp}$
correspond to the computation of the $i$-th BRW polynomial. All unreduced multiplications, reductions and unreduced additions are done component wise on the
vectors. With this strategy Algorithm~\ref{algo-B} becomes an SIMD algorithm for the simultaneous computation of the $c$ BRW polynomials. 

We have implemented the SIMD version of Algorithm~\ref{algo-B} for $c=4$. Table~\ref{tab-simd-oc} provides the number of 4-way SIMD operations required for computing
$4\mbox{-}\sym{decBRWHash}$ using the SIMD version of Algorithm~\ref{algo-B}. In addition to these SIMD operations, there are a small number of other operations which
are required to compute $Q_{5}$ from $Q_1,Q_2,Q_3,Q_4$ and to compute $Q$ from $Q_5$ (see Section~\ref{sec-SIMD-BRW-Horner}). These are implemented using {\tt maax}
operations, and the count of these operations are not shown in Table~\ref{tab-simd-oc}.

\subsection{Comparison between $\sym{polyHash}$ and $4\mbox{-}\sym{decBRWHash}$ \label{subsec-comp-poly-brw}}
We make a comparison between the 4-way SIMD computation of $\sym{polyHash}$ and $4\mbox{-}\sym{decBRWHash}$. There are two aspects to the
comparison, the efficiency and the storage requirement. 

\paragraph{Efficiency.} The SIMD operation counts for $\sym{polyHash}$ and $4\mbox{-}\sym{decBRWHash}$ are shown in Table~\ref{tab-simd-oc}. The number
of unreduced SIMD multiplications required for $4\mbox{-}\sym{decBRWHash}$ is about half of what is required for $\sym{polyHash}$. The number
of SIMD reductions required by $4\mbox{-}\sym{decBRWHash}$ is at most that required by $\sym{polyHash}$ for $g\leq 4$, while for $g>4$, 
$\sym{polyHash}$ requires less number of SIMD reductions. In our implementations, we found that taking $g>4$ does not provide speed improvement. 
The halving of the number of unreduced SIMD multiplications indicates that
$4\mbox{-}\sym{decBRWHash}$ should be substantially faster than $\sym{polyHash}$ for all values of $g$. Experimental results for the primes
$2^{130}-5$ and $2^{127}-1$, however, show that for {\tt avx2} implementations while there is a noticeable speed-up of $4\mbox{-}\sym{decBRWHash}$ over $\sym{polyHash}$, the 
actual speed-up obtained is less than what is theoretically predicted by operation counts. 
We explain the reasons for such an observation.

Consider the prime $2^{130}-5$. (A similar reasoning applies for the 5-limb representation based on the prime $2^{127}-1$.)
For the 4-way SIMD computation of $\sym{polyHash}1305$, the vectors of key powers $\bm{\tau}^4,\bm{\tau}^{8},\ldots,\bm{\tau}^{4g}$ are fixed. 
All the 4-way multiplications involved in the 4-way computation of $\sym{polyHash}1305$ have some key power vector $\bm{\tau}^{4i}$ as one of the arguments.
This has two effects.

First, as noted in Remark~\ref{rem-5R} when the operand $e$ to the multiplication is fixed, it is possible to pre-compute $\widetilde{e}$ so that the
4 multiplications by 5 are not required. Extending this to SIMD operations, for each key power $\bm{\tau}^{4i}$, we compute the corresponding
$\widetilde{\bm{\tau}^{4i}}$, and can avoid the SIMD multiplications by 5. Also, for the vector $\bm{\tau}_{\theta}$ we compute
$\widetilde{\bm{\tau}_{\theta}}$, so that the SIMD multiplications by 5 can be avoided while multiplying by $\bm{\tau}_\theta$. 
As a result the number of {\tt vpmuludq} instructions required for an unreduced
multiplication is 25 in the case of SIMD computation of $\sym{polyHash}$. In contrast, in the computation of $4\mbox{-}\sym{decBRWHash}$, none of the multiplications
are by a fixed element. So the number of {\tt vpmuludq} instructions required for an unreduced multiplication is 29 in the case of SIMD computation of
$4\mbox{-}\sym{decBRWHash}$. 

The second effect of multiplying with a fixed element is more generic. When a fixed element is used for repeated multiplication, during actual execution
this element is kept either on chip or in the cache memory. This significantly reduces the time for reading the element from the memory and leads to
a significant increase in speed which is not explained by simply counting the number of arithmetic operations. For the {\tt maax} implementation also,
a similar speed-up was observed and explained in details in~\cite{BNS2025}. 

As a combined result of the above two effects, the speed improvement of SIMD computation of $4\mbox{-}\sym{decBRWHash}$ over SIMD computation of 
$\sym{polyHash}$ is less than what is theoretically predicted, though a significant improvement in the speed is observed for messages which are
a few kilo bytes or longer. We provide detailed timing results in Section~\ref{sec-times}.




\paragraph{Storage.} Consider the prime $2^{130}-5$ where elements are represented using 5 limbs. 
The storage requirement based on using the 5-limb representation of the prime $2^{127}-1$ is the same as the storage requirement for the prime $2^{130}-5$. 

The 4-way SIMD computation of $\sym{polyHash}$ requires the pre-computation and storage of the key power vectors $\bm{\tau}^4,\bm{\tau}^8,\ldots,\bm{\tau}^{4g}$ as well
as the vector $\bm{\tau}_{\theta}$.  Additionally, to avoid the multiplications by 5, it is also required to store the associated vectors 
$\widetilde{\bm{\tau}^4},\widetilde{\bm{\tau}^8},\ldots,\widetilde{\bm{\tau}^{4g}}$. The associated vector $\widetilde{\bm{\tau}_\theta}$ is also required, but
it is required only at the end, and so is not pre-computed and carried forward. 
Each $\bm{\tau}^{4i}$ and also $\bm{\tau}_{\theta}$ requires 5 256-bit words to be stored, i.e. a total of 160 bytes. The associated vector $\widetilde{\bm{\tau}^{4i}}$ 
requires 4 256-bit words to be stored, i.e. a total of 128 bytes. So the total number of bytes required for storing all the key power vectors is 
$g\cdot(160+128)+160=160+288g$. Concrete values of the storage requirement for various values of $g$ are shown in Table~\ref{tab-poly-avx-storage}.

\begin{table}
	\begin{subtable}{1.0\textwidth}
		{\scriptsize
\centering
	\begin{tabular}{|l|c|c|c|c|}
		\cline{2-5}
		\multicolumn{1}{c|}{} & $g=1$ & $g=2$ & $g=3$ & $g=4$ \\ \hline
		\# bytes & 448 & 736 & 1024 & 1312 \\ \hline
	\end{tabular}
	\caption{For various values of $g$, the number of bytes of key material required to be stored for computing either $\sym{polyHash}1305$ or 
	$\sym{polyHash}1271$ using 4-way SIMD. \label{tab-poly-avx-storage}}
		}
\end{subtable}
\begin{subtable}{1.0\textwidth}
	{\scriptsize
	\centering

	\begin{tabular}{|l|c|c|c|c|c|c|c|}
		\cline{2-8}
		\multicolumn{1}{c|}{} & \multicolumn{7}{|c|}{\# of blocks $\ell$} \\ \cline{2-8}
		\multicolumn{1}{c|}{} & 1-4 & 5-12 & 13-28 & 29-60 & 61-124 & 125-252 & 253-508  \\ \hline
			\# bytes      & 0   & 160  & 320   & 480   & 640    & 800     & 960 \\ \hline

\end{tabular}
	\caption{The number of bytes of key material required to be stored for computing either $\sym{decBRWHash}1305$ or $\sym{decBRWHash}1271$. \label{tab-decBRWHash-storage} }
	}
\end{subtable}
	{\scriptsize \caption{Key storage requirements. \label{tab-st}} }
\end{table}

For computing $4\mbox{-}\sym{decBRWHash}$, the key power vectors $\bm{\tau}^{2^j}$, $j=0,\ldots, \lfloor\lg\mathfrak{n} \rfloor=\lfloor\lg\lceil \ell/4 \rceil\rfloor$
are required. Each of these key power vector requires $5\cdot 32=160$ bytes to be stored, so for an $\ell$-block message, the total number of bytes required to store all 
the key power vectors is $160\cdot \lfloor\lg\lceil \ell/4 \rceil\rfloor$. The key power vector $\bm{\tau}^d$ is required at the end to combine the four
independent $\sym{BRW}$ computations. Since this vector is required only at the end, it is not pre-computed. Concrete values of the storage requirement
for various values of $\ell$ are shown in Table~\ref{tab-decBRWHash-storage}.

\subsection{Efficiency Trade-Off Between $2^{130}-5$ and $2^{127}-1$ \label{subsec-prime-comp}}
There is a generic disadvantage of $2^{127}-1$ in comparison to $2^{130}-5$. 
From Table~\ref{tab-params}, the block size $n$ is 120 for $2^{127}-1$ and 128 for $2^{130}-5$. So for any given message, the number of blocks
for $2^{127}-1$ will be about 16/15 times the number of blocks for $2^{130}-5$. Being required to process more blocks, suggests that hash functions
based on $2^{127}-1$ will be slower than the corresponding hash functions based on $2^{130}-5$. The other aspect to consider is the speed of an individual
multiplication. If the individual multiplication for $2^{127}-1$ is faster than the individual multiplication for $2^{130}-5$, then this may compensate
the requirement of processing more blocks. For {\tt maax} implementations, it is indeed the case that the individual multiplication for $2^{127}-1$
is substantially faster than the individual multiplication for $2^{130}-5$, resulting in the hash functions based on $2^{127}-1$ being faster than the
hash functions based on $2^{130}-5$. See~\cite{BNS2025,cryptoeprint:2025/1224} for the timing results of {\tt maax} implementation which supports this statement.
However, for {\tt avx2} implementation, an individual multiplication for $2^{127}-1$ has efficiency similar to an individual multiplication
for $2^{130}-5$ (see Section~\ref{sec-fld-arith}). As a result, due to the requirement of processing more blocks, for {\tt avx2} implementations, the hash functions 
based on $2^{127}-1$ are slower than the corresponding hash functions based on $2^{130}-5$. 

In the context of {\tt avx2} implementation, there is another disadvantage for $2^{127}-1$. For vectorised processing, it is advantageous to use 16-byte block sizes 
(or block sizes which are multiples of 16 bytes). With a 16-byte block size, it is possible to use two {\tt vmovdqa(u)} instructions to read 512 bits of the input into 
two {\tt ymm} registers. The first {\tt vmovdqa(u)} instruction reads 32 bytes which brings two input blocks into an {\tt ymm} register, and so does the second. If 
the block size is not 16 bytes, 
then such a smooth read operation will not be possible. Reading the input and allocating it to two {\tt ymm} registers will be more
complicated and hence will require more operations. Since the block size for $2^{127}-1$ is 15 bytes, while the block size for $2^{130}-5$ is 16 bytes, reading the message
bytes and allocating to {\tt ymm} registers require more operations for $2^{127}-1$ than for $2^{130}-5$.

\section{Implementation and Timing Results \label{sec-times}}
We have made {\tt avx2} implementations of $\sym{polyHash}$ and $4\mbox{-}\sym{decBRWHash}$ for both the primes $2^{130}-5$ and $2^{127}-1$ in hand optimised
assembly language. For the $\sym{polyHash}$ implementations we considered the delayed reduction parameter $g$ to take the values $1,2,3$ and $4$. Using higher values
of $g$ leads to a loss in speed. For the $4\mbox{-}\sym{decBRWHash}$ implementations we considered the parameter $t$ in Algorithm~\ref{algo-B} to take the 
values $t=2,3,4$ and $5$. Higher values of $t$ would lead to a very large code size (since $2^t-1$ fragments of straight line code are required to implement
Step~\ref{B19}, as $r$ can take $2^t-1$ values).

Recall that if the key clamping mandated by Poly1305 is implemented, then $\sym{polyHash}1305$ is the same as Poly1305 for messages whose lengths are multiples of 8.
Since $\sym{Poly}1305$ is an extensively used hash function, new implementations of it are of practical importance. 
To the best of our knowledge, our implementations of $\sym{polyHash}1305$ provide the first hand optimised {\tt avx2} assembly language implementations
of Poly1305 using with different values of the parameter $g$ determining the extent of delayed reduction. 
The codes for our implementations are available from the following links.
\begin{tabbing}
	\ \ \ \ \= \kill
	\> \url{https://github.com/kn-cs/dec-BRWHash} \\
	\> \url{https://github.com/kn-cs/vec-polyHash}
\end{tabbing}
We recorded an extensive set of timings for all the hash functions.
The timing measurements were taken on a single core of 11th Gen Intel Core i7-1185G7 @ 3.00GHz $\times$ 4 Tiger Lake processor using 31.1 GiB memory.
During the experiments, turbo boost and hyperthreading options were turned off. The OS was Ubuntu 20.04.3 LTS and the code was compiled using
gcc version 9.4.0. The following flags were used during compilation.
\begin{center}
	-march=native -mtune=native -m64 -O3 -funroll-loops -fomit-frame-pointer
\end{center}
We counted CPU cycles using the microlibrary ``libcpucycles'' (see \url{https://cpucycles.cr.yp.to/}) through the {\tt amd64-pmc} counter
(see \url{https://cpucycles.cr.yp.to/counters.html}) which requires a 64-bit Intel/AMD platform and Linux perf\_event interface. The {\tt amd64-pmc} counter
accesses a cycle counter through RDPMC and requires
\url{/proc/sys/kernel/perf\_event\_paranoid} to be at most 2 for user-level RDPMC access. This counter runs at clock frequency of the CPU core.

The timing results show that the {\tt avx2} implementation of $\sym{polyHash}1271$ is slower than the {\tt avx2} implementation of $\sym{polyHash}1305$,
and the {\tt avx2} implementation of $4\mbox{-}\sym{decBRWHash}1271$ is slower than the {\tt avx2} implementation of $4\mbox{-}\sym{decBRWHash}1305$. This confirms
the theoretically predicted slowdown discussed in Section~\ref{subsec-prime-comp}. In view of this slowdown, we do not present the timing results for
$2^{127}-1$.

The timing results for the {\tt avx2} implementation of $\sym{polyHash}1305$ and $4\mbox{-}\sym{decBRWHash}1305$ are shown in 
Tables~\ref{tab-p130-5-1to8} to~\ref{tab-p130-5-25to32} for messages having 1 to 32 blocks, and in Table~\ref{tab-comparison-p1305} for messages having 50 to 500 blocks.
For comparison, in these tables we also present timing results from~\cite{cryptoeprint:2025/1224} for the {\tt maax} implementation of $\sym{polyHash}1305$ and 
$\sym{BRWHash}1305$ obtained on the
above platform. In Table~\ref{tab-comparison-p1305-long}, we present the timing results for the {\tt avx2} implementation of $\sym{polyHash}1305$ with $g=4$
and $4\mbox{-}\sym{decBRWHash}1305$ with $t=5$ for messages having 1000 to 5000 blocks. 
The figures in the cells of the tables denote the number of cycles per byte required to compute the digest by the corresponding hash function
with the stated value of the parameter. Each cell has two figures, the figure on the top denotes the number of cycles per byte when the required key powers are pre-computed
and stored (i.e. the time for generating the key powers are not included in the time for hashing), while the number on the bottom denotes the number of cycles per byte when 
the required key powers are computed on the fly (i.e. the time for generating the key powers are included in the time for hashing).
Recall from Table~\ref{tab-params} that the block size $n$ is 128 bits for the prime $2^{130}-5$, and so the number of blocks mentioned in the tables
can be converted to number of bytes by multiplying with 16.
Based on the timings results in Table~\ref{tab-p130-5-1to8} to~\ref{tab-comparison-p1305}, we have the following general observations. \ \\
\begin{compactenum}
\item For $\sym{polyHash}1305$, the {\tt avx2} implementation is faster than the {\tt maax} implementation for messages with 16 or more
blocks (equivalently 256 or more bytes).
\item For $4\mbox{-}\sym{decBRWHash}1305$, the {\tt avx2} implementation is faster than the {\tt maax} implementation for messages with about 100 or more 
blocks (equivalently about 1600 or more bytes).
	\item For the {\tt avx2} implementation of $\sym{polyHash}1305$, in general $g=4$ is a faster option than $1\leq g<4$. For {\tt avx2} implementation
		of $4\mbox{-}\sym{decBRWHash}1305$, in general $t=5$ is a faster option than $2\leq t<5$. 
	\item When the number of blocks is about 500 or more, there is not much difference in the speeds of computations between when the key powers are pre-computed,
		and when the key powers are computed on-the-fly.
\end{compactenum}
\ \\
It is difficult to make detailed timing measurements for long messages. Nevertheless, we made measurements for messages having $2^{15}=32768$ blocks (equivalently $2^{19}$ bytes);
the {\tt avx2} implementation of $\sym{polyHash}1305$ with $g=4$ takes $0.425$ cycles per byte and $0.426$ cycles per byte according as whether the key powers are pre-computed
or not; while the {\tt avx2} implementation of $4\mbox{-}\sym{decBRWHash}1305$ with $t=5$ takes 0.332 cycles per byte and 0.333 cyles per byte according as whether the key
powers are pre-computed or not. 

We summarise the comparison between the {\tt avx2} implementations of $\sym{polyHash}1305$ with $g=4$ and $4\mbox{-}\sym{decBRWHash}1305$ with $t=5$. \ \\
\begin{compactdesc}
\item {\textit{Key powers computed on-the-fly (bottom numbers in the cells).}}
	\begin{compactenum}
	\item $4\mbox{-}\sym{decBRWHash}1305$ is faster than $\sym{polyHash}1305$ for messages having 16 or more blocks (equivalently, 256 or more bytes).
	\item For messages having 50 to 500 blocks (equivalently, 800 to 8000 bytes), the speed-up of $4\mbox{-}\sym{decBRWHash}1305$ over $\sym{polyHash}1305$
		is in the range of about 10\% to 17\%.
	\item For messages having 1000 to 5000 blocks (equivalently, 16 KB to 80 KB), the speed-up of $4\mbox{-}\sym{decBRWHash}1305$ over $\sym{polyHash}1305$
                is in the range of about 18\% to 20\%.
	\item For messages having 32768 blocks (equivalently, $2^{19}$ bytes), the speed-up of $4\mbox{-}\sym{decBRWHash}1305$ over $\sym{polyHash}1305$ is about 23\%.
	\end{compactenum}
\item {\textit{Pre-computed key powers (top numbers in the cells).}}
\begin{compactenum}
	\item $4\mbox{-}\sym{decBRWHash}1305$ is faster than $\sym{polyHash}1305$ for messages with about 150 or more blocks (equivalently about 2400 or more bytes). 
	\item For messages having 200 to 500 blocks (equivalently, 3200 to 8000 bytes), the speed-up of $4\mbox{-}\sym{decBRWHash}1305$ over $\sym{polyHash}1305$
		is in the range of about 4\% to 16\%.
	\item For messages having 1000 to 5000 blocks (equivalently, 16 KB to 80 KB), the speed-up of $4\mbox{-}\sym{decBRWHash}1305$ over $\sym{polyHash}1305$
                is in the range of about 18\% to 21\%.
	\item For messages having 32768 blocks (equivalently, $2^{19}$ bytes), the speed-up of $4\mbox{-}\sym{decBRWHash}1305$ over $\sym{polyHash}1305$ is about 23\%.
\end{compactenum}
\end{compactdesc}
\ \\

To summarise, for {\tt avx2} implementations with key powers computed on-the-fly, $4\mbox{-}\sym{decBRWHash}1305$ is faster than $\sym{polyHash}1305$
for messages of lengths 256 bytes or more, achieves a speed-up of about 16\% for messages 
which are a few kilobytes long, and the speed-up improves to about 23\% for messages which are a few megabytes long. 
In typical file systems~\cite{DBLP:conf/asist/DinneenN21}, text files are usually a few kilobytes long while media files such as 
pictures, audio and video files, are about a few megabytes long. For such files, i.e. both text files and media files, hashing using 
$4\mbox{-}\sym{decBRWHash}1305$ will be substantially faster than hashing using $\sym{polyHash}1305$.

\section{Conclusion\label{sec-conclu}}
We proposed a new AXU hash function based on BRW polynomials. The hash function is a generalisation of the hash function based on BRW polynomials, with the
generalisation permitting efficient SIMD implementations. For the prime $2^{130}-5$, SIMD implementations of the new hash function using {\tt avx2} instructions
on modern Intel processors show that the new hash function is faster than the well known Poly1305 hash function for messages longer than a few hundred bytes achieving
a speed-up of about 16\% for message lengths in kilobyte range to 23\% for message lengths in the megabyte range. This makes the new hash function an attractive alternative
to Poly1305 for use in authentication and authenticated encryption systems for general files found in typical file systems.

\begin{table}
	\centering
	{\scriptsize
		\begin{tabular}{|l|r|c|c|c|c|c|c|c|c|} \cline{3-10}
\multicolumn{2}{c}{} & \multicolumn{8}{|c|}{\# msg blks} \\ \cline{3-10}
\multicolumn{2}{c|}{} & 1 & 2 & 3 & 4 & 5 & 6 & 7 & 8 \\ \hline

\multirow{9}{*}{ \begin{tabular}{c} $\sym{polyHash}1305$ \\ ({\tt maax}) \end{tabular} } & $g=1$ & 2.31 & 1.69 & 1.56 & 1.52 & 1.49 & 1.47 & 1.46 & 1.45 \\ \cline{2-10}
& $g=4$ & 2.38 & 1.84 & 1.48 & 1.38 & 1.40 & 1.23 & 1.16 & 1.15 \\
& & 2.38 & 1.84 & 1.88 & 2.05 & 2.14 & 1.86 & 1.71 & 1.63 \\ \cline{2-10}
& $g=8$ & 2.38 & 1.84 & 1.48 & 1.28 & 1.15 & 1.06 & 1.01 & 1.03  \\
& & 2.38 & 1.84 & 1.88 & 1.94 & 1.94 & 1.95 & 1.98 & 2.09  \\ \cline{2-10}
& $g=16$ & 2.25 & 1.84 & 1.48 & 1.28 & 1.15 & 1.06 & 1.01 & 0.97  \\
& & 2.25 & 1.84 & 1.88 & 1.94 & 1.94 & 1.95 & 1.98 & 2.03  \\ \cline{2-10}
& $g=32$ & 2.31 & 1.84 & 1.48 & 1.28 & 1.15 & 1.06 & 1.01 & 0.97  \\
& & 2.31 & 1.84 & 1.88 & 1.92 & 1.94 & 1.98 & 1.99 & 2.02  \\ \hline
\multirow{7}{*}{ \begin{tabular}{c} $\sym{polyHash}1305$ \\ ({\tt avx2}) \end{tabular} } & $g=1$ & 2.38 & 1.50 & 1.23 & 1.55 & 1.56 & 1.41 & 1.30 & 1.04  \\
& & 2.75 & 2.12 & 2.33 & 3.27 & 2.94 & 2.55 & 2.29 & 1.99  \\ \cline{2-10}
& $g=2$ & 2.38 & 1.50 & 1.23 & 1.55 & 1.56 & 1.42 & 1.30 & 1.05  \\
& & 2.81 & 2.12 & 2.31 & 3.28 & 2.95 & 2.55 & 2.29 & 2.02  \\ \cline{2-10}
& $g=3$ & 2.44 & 1.50 & 1.23 & 1.55 & 1.52 & 1.41 & 1.30 & 1.06  \\
& & 2.75 & 2.12 & 2.31 & 3.25 & 2.92 & 2.53 & 2.27 & 2.03  \\ \cline{2-10}
& $g=4$ & 2.38 & 1.50 & 1.25 & 1.52 & 1.49 & 1.39 & 1.28 & 1.07  \\
& & 2.75 & 2.12 & 2.31 & 3.25 & 2.90 & 2.52 & 2.26 & 2.03  \\ \hline
\multirow{8}{*}{ \begin{tabular}{c} $\sym{BRWHash}1305$ \\ ({\tt maax}) \end{tabular}  } & $t=2$ & 2.06 & 1.22 & 0.96 & 1.16 & 0.93 & 0.81 & 0.81 & 0.91  \\
& & 2.06 & 1.22 & 1.38 & 1.73 & 1.40 & 1.21 & 1.14 & 1.40  \\ \cline{2-10}
& $t=3$ & 2.12 & 1.19 & 0.98 & 1.09 & 0.90 & 0.80 & 0.80 & 0.89  \\
& & 2.12 & 1.19 & 1.35 & 1.62 & 1.35 & 1.17 & 1.11 & 1.38  \\ \cline{2-10}
& $t=4$ & 2.19 & 1.22 & 0.98 & 1.08 & 0.90 & 0.81 & 0.80 & 0.89  \\
& & 2.19 & 1.22 & 1.35 & 1.62 & 1.34 & 1.18 & 1.11 & 1.38  \\ \cline{2-10}
& $t=5$ & 2.12 & 1.22 & 1.02 & 1.08 & 0.91 & 0.80 & 0.80 & 0.88  \\
& & 2.12 & 1.22 & 1.38 & 1.62 & 1.34 & 1.18 & 1.12 & 1.38  \\ \hline
\multirow{8}{*}{\begin{tabular}{c} $4\mbox{-}\sym{decBRWHash}1305$ \\ ({\tt avx2}) \end{tabular} } & $t=2$ & 11.19 & 5.62 & 3.73 & 2.80 & 3.17 & 2.65 & 2.27 & 1.98  \\
& & 13.31 & 6.66 & 4.44 & 3.33 & 4.08 & 3.40 & 2.91 & 2.55  \\ \cline{2-10}
& $t=3$ & 11.25 & 5.62 & 3.73 & 2.81 & 3.16 & 2.64 & 2.26 & 1.98  \\
& & 13.31 & 6.66 & 4.44 & 3.33 & 4.06 & 3.39 & 2.90 & 2.55  \\ \cline{2-10}
& $t=4$ & 11.25 & 5.62 & 3.75 & 2.80 & 3.16 & 2.64 & 2.26 & 1.98  \\
& & 13.31 & 6.66 & 4.44 & 3.33 & 4.08 & 3.40 & 2.91 & 2.55  \\ \cline{2-10}
& $t=5$ & 11.25 & 5.62 & 3.73 & 2.81 & 3.16 & 2.64 & 2.26 & 1.98  \\
& & 13.31 & 6.66 & 4.44 & 3.33 & 4.08 & 3.40 & 2.91 & 2.55  \\ \hline

		\end{tabular}
	}
\caption{Cycles/byte measurements for 1 to 8 blocks for the different hash functions based on the prime $2^{130}-5$. \label{tab-p130-5-1to8} }
\end{table}

\begin{table}
\centering
{\scriptsize
\begin{tabular}{|l|r|c|c|c|c|c|c|c|c|}
	\cline{3-10}
	\multicolumn{2}{c}{} & \multicolumn{8}{|c|}{\# msg blks} \\ \cline{3-10}
	\multicolumn{2}{c|}{} & 9 & 10 & 11 & 12 & 13 & 14 & 15 & 16 \\ \hline

\multirow{9}{*}{ \begin{tabular}{c} $\sym{polyHash}1305$ \\ ({\tt maax}) \end{tabular} } & $g=1$ & 1.47 & 1.43 & 1.45 & 1.47 & 1.48 & 1.47 & 1.47 & 1.46 \\ \cline{2-10}
& $g=4$ & 1.19 & 1.12 & 1.07 & 1.07 & 1.11 & 1.06 & 1.03 & 1.03 \\
& & 1.60 & 1.49 & 1.42 & 1.40 & 1.39 & 1.33 & 1.28 & 1.27 \\ \cline{2-10}
& $g=8$ & 1.09 & 1.02 & 0.98 & 0.96 & 0.94 & 0.91 & 0.90 & 0.92  \\
& & 2.11 & 1.94 & 1.83 & 1.73 & 1.65 & 1.57 & 1.51 & 1.50  \\ \cline{2-10}
& $g=16$ & 0.94 & 0.91 & 0.88 & 0.87 & 0.85 & 0.84 & 0.82 & 0.85  \\
& & 2.00 & 1.98 & 2.01 & 2.04 & 2.06 & 2.08 & 2.10 & 2.15  \\ \cline{2-10}
& $g=32$ & 0.94 & 0.91 & 0.88 & 0.86 & 0.85 & 0.84 & 0.82 & 0.82  \\
& & 2.01 & 1.99 & 2.01 & 2.03 & 2.06 & 2.08 & 2.10 & 2.12  \\ \hline
\multirow{7}{*}{ \begin{tabular}{c} $\sym{polyHash}1305$ \\ ({\tt avx2}) \end{tabular} } & $g=1$ & 1.12 & 1.06 & 1.03 & 0.88 & 0.95 & 0.92 & 0.91 & 0.80  \\
& & 1.94 & 1.82 & 1.71 & 1.52 & 1.52 & 1.46 & 1.41 & 1.28  \\ \cline{2-10}
& $g=2$ & 1.12 & 1.07 & 1.03 & 0.81 & 0.89 & 0.87 & 0.86 & 0.76  \\
& & 1.96 & 1.84 & 1.72 & 1.69 & 1.68 & 1.61 & 1.54 & 1.41  \\ \cline{2-10}
& $g=3$ & 1.13 & 1.08 & 1.04 & 0.93 & 0.99 & 0.96 & 0.94 & 0.79  \\
& & 1.97 & 1.84 & 1.73 & 1.78 & 1.77 & 1.69 & 1.62 & 1.62  \\ \cline{2-10}
& $g=4$ & 1.15 & 1.08 & 1.05 & 0.94 & 1.00 & 0.96 & 0.94 & 0.80  \\
& & 1.98 & 1.85 & 1.74 & 1.81 & 1.79 & 1.71 & 1.64 & 1.62  \\ \hline
\multirow{8}{*}{ \begin{tabular}{c} $\sym{BRWHash}1305$ \\ ({\tt maax}) \end{tabular}  } & $t=2$ & 0.81 & 0.78 & 0.76 & 0.83 & 0.77 & 0.74 & 0.74 & 0.79  \\
& & 1.25 & 1.16 & 1.12 & 1.16 & 1.08 & 1.03 & 1.01 & 1.15  \\ \cline{2-10}
& $t=3$ & 0.81 & 0.74 & 0.74 & 0.80 & 0.76 & 0.72 & 0.71 & 0.77  \\
& & 1.23 & 1.14 & 1.10 & 1.13 & 1.05 & 1.00 & 0.97 & 1.12  \\ \cline{2-10}
& $t=4$ & 0.81 & 0.76 & 0.76 & 0.80 & 0.76 & 0.72 & 0.72 & 0.76  \\
& & 1.24 & 1.14 & 1.10 & 1.12 & 1.05 & 0.99 & 0.97 & 1.11  \\ \cline{2-10}
& $t=5$ & 0.81 & 0.76 & 0.76 & 0.80 & 0.75 & 0.72 & 0.73 & 0.78  \\
& & 1.24 & 1.14 & 1.10 & 1.12 & 1.05 & 0.99 & 0.97 & 1.12  \\ \hline
\multirow{8}{*}{\begin{tabular}{c} $4\mbox{-}\sym{decBRWHash}1305$ \\ ({\tt avx2}) \end{tabular} } & $t=2$ & 1.83 & 1.64 & 1.49 & 1.37 & 1.62 & 1.50 & 1.40 & 1.31  \\
& & 2.32 & 2.09 & 1.90 & 1.74 & 1.96 & 1.82 & 1.70 & 1.59  \\ \cline{2-10}
& $t=3$ & 1.83 & 1.64 & 1.49 & 1.37 & 1.61 & 1.49 & 1.39 & 1.30  \\
& & 2.33 & 2.09 & 1.90 & 1.74 & 1.95 & 1.81 & 1.69 & 1.59  \\ \cline{2-10}
& $t=4$ & 1.82 & 1.64 & 1.49 & 1.36 & 1.61 & 1.49 & 1.39 & 1.30  \\
& & 2.33 & 2.09 & 1.90 & 1.74 & 1.95 & 1.81 & 1.69 & 1.58  \\ \cline{2-10}
& $t=5$ & 1.82 & 1.64 & 1.49 & 1.37 & 1.60 & 1.48 & 1.38 & 1.30  \\
& & 2.32 & 2.09 & 1.90 & 1.74 & 1.95 & 1.81 & 1.69 & 1.58  \\ \hline

\end{tabular}
}
\caption{Cycles/byte measurements for 9 to 16 blocks for the various hash functions based on the prime $2^{130}-5$. \label{tab-p130-5-9to16} }
\end{table}

\begin{table}
\begin{center}
{\scriptsize 
\begin{tabular}{|l|r|c|c|c|c|c|c|c|c|}
\cline{3-10}
\multicolumn{2}{c}{} & \multicolumn{8}{|c|}{\# msg blks} \\ \cline{3-10}
\multicolumn{2}{c|}{} & 17 & 18 & 19 & 20 & 21 & 22 & 23 & 24 \\ \hline

\multirow{9}{*}{ \begin{tabular}{c} $\sym{polyHash}1305$ \\ ({\tt maax}) \end{tabular} } & $g=1$ & 1.40 & 1.40 & 1.41 & 1.41 & 1.41 & 1.41 & 1.41 & 1.40 \\ \cline{2-10}
& $g=4$ & 1.06 & 1.03 & 1.00 & 1.01 & 1.03 & 1.05 & 0.99 & 0.99 \\
& & 1.28 & 1.24 & 1.21 & 1.20 & 1.21 & 1.18 & 1.15 & 1.15 \\ \cline{2-10}
& $g=8$ & 0.95 & 0.92 & 0.90 & 0.90 & 0.93 & 0.87 & 0.86 & 0.88  \\
& & 1.49 & 1.43 & 1.39 & 1.36 & 1.32 & 1.29 & 1.26 & 1.26  \\ \cline{2-10}
& $g=16$ & 0.89 & 0.86 & 0.85 & 0.84 & 0.85 & 0.83 & 0.88 & 0.87  \\
& & 2.16 & 2.06 & 1.99 & 1.92 & 1.88 & 1.84 & 1.76 & 1.71  \\ \cline{2-10}
& $g=32$ & 0.81 & 0.80 & 0.79 & 0.79 & 0.79 & 0.79 & 0.78 & 0.88  \\
& & 2.08 & 2.10 & 2.11 & 2.12 & 2.12 & 2.15 & 2.17 & 2.19  \\ \hline
\multirow{7}{*}{ \begin{tabular}{c} $\sym{polyHash}1305$ \\ ({\tt avx2}) \end{tabular} } & $g=1$ & 0.86 & 0.84 & 0.84 & 0.75 & 0.80 & 0.79 & 0.79 & 0.72  \\
& & 1.30 & 1.26 & 1.23 & 1.14 & 1.15 & 1.14 & 1.11 & 1.04  \\ \cline{2-10}
& $g=2$ & 0.81 & 0.81 & 0.79 & 0.68 & 0.74 & 0.73 & 0.73 & 0.67  \\
& & 1.41 & 1.38 & 1.33 & 1.19 & 1.21 & 1.19 & 1.16 & 1.09  \\ \cline{2-10}
& $g=3$ & 0.79 & 0.77 & 0.77 & 0.69 & 0.74 & 0.74 & 0.73 & 0.69  \\
& & 1.53 & 1.48 & 1.44 & 1.34 & 1.35 & 1.32 & 1.29 & 1.23  \\ \cline{2-10}
& $g=4$ & 0.85 & 0.83 & 0.82 & 0.70 & 0.72 & 0.71 & 0.71 & 0.64  \\
& & 1.62 & 1.57 & 1.51 & 1.49 & 1.45 & 1.41 & 1.38 & 1.29  \\ \hline
\multirow{8}{*}{ \begin{tabular}{c} $\sym{BRWHash}1305$ \\ ({\tt maax}) \end{tabular}  } & $t=2$ & 0.75 & 0.72 & 0.74 & 0.78 & 0.74 & 0.72 & 0.73 & 0.76  \\
& & 1.08 & 1.05 & 1.03 & 1.06 & 1.01 & 0.98 & 0.97 & 0.99  \\ \cline{2-10}
& $t=3$ & 0.73 & 0.70 & 0.70 & 0.74 & 0.71 & 0.70 & 0.70 & 0.73  \\
& & 1.06 & 1.02 & 1.00 & 1.02 & 0.98 & 0.95 & 0.94 & 0.97  \\ \cline{2-10}
& $t=4$ & 0.72 & 0.70 & 0.69 & 0.73 & 0.70 & 0.68 & 0.68 & 0.71  \\
& & 1.04 & 1.00 & 0.99 & 1.01 & 0.96 & 0.94 & 0.93 & 0.95  \\ \cline{2-10}
& $t=5$ & 0.74 & 0.72 & 0.71 & 0.75 & 0.72 & 0.70 & 0.70 & 0.73  \\
& & 1.06 & 1.01 & 1.00 & 1.02 & 0.98 & 0.95 & 0.93 & 0.96  \\ \hline
\multirow{8}{*}{\begin{tabular}{c} $4\mbox{-}\sym{decBRWHash}1305$ \\ ({\tt avx2}) \end{tabular} } & $t=2$ & 1.26 & 1.19 & 1.12 & 1.07 & 1.09 & 1.04 & 1.00 & 0.96  \\
& & 1.53 & 1.44 & 1.37 & 1.30 & 1.31 & 1.25 & 1.19 & 1.14  \\ \cline{2-10}
& $t=3$ & 1.26 & 1.19 & 1.12 & 1.07 & 1.10 & 1.05 & 1.00 & 0.96  \\
& & 1.52 & 1.44 & 1.36 & 1.29 & 1.31 & 1.25 & 1.20 & 1.15  \\ \cline{2-10}
& $t=4$ & 1.25 & 1.18 & 1.12 & 1.06 & 1.10 & 1.05 & 1.00 & 0.96  \\
& & 1.52 & 1.43 & 1.36 & 1.29 & 1.31 & 1.25 & 1.20 & 1.15  \\ \cline{2-10}
& $t=5$ & 1.25 & 1.18 & 1.12 & 1.06 & 1.10 & 1.05 & 1.00 & 0.96  \\
& & 1.51 & 1.43 & 1.36 & 1.29 & 1.31 & 1.25 & 1.20 & 1.15  \\ \hline

\end{tabular}
}
\caption{Cycles/byte measurements for 17 to 24 blocks for the various hash functions based on the prime $2^{130}-5$. \label{tab-p130-5-71to24} }
\end{center}
\end{table}

\begin{table}
\begin{center}
{\scriptsize 
\begin{tabular}{|l|r|c|c|c|c|c|c|c|c|}
\cline{3-10}
\multicolumn{2}{c}{} & \multicolumn{8}{|c|}{\# msg blks} \\ \cline{3-10}
\multicolumn{2}{c|}{} & 25 & 26 & 27 & 28 & 29 & 30 & 31 & 32 \\ \hline

\multirow{9}{*}{ \begin{tabular}{c} $\sym{polyHash}1305$ \\ ({\tt maax}) \end{tabular} } & $g=1$ & 1.40 & 1.40 & 1.41 & 1.40 & 1.40 & 1.40 & 1.40 & 1.40 \\ \cline{2-10}
& $g=4$ & 1.01 & 0.99 & 0.98 & 0.98 & 1.00 & 0.98 & 0.97 & 0.97 \\
& & 1.16 & 1.14 & 1.12 & 1.12 & 1.13 & 1.11 & 1.09 & 1.09 \\ \cline{2-10}
& $g=8$ & 0.90 & 0.88 & 0.87 & 0.87 & 0.86 & 0.85 & 0.85 & 0.86  \\
& & 1.27 & 1.24 & 1.22 & 1.20 & 1.18 & 1.16 & 1.14 & 1.14  \\ \cline{2-10}
& $g=16$ & 0.89 & 0.88 & 0.85 & 0.80 & 0.79 & 0.78 & 0.85 & 0.82  \\
& & 1.68 & 1.63 & 1.60 & 1.57 & 1.54 & 1.52 & 1.51 & 1.47  \\ \cline{2-10}
& $g=32$ & 0.78 & 0.79 & 0.76 & 0.88 & 0.81 & 0.79 & 0.80 & 0.92  \\
& & 2.17 & 2.20 & 2.19 & 2.23 & 2.24 & 2.24 & 2.25 & 2.31  \\ \hline
\multirow{7}{*}{ \begin{tabular}{c} $\sym{polyHash}1305$ \\ ({\tt avx2}) \end{tabular} } & $g=1$ & 0.76 & 0.75 & 0.75 & 0.70 & 0.73 & 0.73 & 0.73 & 0.68  \\
& & 1.06 & 1.05 & 1.03 & 0.97 & 0.99 & 0.98 & 0.97 & 0.92  \\ \cline{2-10}
& $g=2$ & 0.70 & 0.71 & 0.70 & 0.62 & 0.66 & 0.66 & 0.66 & 0.62  \\
& & 1.10 & 1.09 & 1.07 & 0.99 & 1.01 & 1.00 & 0.98 & 0.94  \\ \cline{2-10}
& $g=3$ & 0.73 & 0.72 & 0.72 & 0.65 & 0.64 & 0.64 & 0.64 & 0.60  \\
& & 1.25 & 1.22 & 1.20 & 1.11 & 1.09 & 1.07 & 1.06 & 1.01  \\ \cline{2-10}
& $g=4$ & 0.69 & 0.69 & 0.68 & 0.65 & 0.69 & 0.69 & 0.69 & 0.62  \\
& & 1.30 & 1.28 & 1.25 & 1.21 & 1.22 & 1.20 & 1.18 & 1.11  \\ \hline
\multirow{8}{*}{ \begin{tabular}{c} $\sym{BRWHash}1305$ \\ ({\tt maax}) \end{tabular}  } & $t=2$ & 0.73 & 0.72 & 0.72 & 0.75 & 0.72 & 0.72 & 0.72 & 0.75  \\
& & 0.96 & 0.94 & 0.93 & 0.96 & 0.92 & 0.91 & 0.90 & 0.97  \\ \cline{2-10}
& $t=3$ & 0.71 & 0.69 & 0.69 & 0.73 & 0.69 & 0.69 & 0.68 & 0.71  \\
& & 0.93 & 0.91 & 0.90 & 0.92 & 0.89 & 0.87 & 0.87 & 0.95  \\ \cline{2-10}
& $t=4$ & 0.69 & 0.68 & 0.68 & 0.71 & 0.68 & 0.67 & 0.67 & 0.70  \\
& & 0.92 & 0.89 & 0.88 & 0.90 & 0.87 & 0.85 & 0.85 & 0.93  \\ \cline{2-10}
& $t=5$ & 0.70 & 0.69 & 0.69 & 0.72 & 0.70 & 0.69 & 0.69 & 0.69  \\
& & 0.93 & 0.90 & 0.90 & 0.92 & 0.89 & 0.87 & 0.87 & 0.92  \\ \hline
\multirow{8}{*}{\begin{tabular}{c} $4\mbox{-}\sym{decBRWHash}1305$ \\ ({\tt avx2}) \end{tabular} } & $t=2$ & 0.94 & 0.91 & 0.88 & 0.84 & 0.97 & 0.94 & 0.91 & 0.88  \\
& & 1.12 & 1.08 & 1.04 & 1.00 & 1.20 & 1.16 & 1.12 & 1.09  \\ \cline{2-10}
& $t=3$ & 0.95 & 0.91 & 0.88 & 0.85 & 0.98 & 0.95 & 0.92 & 0.89  \\
& & 1.13 & 1.09 & 1.05 & 1.01 & 1.20 & 1.16 & 1.12 & 1.09  \\ \cline{2-10}
& $t=4$ & 0.95 & 0.91 & 0.88 & 0.85 & 0.97 & 0.94 & 0.91 & 0.88  \\
& & 1.13 & 1.09 & 1.05 & 1.01 & 1.20 & 1.16 & 1.12 & 1.09  \\ \cline{2-10}
& $t=5$ & 0.95 & 0.91 & 0.88 & 0.85 & 0.97 & 0.94 & 0.91 & 0.88  \\
& & 1.13 & 1.08 & 1.04 & 1.01 & 1.20 & 1.16 & 1.12 & 1.09  \\ \hline

\end{tabular}
}
\caption{Cycles/byte measurements for 25 to 32 blocks for the various hash functions based on the prime $2^{130}-5$. \label{tab-p130-5-25to32} }
\end{center}
\end{table}

\begin{table}
\centering
{\scriptsize
\begin{tabular}{|l|r|r|c|c|c|c|c|c|c|c|c|c|}
		\cline{3-12}
		\multicolumn{2}{c|}{} & \multicolumn{10}{|c|}{\# msg blks} \\ \cline{3-12}
		\multicolumn{2}{c|}{} & 50 & 100 & 150 & 200 & 250 & 300 & 350 & 400 & 450 & 500 \\ \hline

\multirow{9}{*}{ \begin{tabular}{c} $\sym{polyHash}1305$ \\ ({\tt maax}) \end{tabular} } & $g=1$ & 1.38 & 1.37 & 1.37 & 1.36 & 1.36 & 1.36 & 1.36 & 1.36 & 1.36 & 1.36 \\ \cline{2-12}
& $g=4$ & 0.95 & 0.94 & 0.93 & 0.93 & 0.93 & 0.93 & 0.92 & 0.92 & 0.92 & 0.92 \\
& & 1.03 & 0.97 & 0.95 & 0.94 & 0.94 & 0.94 & 0.93 & 0.93 & 0.93 & 0.93 \\ \cline{2-12}
& $g=8$ & 0.84 & 0.83 & 0.81 & 0.81 & 0.81 & 0.81 & 0.81 & 0.81 & 0.81 & 0.80  \\
& & 1.02 & 0.91 & 0.87 & 0.85 & 0.85 & 0.84 & 0.83 & 0.83 & 0.82 & 0.82  \\ \cline{2-12}
& $g=16$ & 0.79 & 0.77 & 0.76 & 0.75 & 0.76 & 0.76 & 0.75 & 0.74 & 0.75 & 0.74  \\
& & 1.22 & 0.97 & 0.89 & 0.85 & 0.84 & 0.82 & 0.81 & 0.80 & 0.79 & 0.78  \\ \cline{2-12}
& $g=32$ & 0.86 & 0.84 & 0.84 & 0.81 & 0.81 & 0.81 & 0.81 & 0.79 & 0.79 & 0.79  \\
& & 1.79 & 1.29 & 1.14 & 1.03 & 0.98 & 0.95 & 0.94 & 0.91 & 0.89 & 0.88  \\ \hline
\multirow{7}{*}{\begin{tabular}{c} $\sym{polyHash}1305$ \\ ({\tt avx2}) \end{tabular} } & $g=1$ & 0.66 & 0.60 & 0.59 & 0.58 & 0.59 & 0.58 & 0.58 & 0.57 & 0.58 & 0.57  \\
& & 0.81 & 0.68 & 0.65 & 0.62 & 0.61 & 0.60 & 0.60 & 0.59 & 0.59 & 0.58  \\ \cline{2-12}
& $g=2$ & 0.59 & 0.51 & 0.51 & 0.49 & 0.49 & 0.48 & 0.49 & 0.48 & 0.49 & 0.48  \\
& & 0.79 & 0.61 & 0.57 & 0.54 & 0.53 & 0.52 & 0.51 & 0.51 & 0.51 & 0.50  \\ \cline{2-12}
& $g=3$ & 0.59 & 0.50 & 0.48 & 0.47 & 0.47 & 0.46 & 0.47 & 0.46 & 0.46 & 0.45  \\
& & 0.85 & 0.63 & 0.57 & 0.53 & 0.52 & 0.51 & 0.50 & 0.49 & 0.49 & 0.48  \\ \cline{2-12}
& $g=4$ & 0.58 & 0.48 & 0.47 & 0.45 & 0.45 & 0.45 & 0.45 & 0.44 & 0.44 & 0.44  \\
& & 0.89 & 0.64 & 0.57 & 0.53 & 0.51 & 0.50 & 0.49 & 0.48 & 0.47 & 0.47  \\ \hline
\multirow{8}{*}{ \begin{tabular}{c} $\sym{BRWHash}1305$ \\ ({\tt maax}) \end{tabular} } & $t=2$ & 0.70 & 0.71 & 0.71 & 0.71 & 0.70 & 0.71 & 0.70 & 0.70 & 0.70 & 0.70  \\
& & 0.85 & 0.80 & 0.77 & 0.76 & 0.75 & 0.75 & 0.74 & 0.73 & 0.73 & 0.73  \\ \cline{2-12}
& $t=3$ & 0.68 & 0.67 & 0.66 & 0.67 & 0.67 & 0.67 & 0.66 & 0.67 & 0.66 & 0.66  \\
& & 0.82 & 0.77 & 0.74 & 0.72 & 0.71 & 0.71 & 0.70 & 0.70 & 0.69 & 0.69  \\ \cline{2-12}
& $t=4$ & 0.66 & 0.66 & 0.65 & 0.65 & 0.65 & 0.65 & 0.65 & 0.65 & 0.65 & 0.65  \\
& & 0.81 & 0.75 & 0.72 & 0.71 & 0.69 & 0.69 & 0.68 & 0.68 & 0.68 & 0.68  \\ \cline{2-12}
& $t=5$ & 0.65 & 0.66 & 0.65 & 0.65 & 0.65 & 0.65 & 0.64 & 0.65 & 0.64 & 0.65  \\
& & 0.80 & 0.75 & 0.72 & 0.71 & 0.69 & 0.69 & 0.68 & 0.68 & 0.67 & 0.67  \\ \hline
\multirow{8}{*}{ \begin{tabular}{c} $4\mbox{-}\sym{decBRWHash}1305$ \\ ({\tt avx2}) \end{tabular} } & $t=2$ & 0.67 & 0.51 & 0.47 & 0.43 & 0.42 & 0.41 & 0.41 & 0.39 & 0.39 & 0.38  \\
& & 0.80 & 0.60 & 0.55 & 0.49 & 0.46 & 0.45 & 0.44 & 0.42 & 0.41 & 0.40  \\ \cline{2-12}
& $t=3$ & 0.67 & 0.51 & 0.47 & 0.43 & 0.41 & 0.40 & 0.40 & 0.39 & 0.38 & 0.38  \\
& & 0.80 & 0.60 & 0.54 & 0.49 & 0.46 & 0.45 & 0.43 & 0.42 & 0.41 & 0.40  \\ \cline{2-12}
& $t=4$ & 0.67 & 0.51 & 0.47 & 0.43 & 0.41 & 0.40 & 0.39 & 0.38 & 0.37 & 0.37  \\
& & 0.81 & 0.59 & 0.54 & 0.48 & 0.45 & 0.44 & 0.43 & 0.41 & 0.40 & 0.39  \\ \cline{2-12}
& $t=5$ & 0.67 & 0.51 & 0.47 & 0.43 & 0.41 & 0.40 & 0.39 & 0.38 & 0.37 & 0.37  \\
& & 0.80 & 0.60 & 0.54 & 0.48 & 0.45 & 0.44 & 0.43 & 0.41 & 0.40 & 0.39  \\ \hline

\end{tabular}
}
\caption{Cycles/byte measurements for 50 to 500 blocks for the various hash functions based on the prime $2^{130}-5$. \label{tab-comparison-p1305} }
\end{table}

\begin{table}
\centering
{\scriptsize
\begin{tabular}{|l|r|r|c|c|c|c|c|c|c|c|c|}
		\cline{3-11}
		\multicolumn{2}{c|}{} & \multicolumn{9}{|c|}{\# msg blks} \\ \cline{3-11}
		\multicolumn{2}{c|}{} & 1000 & 1500 & 2000 & 2500 & 3000 & 3500 & 4000 & 4500 & 5000 \\ \hline

\multirow{2}{*}{\begin{tabular}{c} $\sym{polyHash}1305$ \\ ({\tt avx2}) \end{tabular} } & $g=4$ & 0.430 & 0.429 & 0.430 & 0.431 & 0.428 & 0.428 & 0.427 & 0.427 & 0.427 \\
& & 0.445 & 0.439 & 0.438 & 0.437 & 0.433 & 0.432 & 0.431 & 0.431 & 0.430 \\ \hline
\multirow{2}{*}{ \begin{tabular}{c} $4\mbox{-}\sym{decBRWHash}1305$ \\ ({\tt avx2}) \end{tabular} } & $t=5$ & 0.350 & 0.341 & 0.340 & 0.338 & 0.340 & 0.341 & 0.339 & 0.338 & 0.337 \\
& & 0.364 & 0.353 & 0.349 & 0.345 & 0.345 & 0.346 & 0.343 & 0.343 & 0.342 \\ \hline

\end{tabular}
}
	\caption{Cycles/byte measurements for 1000 to 5000 blocks for {\tt avx2} implementation of $\sym{polyHash}1305$ with $g=4$ and
	$4\mbox{-}\sym{decBRWHash}1305$ with $t=5$. \label{tab-comparison-p1305-long} }
\end{table}

\bibliographystyle{plain}
\bibliography{modes}

\begin{thebibliography}{10}

\bibitem{Be04}
Daniel~J. Bernstein.
\newblock Floating-point arithmetic and message authentication, 2004.
\newblock \url{https://cr.yp.to/papers.html\#hash127}.

\bibitem{DBLP:conf/fse/Bernstein05}
Daniel~J. Bernstein.
\newblock The {Poly1305-AES} message-authentication code.
\newblock In Henri Gilbert and Helena Handschuh, editors, {\em FSE}, volume
  3557 of {\em Lecture Notes in Computer Science}, pages 32--49. Springer,
  2005.

\bibitem{Be07}
Daniel~J. Bernstein.
\newblock Polynomial evaluation and message authentication, 2007.
\newblock \url{http://cr.yp.to/papers.html\#pema}.

\bibitem{chacha}
Daniel~J. Bernstein.
\newblock {ChaCha, a variant of Salsa20}.
\newblock \url{https://cr.yp.to/papers.html\#chacha}, 2008.

\bibitem{BNS2025}
Sreyosi Bhattacharyya, Kaushik Nath, and Palash Sarkar.
\newblock Polynomial hashing over prime order fields.
\newblock {\em Advances in Mathematics of Communications}, 19(1):337--378,
  2025.

\bibitem{DBLP:journals/iet-ifs/BhattacharyyaS20}
Sreyosi Bhattacharyya and Palash Sarkar.
\newblock Improved {SIMD} implementation of {Poly1305}.
\newblock {\em {IET} Inf. Secur.}, 14(5):521--530, 2020.

\bibitem{DBLP:conf/crypto/BierbrauerJKS93}
J{\"u}rgen Bierbrauer, Thomas Johansson, Gregory Kabatianskii, and Ben J.~M.
  Smeets.
\newblock On families of hash functions via geometric codes and concatenation.
\newblock In Stinson \cite{DBLP:conf/crypto/1993}, pages 331--342.

\bibitem{DBLP:journals/jcss/CarterW79}
Larry Carter and Mark~N. Wegman.
\newblock Universal classes of hash functions.
\newblock {\em J. Comput. Syst. Sci.}, 18(2):143--154, 1979.

\bibitem{DBLP:journals/tosc/ChakrabortyGS17}
Debrup Chakraborty, Sebati Ghosh, and Palash Sarkar.
\newblock A fast single-key two-level universal hash function.
\newblock {\em {IACR} Trans. Symmetric Cryptol.}, 2017(1):106--128, 2017.

\bibitem{DBLP:journals/tc/ChakrabortyMRS13}
Debrup Chakraborty, Cuauhtemoc Mancillas{-}L{\'{o}}pez, Francisco
  Rodr{\'{\i}}guez{-}Henr{\'{\i}}quez, and Palash Sarkar.
\newblock Efficient hardware implementations of {BRW} polynomials and tweakable
  enciphering schemes.
\newblock {\em {IEEE} Trans. Computers}, 62(2):279--294, 2013.

\bibitem{DBLP:journals/tc/ChakrabortyMS15}
Debrup Chakraborty, Cuauhtemoc Mancillas{-}L{\'{o}}pez, and Palash Sarkar.
\newblock {STES:} {A} stream cipher based low cost scheme for securing stored
  data.
\newblock {\em {IEEE} Trans. Computers}, 64(9):2691--2707, 2015.

\bibitem{DBLP:conf/sp/DegabrieleGGP24}
Jean~Paul Degabriele, Jan Gilcher, J{\'{e}}r{\^{o}}me Govinden, and Kenneth~G.
  Paterson.
\newblock {SoK}: Efficient design and implementation of polynomial hash
  functions over prime fields.
\newblock In {\em {IEEE} Symposium on Security and Privacy, {SP} 2024, San
  Francisco, CA, USA, May 19-23, 2024}, pages 3128--3146. {IEEE}, 2024.

\bibitem{DBLP:journals/jcs/Boer93}
Bert den Boer.
\newblock A simple and key-economical unconditional authentication scheme.
\newblock {\em Journal of Computer Security}, 2:65--72, 1993.

\bibitem{DBLP:conf/asist/DinneenN21}
Jesse~David Dinneen and Ba~Xuan Nguyen.
\newblock How big are peoples' computer files? file size distributions among
  user-managed collections.
\newblock In {\em Information: Equity, Diversity, Inclusion, Justice, and
  Relevance - Proceedings of the 84th ASIS{\&}T Annual Meeting, {ASIST} 2021,
  Salt Lake City, UT, USA, October 30 - November 2, 2021}, volume~58 of {\em
  Proc. Assoc. Inf. Sci. Technol.}, pages 425--429. Wiley, 2021.

\bibitem{DBLP:journals/dcc/GhoshS19}
Sebati Ghosh and Palash Sarkar.
\newblock Evaluating {B}ernstein-{R}abin-{W}inograd polynomials.
\newblock {\em Designs, Codes, and Cryptography}, 87(2-3):527--546, 2019.

\bibitem{Goll-Gueron}
Martin Goll and Shay Gueron.
\newblock Vectorization of {Poly1305} message authentication code.
\newblock In {\em 2015 12th International Conference on Information Technology
  - New Generations}, pages 145--150. IEEE, April 2015.
\newblock 10.1109/ITNG.2015.28.

\bibitem{DBLP:conf/fse/KohnoVW04}
Tadayoshi Kohno, John Viega, and Doug Whiting.
\newblock {CWC:} {A} high-performance conventional authenticated encryption
  mode.
\newblock In Bimal~K. Roy and Willi Meier, editors, {\em Fast Software
  Encryption, 11th International Workshop, {FSE} 2004, Delhi, India, February
  5-7, 2004, Revised Papers}, volume 3017 of {\em Lecture Notes in Computer
  Science}, pages 408--426. Springer, 2004.

\bibitem{DBLP:journals/amco/0001022}
Kaushik Nath and Palash Sarkar.
\newblock Efficient arithmetic in (pseudo-){M}ersenne prime order fields.
\newblock {\em Advances in Mathematics of Communications}, 16(2):303--348,
  2022.

\bibitem{cryptoeprint:2025/1224}
Kaushik Nath and Palash Sarkar.
\newblock An update to ``{Polynomial Hashing over Prime Order Fields}''.
\newblock Cryptology {ePrint} Archive, Paper 2025/1224, 2025.

\bibitem{OGV13}
E.~Ozturk, J.~Guilford, and V.~Gopal.
\newblock Large integer squaring on {Intel} architecture processors, {\sf
  intel} white paper.
\newblock
  https://www.intel.com/content/dam/www/public/us/en/documents/white-papers/large-integer-squaring-ia-paper.pdf,
  2013.

\bibitem{OGVW12}
E.~Ozturk, J.~Guilford, V.~Gopal, and W.~Feghali.
\newblock New instructions supporting large integer arithmetic on {Intel}
  architecture processors, {\sf intel} white paper.
\newblock
  \url{https://www.intel.com/content/dam/www/public/us/en/documents/white-papers/ia-large-integer-arithmetic-paper.pdf},
  2012.

\bibitem{RW72}
Michael~O. Rabin and Shmuel Winograd.
\newblock Fast evaluation of polynomials by rational preparation.
\newblock {\em Communications on Pure and Applied Mathematics}, 25:433--458,
  1972.

\bibitem{Sa09}
Palash Sarkar.
\newblock Efficient tweakable enciphering schemes from (block-wise) universal
  hash functions.
\newblock {\em IEEE Transactions on Information Theory}, 55(10):4749--4759,
  2009.

\bibitem{DBLP:journals/dcc/Sarkar11}
Palash Sarkar.
\newblock A trade-off between collision probability and key size in universal
  hashing using polynomials.
\newblock {\em Des. Codes Cryptography}, 58(3):271--278, 2011.

\bibitem{DBLP:journals/dcc/Sarkar13}
Palash Sarkar.
\newblock A new multi-linear universal hash family.
\newblock {\em Des. Codes Cryptography}, 69(3):351--367, 2013.

\bibitem{DBLP:conf/crypto/1993}
Douglas~R. Stinson, editor.
\newblock {\em Advances in Cryptology - CRYPTO '93, 13th Annual International
  Cryptology Conference, Santa Barbara, California, USA, August 22-26, 1993,
  Proceedings}, volume 773 of {\em Lecture Notes in Computer Science}.
  Springer, 1994.

\bibitem{DBLP:conf/crypto/Taylor93}
Richard Taylor.
\newblock An integrity check value algorithm for stream ciphers.
\newblock In Stinson \cite{DBLP:conf/crypto/1993}, pages 40--48.

\end{thebibliography}

\appendix

\section{Algorithm for Computing BRW Polynomials \label{sec-BRW-algo}}
Algorithm~\ref{algo-B} describes the algorithm given in~\cite{BNS2025} to compute $\sym{BRW}(\tau;M_1,\ldots,M_l)$, where $l$ is a non-negative integer.
The call $\sym{unreducedBRW}$ in the algorithm performs the following computation. There are two calls to $\sym{unreducedBRW}$ in Algorithm~\ref{algo-B},
in Steps~\ref{B9} and~\ref{B19}. The call in Step~\ref{B9} is on exactly $2^t-1$ blocks, while the call in Step~\ref{B19} is on at most $2^t-1$ blocks.
Since $t$ is a fixed value, both of these calls to $\sym{unreducedBRW}$ are implemented using straight line codes. 
\begin{tabbing}
\ \ \ \ \= $\bullet$ \= \ \ \ \ \=\ \ \ \ \= \ \ \ \ \kill
 \> $\bullet$ \> $\sym{unreducedBRW}(\tau;) = 0$; \\
 \> $\bullet$ \> $\sym{unreducedBRW}(\tau;M_1) = M_1$; \\
 \> $\bullet$ \> $\sym{unreducedBRW}(\tau;M_1,M_2) = \sym{unreducedMult}(M_1,\tau)+M_2$; \\
 \> $\bullet$ \> $\sym{unreducedBRW}(\tau;M_1,M_2,M_3) = \sym{unreducedMult}((\tau + M_1),(\tau^2 + M_2)) + M_3$; \\
 \> $\bullet$ \> $\sym{unreducedBRW}(\tau;M_1,M_2, \ldots, M_{k})$ \\
 \>           \> \> $= \sym{unreducedMult}(\sym{reduce}(\sym{unreducedBRW}(\tau;M_1,\ldots,M_{k-1})),(\tau^k + M_k))$, \\
 \> \> \> if  $k \in \{4,8,16,32,\ldots\}$; \\
 \> $\bullet$ \> $\sym{unreducedBRW}(\tau;M_1,M_2, \ldots, M_{l})$ \\
 \>           \> \> $= \sym{unreducedBRW}(\tau;M_1,\ldots,M_{k})+\sym{unreducedBRW}(\tau;M_{k+1},\ldots,M_{l})$, \\
 \> \> \> if  $k \in \{4,8,16,32,\ldots\}$ and $k < l < 2k$.
\end{tabbing}

\begin{algorithm}
\caption{Evaluation of $\sym{BRW}(\tau;M_1,\ldots,M_l)$, $l\geq 0$. In the algorithm $t\geq 2$ is a parameter. \label{algo-B}}
\begin{algorithmic}[1]
\Function{\sym{ComputeBRW}}{$\tau,M_1,\ldots,M_l$} \label{B1}
  \State ${\sf keyPow}[0]\leftarrow \tau$ \label{B2}
  \If {$l>2$} \label{B3}
    \For {$j\leftarrow 1$ to $\lfloor\lg l\rfloor$}  \label{B4}
      \State ${\sf keyPow}[j]\leftarrow {\sf keyPow}[j-1]^2$  \label{B5}
    \EndFor \label{B6}
  \EndIf \label{B7}
  \State $\sym{top}\leftarrow -1$ \label{B7a}
  \For {$i\leftarrow 1$ to $\lfloor l/2^t\rfloor$} \label{B8}
	\State ${\sf tmp} \leftarrow \sym{unreducedBRW}(\tau;M_{2^t(i-1)+1},\ldots,M_{2^t\cdot i-1})$;  \label{B9}
	\State $k\leftarrow \sym{ntz}(i)$ \label{B10}
	\For {$j\leftarrow 0$ to $k-1$} \label{B11}
		\State ${\sf tmp}\leftarrow {\sf tmp}+\sym{stack}[\sym{top}]$; $\sym{top}\leftarrow \sym{top}-1$ \label{B12}
	\EndFor \label{B14}
	\State $\sym{tmp}\leftarrow \sym{unreducedMult}(\sym{reduce}(\sym{tmp}), M_{2^t\cdot i} + \sym{keyPow}[t+k])$  \label{B15}
	\State $\sym{top}\leftarrow \sym{top}+1$; $\sym{stack}[\sym{top}]\leftarrow \sym{tmp}$ \label{B16}
  \EndFor; \label{B17}
  \State $r\leftarrow l\bmod 2^t$;  \label{B18}
  \State ${\sf tmp}\leftarrow \sym{unreducedBRW}(\tau;M_{l-r+1},\ldots,M_l)$;  \label{B19}
  \State $i \leftarrow \sym{wt}(\lfloor l/2^t\rfloor)$ \label{B20}
	\For {$j\leftarrow 0$ to $i-1$} \label{B21}
	\State $\sym{tmp}\leftarrow \sym{tmp}+\sym{stack}[\sym{top}]$; $\sym{top}\leftarrow \sym{top}-1$  \label{B22}
  \EndFor \label{B23}
  \State return ${\sf reduce}({\sf tmp})$; \label{B24}
\EndFunction. \label{B25}
\end{algorithmic}
\end{algorithm}

\end{document}